\newtheorem{theorem}{Theorem}
\newtheorem{lemma}[theorem]{Lemma}
\newtheorem{corollary}[theorem]{Corollary}
\newtheorem{fact}[theorem]{Fact}
\theoremstyle{definition}
\theoremstyle{remark}
\theoremstyle{remark}
\numberwithin{theorem}{section}
\providecommand{\R}{}
\providecommand{\Z}{}
\providecommand{\N}{}
\renewcommand{\R}{\mathbb{R}}
\renewcommand{\Z}{\mathbb{Z}}
\renewcommand{\N}{{\mathbb N}}
\newcommand{\E}[1]{{\mathbb E}\left[#1\right]}
\newcommand{\p}[1]{\mathbb{P}\left(#1\right)}
\newcommand{\I}[1]{{\mathbf 1}_{[#1]}}
\newcommand{\Cexp}[2]{\mathbb{E}\left[ \left. #1 \; \right| \; #2 \right]}
\newcommand{\abcdDist}{\ensuremath{\mathcal{A}}}
\newcommand{\abcdFour}{\ensuremath{\mathcal{A}^{(4)}}}
\newcommand{\abcdDistFull}{\ensuremath{\mathcal{A}(n,\gamma,\delta,\zeta,\beta,s,\tau,\xi)}}
\newcommand{\abcdLower}{\ensuremath{\mathcal{A}^-(z)}}
\newcommand{\abcdUpper}{\ensuremath{\mathcal{A}^+(z)}}
\newcommand{\round}[1]{\ensuremath{\left\lfloor #1 \right\rceil}}
\newcommand{\tpl}[3]{\ensuremath{\mathcal{P}\left(#1,#2,#3\right)}}
\newcommand{\moment}[2]{\ensuremath{\mu_{#1}(#2)}}
\begin{document}

\title{Self-similarity of Communities of the ABCD Model}

\author{
Jordan Barrett\thanks{Department of Mathematics, Toronto Metropolitan University, Toronto, ON, Canada; e-mail: \texttt{jordan.barrett@torontomu.ca}}, 
Bogumi\l{} Kami\'{n}ski\thanks{Decision Analysis and Support Unit, SGH Warsaw School of Economics, Warsaw, Poland; e-mail: \texttt{bkamins@sgh.waw.pl}}, 
Pawe\l{} Pra\l{}at\thanks{Department of Mathematics, Toronto Metropolitan University, Toronto, ON, Canada; e-mail: \texttt{pralat@torontomu.ca}}, 
Fran\c{c}ois Th\'{e}berge\thanks{Tutte Institute for Mathematics and Computing, Ottawa, ON, Canada; email: \texttt{theberge@ieee.org}}
}

\maketitle

\begin{abstract}
The \textbf{A}rtificial \textbf{B}enchmark for \textbf{C}ommunity \textbf{D}etection (\textbf{ABCD}) graph is a random graph model with community structure and power-law distribution for both degrees and community sizes. The model generates graphs similar to the well-known \textbf{LFR} model but it is faster and can be investigated analytically.

In this paper, we show that the \textbf{ABCD} model exhibits some interesting self-similar behaviour, namely, the degree distribution of ground-truth communities is asymptotically the same as the degree distribution of the whole graph (appropriately normalized based on their sizes). As a result, we can not only estimate the number of edges induced by each community but also the number of self-loops and multi-edges generated during the process. Understanding these quantities is important as (a) rewiring self-loops and multi-edges to keep the graph simple is an expensive part of the algorithm, and (b) every rewiring causes the underlying configuration models to deviate slightly from uniform simple graphs on their corresponding degree sequences.
\end{abstract}

\section{Introduction}\label{sec:intro} 

One of the most important features of real-world networks is their community structure, as it reveals the internal organization of nodes~\cite{fortunato2010community}. In social networks communities may represent groups by interest, in citation networks they correspond to related papers, in the Web graph communities are formed by pages on related topics, etc. Identifying communities in a network is therefore valuable as this information helps us to better understand the network structure.

Unfortunately, there are very few datasets with ground-truth communities identified and labelled. As a result, there is need for synthetic random graph models with community structure that resemble real-world networks to benchmark and tune clustering algorithms that are unsupervised by nature. The \textbf{LFR} (\textbf{L}ancichinetti, \textbf{F}ortunato, \textbf{R}adicchi) model~\cite{lancichinetti2008benchmark,lancichinetti2009benchmarks} is a highly popular model that generates networks with communities and, at the same time, allows for heterogeneity in the distributions of both node degrees and of community sizes. It became a standard and extensively used method for generating artificial networks. 

A similar synthetic network to \textbf{LFR}, the \textbf{A}rtificial \textbf{B}enchmark for \textbf{C}ommunity \textbf{D}etection (\textbf{ABCD})~\cite{kaminski2021artificial} was recently introduced and implemented\footnote{\url{https://github.com/bkamins/ABCDGraphGenerator.jl/}}, including a fast implementation\footnote{\url{https://github.com/tolcz/ABCDeGraphGenerator.jl/}} that uses multiple threads (\textbf{ABCDe})~\cite{kaminski2022abcde}. Undirected variants of \textbf{LFR} and \textbf{ABCD} produce graphs with comparable properties but \textbf{ABCD}/\textbf{ABCDe} is faster than \textbf{LFR} and can be easily tuned to allow the user to make a smooth transition between the two extremes: pure (disjoint) communities and random graph with no community structure. Moreover, it is easier to analyze theoretically---for example, in~\cite{kaminski2022modularity} various theoretical asymptotic properties of the \textbf{ABCD} model are investigated including the modularity function that, despite some known issues such as the ``resolution limit'' reported in~\cite{fortunato2007resolution}, is an important graph property of networks in the context of community detection. Finally, the building blocks in the model are flexible and may be adjusted to satisfy different needs. Indeed, the original \textbf{ABCD} model was recently adjusted to include potential outliers (\textbf{ABCD+o})~\cite{kaminski2023artificial} and extended to hypergraphs (\textbf{h-ABCD})~\cite{kaminski2023hypergraph}\footnote{\url{https://github.com/bkamins/ABCDHypergraphGenerator.jl}}. In the context of this paper, the most important of the above properties is that the \textbf{ABCD} model allows for theoretical investigation of its properties.

\bigskip

Another important aspect of complex networks is self-similarity and scale invariance which are well-known properties of certain geometric objects such as fractals~\cite{mandelbrot1982fractal}. Scale invariance in the context of complex networks is traditionally restricted to the scale-free property of the distribution of node degrees~\cite{barabasi1999emergence} but also applies to the distributions of community sizes~\cite{guimera2003self,clauset2004finding}, degree-degree distances~\cite{zhou2020power}, and network density~\cite{blagus2012self}. Unfortunately, the definition of ``scale free'' has never reached a single agreement~\cite{broido2019scale,holme2019rare} but many experiments provide a statistical significance of these claims such as the experiment on 32 real-world networks that have a wide coverage of economic, biological, informational, social, and technological domains, with their sizes ranging from hundreds to tens of millions of nodes~\cite{zhou2020power}.

In search for more complete self-similar descriptions, methods related to the fractal dimension are considered that use box counting methods and renormalization~\cite{song2005self,gallos2007review,kim2007fractality}. However, the main issue is that complex networks are still not well defined in a proper geometric sense but one may, for example, introduce the concept of hidden metric spaces to overcome this problem~\cite{serrano2008self}. 

For the context of community structure of complex networks, let us highlight one interesting study of the network of e-mails within a real organization that revealed the emergence of self-similar properties of communities~\cite{guimera2003self}. Such experiments suggest that there is some universal mechanism that controls the formation and dynamics of complex networks. 

\bigskip

In this paper, we show that the \textbf{ABCD} model exhibits self-similar behaviour: each ground-truth community inherits power-law degree distribution from the distribution of the entire graph (see Theorem~\ref{thm:main}), that is, the power-law exponent as well as the minimum degree of this distribution are preserved. On the other hand, as in all self-similarities mentioned above, some renormalization needs to be applied. In our case, the distribution is truncated so that the maximum degree, corrected by the noise parameter $\xi$ (see Section~\ref{subsec:the model} for its formal definition), does not exceed the community size.

The above observation, interesting and desired on its own, has some immediate implications that are of interest too. Firstly, we can easily compute the expected volume of each community (see Corollary~\ref{cor:volumes}). Secondly, and more importantly, we can investigate how many self-loops and multi-edges are constructed during the generation process of \textbf{ABCD} (see Theorem~\ref{thm:loops and multiedges}). Understanding this quantity is crucial for two reasons. Firstly, removing these self-loops and multi-edges to obtain a simple graph is a time consuming part of the construction algorithm. Secondly, as the \textbf{ABCD} construction involves several implementations of the well known configuration model, the number of self-loops and multi-edges is directly correlated to how ``skewed'' the final graph is, i.e., more self-loops and multi-edges lead to distributions that are further away from being uniform. We speak about this second reason in more detail in Section~\ref{subsec:abcd construction}.

\bigskip

The paper is structured as follows. In Section~\ref{subsec:the model}, we formally define the \textbf{ABCD} model and state two results that we will need later on: one about the \textbf{ABCD} model and one about the configuration model which is an important ingredient of the \textbf{ABCD} construction process. The main results are presented in Section~\ref{sec:main_result}. Then, in Section~\ref{sec:simulations}, we present results of simulations that highlight properties that are proved in this paper and show their practical implications. Next, the main result (Theorem~\ref{thm:main}) is proved in Section~\ref{sec:coupling}, and its applications (Corollary~\ref{cor:volumes} and Theorem~\ref{thm:loops and multiedges}) are proved in Section~\ref{sec:loops and multiedges}. Finally, a conclusion and some open problems are presented in Section~\ref{sec:conclusions}.

\section{The ABCD Model}\label{subsec:the model}

In this section we introduce the \textbf{ABCD} model. Its full definition, along with more detailed explanations of its parameters and features, can be found in \cite{kaminski2021artificial}. We restate the main components of the \textbf{ABCD} model here to ensure completeness of the exposition in this article.

\subsection{Notation} 

For a given $n \in \N := \{1, 2, \ldots \}$, we use $[n]$ to denote the set consisting of the first $n$ natural numbers, that is, $[n] := \{1, 2, \ldots, n\}$. 

Our results are asymptotic by nature, that is, we will assume that $n \to \infty$. For a sequence of events $(E_n,n \in \N)$, we say $E_n$ holds \emph{with high probability} (\emph{w.h.p.}) if 
\[
\p{E_n} \to 1 \text{ as } n \to \infty \,.
\] 
We say that $E_n$ holds \emph{with extreme probability} (\emph{w.e.p.}) if 
\[
\p{E_n} = 1 - \exp(-\Omega(\log^2 n)) \,.
\] 
In particular, if there are polynomially many events and each holds w.e.p., then w.e.p.\ all of them hold simultaneously.

For random variables $X$ and $Y$ that take on values in $\R$, we say that $X$ is \textit{stochastically bounded from above} by $Y$, and that $Y$ is \textit{stochastically bounded from below} by $X$, if for all $z \in \R$ we have 
\[
\p{X \geq z} \leq \p{Y \geq z} \,.
\]

Power-law distributions will be used to generate both the degree sequence and community sizes so let us formally define it. For given parameters $\gamma \in (0, \infty)$, $\delta, \Delta \in \N$ with $\delta \leq \Delta$, we define a truncated power-law distribution $\tpl{\gamma}{\delta}{\Delta}$ as follows. For $X \sim \tpl{\gamma}{\delta}{\Delta}$ and for $k \in \N$ with $\delta \leq k \leq \Delta$,
\[
\p{X = k} = \frac{\int_k^{k+1} x^{-\gamma} \, dx}{\int_{\delta}^{\Delta+1} x^{-\gamma} \, dx} \,.
\]

Finally, as typical in the field of random graphs, for expressions that clearly have to be an integer, we round up or down but do not specify which: the choice of which does not affect the argument. 

\subsection{The Configuration Model}\label{subsec:config}

The well-known configuration model is an important ingredient of the \textbf{ABCD} generation process so let us formally define it here. Suppose then that our goal is to create a graph on $n$ nodes with a given degree distribution $\textbf{d} := (d_i, i \in [n])$, where $\textbf{d}$ is a sequence of non-negative integers such that $m := \sum_{i \in [n]} d_i$ is even. We define a random multi-graph $\mathrm{CM}(\textbf{d})$ with a given degree sequence known as the \textbf{configuration model} (sometimes called the \textbf{pairing model}), which was first introduced by Bollob\'as~\cite{bollobas1980probabilistic}. (See~\cite{bender1978asymptotic,wormald1984generating,wormald1999models} for related models and results.)


We start by labelling nodes as $[n]$ and, for each $i \in [n]$, endowing node $i$ with $d_i$ half-edges. We then iteratively choose two unpaired half-edges uniformly at random (from the set of pairs of remaining half-edges) and pair them together to form an edge. We iterate until all half-edges have been paired. This process yields $G_n \sim \mathrm{CM}(\textbf{d})$, where $G_n$ is allowed loops and multi-edges and thus $G_n$ is a multi-graph.

\subsection{Parameters of the ABCD Model}

The \textbf{ABCD} model is governed by the following eight parameters.
\[
\begin{array}{|l|l|l|}
\hline
\text{Parameter} & \text{Range} & \text{Description}\\
\hline
n & \N & \text{Number of nodes}\\
\hline
\gamma & (2,3) & \text{Power-law degree distribution with exponent } \gamma\\
\delta & \N & \text{Min degree as least } \delta \\
\zeta & \left( 0, \frac{1}{\gamma-1} \right] & \text{Max degree at most } n^\zeta \\
\hline
\beta & (1,2) & \text{Power-law community size distribution with exponent } \beta\\
s & \N \setminus [\delta] & \text{Min community size at least } s \\
\tau & (\zeta,1) & \text{Max community size at most } n^\tau\\
\hline
\xi & (0,1) & \text{Level of noise}\\
\hline
\end{array}
\]

\subsection{The ABCD Construction}\label{subsec:abcd construction}

We will use $\abcdDist = \abcdDistFull$ for the distribution of graphs generated by the following 5-phase construction process. 

\subsubsection*{Phase 1: creating the degree distribution}
In theory, the degree distribution for an \textbf{ABCD} graph can be any distribution that satisfies (a) a power-law with parameter $\gamma$, (b) a minimum value of at least $\delta$, and (c) a maximum value of at most $n^{\zeta}$. In practice, however, degrees are i.i.d.\ samples from the distribution $\tpl{\gamma}{\delta}{n^\zeta}$. 

For $G_n \sim \abcdDist$, write $\textbf{d}_n = (d_i, i \in [n])$ for the chosen degree sequence of $G_n$ with $d_1 \geq \dots \geq d_n$. Finally, to ensure that $\sum_{i \in [n]} d_i$ is even, we decrease $d_1$ by 1 if necessary; we relabel as needed to ensure that $d_1 \geq d_2 \geq \dots \geq d_n$. This potential change has a negligible effect on the properties we investigate in this paper and we thus only present computations for the case when $d_1$ is unaltered. 

\subsubsection*{Phase 2: creating the communities}
We next assign communities to the \textbf{ABCD} model. Similar to the degree distribution, the distribution of community sizes must satisfy (a) a power-law with parameter $\beta$, (b) a minimum value of $s$, and (c) a maximum value of $n^{\tau}$. In addition, we also require that the sum of community sizes is exactly $n$. Again, we use a more rigid distribution in practice: communities are generated with sizes determined independently by the distribution $\tpl{\beta}{s}{n^\tau}$. We generate communities until their collective size is at least $n$. If the sum of community sizes at this moment is $n + k$ with $k > 0$ then we perform one of two actions: if the last added community has size at least $k+s$, then we reduce its size by $k$. Otherwise (that is, if its size is $c < k+s$), then we delete this community, select $c$ old communities and increase their sizes by 1. This again has a negligible effect on the analysis and we thus only present computations for the case when community sizes are unaltered.

For $G_n \sim \abcdDist$, write $L$ for the (random) number of communities in $G_n$ and write $\textbf{C}_n = (C_j,j \in [L])$ for the chosen collection of communities in $G_n$ with $|C_1| \geq \dots \geq |C_L|$ (again, let us stress the fact that $\textbf{C}_n$ is a random vector of random length $L$).

\subsubsection*{Phase 3: assigning degrees to nodes}
At this point in the construction of $G_n \sim \abcdDist$ we have a degree sequence $\textbf{d}_n$ and a collection of communities $\textbf{C}_n$. Initially, each community $C_j$ contains $|C_j|$ \textit{unassigned} nodes, i.e., nodes that have not been assigned a label or a degree. We then iteratively assign labels and degrees to nodes as follows. Starting with $i=1$, let $U_i$ be the collection of unassigned nodes at step~$i$. At step $i$ choose a node uniformly at random from the set of nodes $u$ in $U_i$ that satisfy 
\[
d_i \leq \frac{|C(u)|-1}{1-\xi \phi} \,,
\]
where $C(u)$ is the community containing $u$ and
\[
\phi = 1 - \frac{1}{n^2} \sum_{j \in [L]} | C_j |^2 \,,
\]
and assign this node label $i$ and degree $d_i$; we have that $U_{i+1} = U_i \setminus \{u\}$. (We bound the degrees assignable to node $u$ in community $C$ to ensure that there are enough nodes in $C \setminus\{u\}$ for $u$ to pair with, preventing guaranteed loops or guaranteed multi-edges during phase 4 of the construction. We expect a $1-\xi \phi$ fraction of the half-edges attached to $u$ to end up in community $C$, hence the choice for the bound. See~\cite{kaminski2022modularity} or~\cite{kaminski2021artificial} for a more detailed explanation.)

\subsubsection*{Phase 4: creating edges}
At this point $G_n$ contains $n$ nodes labelled as $[n]$, partitioned by the communities $\textbf{C}_n$, with node $i \in [n]$ containing $d_i$ unpaired half-edges. The last step is to form the edges in $G_n$. Firstly, for each $i \in [n]$ we split the $d_i$ half-edges of $i$ into two distinct groups which we call \textit{community} half-edges and \textit{background} half-edges. For $a \in \Z$ and $b \in [0,1)$ define the random variable $\round{a+b}$ as
\[
\round{a+b} = \bigg\{
\begin{array}{ll}
a & \text{ with probability } 1-b, \text{ and}\\
a+1 & \text{ with probability } b \,.
\end{array}
\]
Now define $Y_i := \round{(1-\xi) d_i}$ and $Z_i := d_i - Y_i$ (note that $Y_i$ and $Z_i$ are random variables with $\E{Y_i} = (1-\xi) d_i$ and $\E{Z_i} = \xi d_i$) and, for all $i \in [n]$, split the $d_i$ half-edges of $i$ into $Y_i$ community half-edges and $Z_i$ background half-edges. Next, for all $j \in [L]$, construct the \textit{community graph} $G_{n,j}$ as per the configuration model on node set $C_j$ and degree sequence $(Y_i, i \in C_j)$. Finally, construct the \textit{background graph} $G_{n,0}$ as per the configuration model on node set $[n]$ and degree sequence $(Z_i,i \in [n])$. In the event that the sum of degrees in a community is odd, we pick a maximum degree node $i$ in said community and replace $Y_i$ with $Y_i + 1$ and $Z_i$ with $Z_i - 1$. Once again, this minor adjustment has a negligible effect on the analysis and we thus assume that none of these sums are odd. Note that $G_{n,j}$ is a graph and $C_j$ is the set of nodes in this graph; we refer to $C_j$ as a \textit{community} and $G_{n,j}$ as a \textit{community graph}. Note also that $G_n = \bigcup_{0 \leq j \leq n} G_{n,j}$.

\subsubsection*{Phase 5: rewiring collisions}
Note that, although we are calling $G_{n,0},G_{n,1},\dots,G_{n,L}$ \textit{graphs}, they are in fact \textit{multi-graphs} at the end of phase~4. To ensure that $G_n$ is simple, we perform a series of \textit{rewirings} in $G_n$. A rewiring takes two edges as input, splits them into four half-edges, and creates two new edges distinct from the input. We first rewire each community graph $G_{n,j}$ independently as follows.
\begin{enumerate}
\item For each edge $e \in E(G_{n,j})$ that is either a loop or contributes to a multi-edge, we add $e$ to a \textit{recycle} list that is assigned to $G_{n,j}$.
\item We shuffle the \textit{recycle} list and, for each edge $e$ in the list, we choose another edge $e'$ uniformly from $E(G_{n,j}) \setminus \{e\}$ and attempt to rewire these two edges. We save the result only if the rewiring does not lead to any further collisions, otherwise we give up. In either case, we then move to the next edge in the \textit{recycle} list.
\item After we attempt to rewire every edge in the \textit{recycle} list, we check to see if the new \textit{recycle} list is smaller. If yes, we repeat step 2 with the new list. If no, we give up and move all of the ``bad'' edges from the community graph to the background graph. 
\end{enumerate}
We then rewire the background graph $G_{n,0}$ in the same way as the community graphs, with the slight variation that we also add edge $e$ to \textit{recycle} if $e$ forms a multi-edge with an edge in a community graph or, as mentioned previously, if $e$ was moved to the background graph as a result of giving up during the rewiring phase of its community graph. At the end of phase~5, we have a simple graph $G_n \sim \abcdDist$.

\bigskip

Note that phase~5 of the \textbf{ABCD} construction process exists only to ensure that $G_n$ is simple. Thus, if one were satisfied with a multi-graph $G_n$ that had all of the properties $\abcdDist$ offers, one could simply terminate the process after phase~4. However, for most practical uses such as community detection, we require a simple graph and thus require phase~5. As mentioned in Section~\ref{sec:intro}, phase~5 is a time consuming part of the algorithm. Theorem~\ref{thm:loops and multiedges} gives us some insight as to why that is the case, namely, because with high probability the number of self-loops and multi-edges generated during phase~4 is at least $\Omega(L)$. Theorem~\ref{thm:loops and multiedges} is therefore quite valuable as it lets us know when our choice of $\gamma,\beta,\zeta$ and $\tau$ will yield a best-case-scenario number of self-loops and multi-edges (in expectation).

Theorem~\ref{thm:loops and multiedges} is also valuable for helping us understand how ``skewed'' the community graphs, along with the background graph, are with respect to graphs generated uniformly at random from the set of simple graphs on the respective degree sequences. In \cite{janson2020random}, Janson shows that if a graph is constructed as the configuration model on degree sequence $\mathbf{d}$, followed by a series of switchings, then a relatively small number of switchings yields a distribution that is asymptotically equal (with respect to the total variation distance) to the uniform distribution on simple graphs with degree sequence $\mathbf{d}$. By extrapolating this result, we can infer that the number of switchings required in phase~5 of the \textbf{ABCD} construction process is directly correlated with how ``skewed'' the resulting graph is.

\subsection{Known Results for \textbf{ABCD} and Configuration Models}\label{sec:old_results}

A result from~\cite{kaminski2022modularity} that we use often in this paper is a tight bound on the number of communities generated by the \textbf{ABCD} model.

\begin{theorem}[\cite{kaminski2022modularity} Corollary~5.5 (a)]\label{thm:previous_paper}
Let $G_n \sim \abcdDist$ and let $L$ be the number of communities in $G_n$. Then w.e.p.\ 
the number of communities, $L$, is equal to 
\[
L = L(n) = \left( 1 + O\left( (\log n)^{-1} \right) \right) \, \hat{c} n^{1-\tau(2-\beta)} \,,
\]
where
\[
\hat{c} = \frac{2-\beta}{(\beta-1)s^{\beta-1}} \,.
\]
\end{theorem}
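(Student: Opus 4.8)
The plan is to analyze the community-size generation procedure (Phase 2) directly, treating it as a renewal-type process: we draw i.i.d.\ samples $c_1, c_2, \ldots$ from $\tpl{\beta}{s}{n^\tau}$ and stop at the first index $L$ for which $c_1 + \cdots + c_L \geq n$. The quantity $L$ is thus a stopping time, and the heuristic $L \approx n / \e[c_1]$ (Wald-type reasoning) is what we must make precise with an error term of the claimed order. The first step is to compute $\e[c_1]$ for $c_1 \sim \tpl{\beta}{s}{n^\tau}$ asymptotically: since $\beta \in (1,2)$, the mean is dominated by the upper truncation, and a direct computation of $\sum_{k=s}^{n^\tau} k \cdot \int_k^{k+1} x^{-\beta}\,dx \big/ \int_s^{n^\tau+1} x^{-\beta}\,dx$ gives $\e[c_1] = (1+o(1)) \cdot \frac{\beta-1}{2-\beta} s^{\beta-1} n^{\tau(2-\beta)}$, whose reciprocal is exactly $\hat c \, n^{-\tau(2-\beta)}$. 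So the target $L = (1+o(1)) \hat c \, n^{1-\tau(2-\beta)}$ is just $n / \e[c_1]$ up to the renormalization, and the content of the theorem is the concentration statement with a $(\log n)^{-1}$-scale error holding w.e.p.

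The main work is the concentration. I would proceed in two directions. For the upper bound on $L$: set $\ell^+ := (1 + \epsilon_n)\hat c \, n^{1-\tau(2-\beta)}$ with $\epsilon_n$ of order $(\log n)^{-1}$; then $\{L > \ell^+\}$ is the event that $S_{\ell^+} := c_1 + \cdots + c_{\ell^+} < n$, i.e.\ a sum of $\ell^+$ i.i.d.\ bounded (by $n^\tau$) random variables falls below its mean $(1+\epsilon_n)n$ by an additive $\epsilon_n n$. Since each summand is bounded by $n^\tau$ and $\tau < 1$, a Hoeffding/Bernstein bound gives deviation probability $\exp\!\big(-\Omega(\epsilon_n^2 n^2 / (\ell^+ n^{2\tau}))\big) = \exp\!\big(-\Omega(\epsilon_n^2 n^{1+\tau(2-\beta)-2\tau})\big)$; one checks the exponent is $n^{\Omega(1)} / \log^2 n$, which is $\Omega(\log^2 n)$ and hence w.e.p. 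For the lower bound on $L$: symmetrically, with $\ell^- := (1-\epsilon_n)\hat c \, n^{1-\tau(2-\beta)}$, the event $\{L \leq \ell^-\}$ forces $S_{\ell^-} \geq n$, a deviation above the mean $(1-\epsilon_n)n$ by $\epsilon_n n$, and the same concentration inequality applies. Combining the two, w.e.p.\ $\ell^- \leq L \leq \ell^+$, which is the statement.

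The step I expect to be the main obstacle is handling the truncation and the rounding corrections honestly. The variance of $c_1$ is governed by $\e[c_1^2] \asymp n^{\tau(3-\beta)}$ (again upper-tail dominated since $3-\beta \in (1,2)$), so a Bernstein bound rather than crude Hoeffding may give the cleanest exponent, and one must verify that the variance term does not degrade the w.e.p.\ rate — this is a routine but delicate bookkeeping exercise with the exponents $1$, $\tau$, $\beta$. Additionally, the actual Phase 2 procedure makes a final adjustment (reducing the last community or redistributing $+1$'s across $c$ communities) to force the sizes to sum to exactly $n$; one must argue this changes $L$ by at most $1$ (or by at most $O(1)$), which is absorbed into the $O((\log n)^{-1})$ relative error since $L \to \infty$ polynomially. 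Finally, I would double-check that $\tau(2-\beta) < 1$ (true since $\tau < 1$ and $2 - \beta < 1$) so that $L \to \infty$, legitimizing the $1+o(1)$ normalization.
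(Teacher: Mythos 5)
The paper does not actually prove this statement: it is imported verbatim from \cite{kaminski2022modularity} (Corollary~5.5(a)) and used as a black box, so there is no internal proof to compare against. Judging your proposal on its own terms: the overall strategy is the right one. Viewing $L$ as the first index at which the i.i.d.\ partial sums $S_\ell = c_1+\dots+c_\ell$, $c_i \sim \tpl{\beta}{s}{n^\tau}$, reach $n$, computing $\mathbb{E}[c_1] = \bigl(1+O(n^{-c})\bigr)\frac{(\beta-1)s^{\beta-1}}{2-\beta}\,n^{\tau(2-\beta)}$ (the error here is polynomially small, not just $o(1)$, which you need and which does hold), and sandwiching via $\{L > \ell^+\} = \{S_{\ell^+} < n\}$ and $\{L \le \ell^-\} \subseteq \{S_{\ell^-} \ge n\}$ is clean and avoids any stopping-time subtleties; the end-of-phase adjustment changing $L$ by at most one is also handled correctly.

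The genuine problem is the concentration step as you primarily wrote it. Hoeffding with range $n^\tau$ gives exponent of order $\epsilon_n^2 n^2/(\ell\, n^{2\tau}) = \epsilon_n^2 n^{\,1-\tau\beta}$, and $1-\tau\beta$ is negative on a large part of the admissible parameter range (e.g.\ $\beta = 1.9$, $\tau = 0.9$ gives $\tau\beta = 1.71$; these are exactly the values used in the paper's simulations), so your claim that this exponent is $n^{\Omega(1)}/\log^2 n$ is false there and Hoeffding is not salvageable. The Bernstein bound you mention only as a possible refinement is in fact the necessary argument, and it does close the gap: with $\mathbb{E}[c_1^2] = \Theta\bigl(n^{\tau(3-\beta)}\bigr)$ one gets $\ell \cdot \mathrm{Var}(c_1) = O\bigl(n^{1+\tau}\bigr)$ and $n^\tau \cdot \epsilon_n n = O\bigl(n^{1+\tau}\bigr)$, hence a deviation exponent $\Omega\bigl(\epsilon_n^2 n^{1-\tau}\bigr) = \Omega\bigl(n^{1-\tau}/\log^2 n\bigr) = \Omega(\log^2 n)$ because $\tau < 1$, which is exactly the w.e.p.\ rate required. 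One also needs the small bookkeeping point that $\mathbb{E}[S_{\ell^\pm}] = (1\pm\epsilon_n)\,n\,\bigl(1+O(n^{-c})\bigr)$ with $n^{-c} \ll \epsilon_n = (\log n)^{-1}$, so the relevant deviation really is of order $\epsilon_n n$. With Bernstein substituted for Hoeffding and that check made explicit, your outline becomes a complete and correct proof of the cited result.
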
 
Likewise, a result of the configuration model that we use many times in the proof of Theorem~\ref{thm:loops and multiedges} is the following. 
\begin{theorem}\label{thm:S and M bound}
Let $G_n$ be a graph constructed via the configuration model with degree sequence $\mathbf{d} = (d_i, i \in [z])$, and let $S$ and $M$ be the number of loops and multi-edges in $G_n$. Then
\begin{align}
\E{S} 
&= 
\frac{\sum_{i \in [z]} q_i(q_i-1)}{2\left( \sum_{i \in [z]} q_i - 1 \right)} 
\leq
\frac{1}{2} \frac{\sum_{i \in [z]} q_i^2}{\sum_{i \in [z]} q_i - 1}   \,, \text{ and} \label{eq:S bound} \\
\E{M} 
&\leq
\frac{\sum_{1 \leq i < j \leq z} q_i(q_i-1)q_j(q_j-1)}{2\left( \sum_{i \in [z]} q_i - 1 \right)\left( \sum_{i \in [z]} q_i - 3 \right)} 
\leq
\frac{1}{2}\frac{\sum_{1 \leq i < j \leq z} q_i^2q_j^2}{\left( \sum_{i \in [z]} q_i - 3 \right)^2} \,. \label{eq:M bound} 
\end{align}
\end{theorem}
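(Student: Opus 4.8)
\textbf{Proof proposal for Theorem~\ref{thm:S and M bound}.}

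The plan is to compute $\E{S}$ exactly by a first-moment argument over potential loops, and to bound $\E{M}$ similarly over ordered pairs of potential parallel edges, in both cases exploiting the fact that in the configuration model the probability that a prescribed set of half-edge pairs is realized has a clean closed form. Write $m := \sum_{i \in [z]} q_i$ (assumed even) for the number of half-edges, so the total number of edges is $m/2$. For the pairing process, the key combinatorial fact is: for any fixed partial matching consisting of $t$ disjoint pairs of half-edges, the probability that $\mathrm{CM}(\mathbf{d})$ contains all $t$ of these pairs as edges equals $\prod_{r=0}^{t-1} \frac{1}{m - 1 - 2r}$, since after having already committed $r$ of the pairs there are $m - 2r$ half-edges left and a given half-edge is matched to each of the other $m - 2r - 1$ with equal probability. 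I would state this as a preliminary observation (citing the configuration model definition in Section~\ref{subsec:config}).

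For $\E{S}$: a loop at node $i$ is formed by pairing two of the $q_i$ half-edges of $i$ together; there are $\binom{q_i}{2} = q_i(q_i-1)/2$ such half-edge pairs, each realized with probability $1/(m-1)$ by the observation above with $t=1$. Summing over $i$ and using linearity of expectation gives $\E{S} = \frac{1}{m-1} \sum_{i \in [z]} \binom{q_i}{2} = \frac{\sum_{i} q_i(q_i-1)}{2(m-1)}$, which is exactly the claimed equality in~\eqref{eq:S bound}; the inequality $\sum_i q_i(q_i-1) \le \sum_i q_i^2$ is trivial. For $\E{M}$: I count ordered pairs of distinct edges between the same pair of nodes $i < j$ — more precisely, I bound $M$ (the number of multi-edges, i.e.\ pairs $\{i,j\}$ joined by at least two edges) by the number of ways to pick two edges $e, e'$ each joining $i$ to $j$. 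Such a configuration is specified by choosing an ordered pair of half-edges at $i$ and an ordered pair at $j$, i.e.\ $q_i(q_i-1) \cdot q_j(q_j-1)$ choices, divided by $2$ to account for the unordered pair of edges $\{e,e'\}$; each such $2$-pair matching is realized with probability $\frac{1}{(m-1)(m-3)}$ by the observation with $t=2$. Summing over $1 \le i < j \le z$ and applying linearity (noting that if $\{i,j\}$ supports $k \ge 2$ parallel edges it is counted $\binom{k}{2} \ge 1$ times, so this is a genuine upper bound on $M$) yields $\E{M} \le \frac{\sum_{i<j} q_i(q_i-1)q_j(q_j-1)}{2(m-1)(m-3)}$, and then bounding each $q_i(q_i-1) \le q_i^2$ in the numerator and $(m-1)(m-3) \ge (m-3)^2$ in the denominator gives the second inequality in~\eqref{eq:M bound}.

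The only genuinely delicate point — and the one I would be most careful about — is the accounting in the $\E{M}$ bound: making sure that (i) the factor of $1/2$ correctly compensates for ordering the two parallel edges, (ii) configurations where the two ``parallel'' edges share a half-edge are excluded (they cannot, since the two chosen half-edges at $i$ are distinct by the $q_i(q_i-1)$ count, likewise at $j$), and (iii) the probability of jointly realizing two disjoint prescribed edges is indeed $\frac{1}{(m-1)(m-3)}$ rather than something involving $m-2$. Everything else is routine linearity-of-expectation bookkeeping; no concentration or asymptotic input is needed since the statement is an exact identity plus elementary inequalities, valid for every fixed degree sequence.
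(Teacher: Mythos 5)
Your proof is correct. Note, however, that the paper does not prove Theorem~\ref{thm:S and M bound} at all: it simply cites equations (7.3.21) and (7.3.26) of van der Hofstad's book \cite{van2016random} for the equality in \eqref{eq:S bound} and the first inequality in \eqref{eq:M bound}, and the remaining inequalities are the trivial bounds $q_i(q_i-1)\le q_i^2$ and $(m-1)(m-3)\ge(m-3)^2$ that you also use. Your self-contained argument is essentially the standard textbook computation being cited: linearity of expectation over the $\binom{q_i}{2}$ potential loop pairings at each node, each realized with probability $1/(m-1)$, giving the exact identity for $\E{S}$; and over the $q_i(q_i-1)q_j(q_j-1)/2$ unordered $2$-matchings forming a double edge between $i$ and $j$, each realized with probability $1/\bigl((m-1)(m-3)\bigr)$, which dominates $M$ since any pair $\{i,j\}$ carrying $k\ge 2$ parallel edges contributes $\binom{k}{2}\ge 1$ such matchings. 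Your three points of care (the factor $1/2$, disjointness of the chosen half-edges, and the $(m-1)(m-3)$ rather than $(m-1)(m-2)$ denominator) are all handled correctly. What your route buys is self-containedness; what the paper's route buys is brevity, since the statement is a known result about the configuration model rather than a contribution of the paper. The only cosmetic caveats: the theorem statement writes the degree sequence as $(d_i, i\in[z])$ but the formulas use $q_i$ (a mismatch in the paper, not in your argument), and the bounds implicitly assume $\sum_i q_i \ge 4$ so that the denominators are positive.
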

See Chapter~7 in \cite{van2016random} for a detailed study on the number of loops and multi-edges in a configuration model. In particular, the equality in (\ref{eq:S bound}) and the first inequality in (\ref{eq:M bound}) come from respective equations (7.3.21) and (7.3.26) in \cite{van2016random}.

\section{Main Result}\label{sec:main_result}

Our main result is a stochastic bound on the degree sequence of a given community in $\abcdDist$. For $G_n \sim \abcdDist$ with degree sequence $\textbf{d}_n$, and for community graph $G_{n,j}$ with nodes from $C_j$, we make the following distinction: the \textit{degree sequence of $G_{n,j}$} is the degree sequence of the community graph $G_{n,j}$, whereas the \textit{degree sequence of $C_{j}$} is the subset of $\textbf{d}_n$ containing the degrees of nodes in $C_j$. Hence, the degree sequence of $C_j$ is $(d_v, v \in C_j)$ and the degree sequence of $G_{n,j}$ is $(Y_v, v \in C_j)$ where recall that $Y_v = \round{(1-\xi)d_v}$. 

\begin{theorem}\label{thm:main}
Let $G_n \sim \abcdDist$. Let $C_j$ be a community in $G_n$ with $|C_j| = z$ and let $\textbf{c}_j$ be the degree sequence of community $C_j$. Next, let $\epsilon = \epsilon(n) = n^{-(\tau-\zeta)(2-\beta)/2} = o(1)$, let 
\[
\Delta_z = \min\left\{ \frac{z-1}{1-\xi \phi}, n^\zeta \right\} \,,
\] 
and let $X^-$ and $X^+$ be random variables with the following probability distribution functions on $\{\delta,\dots,\Delta_z\}$. 
\begin{align*}
\p{X^- = k} 
&=
\frac{\int_k^{k+1} x^{-\gamma} \, dx}{\int_\delta^{\Delta_z+1} x^{-\gamma} \, dx} \,, \text{ and } \\
\p{X^+ = k} 
&=
\frac{\left( 1 - \epsilon \I{k = \delta} \right)\int_k^{k+1} x^{-\gamma} \, dx}{(1-\epsilon)\int_{\delta}^{\delta+1} x^{-\gamma} \, dx + \int_{\delta+1}^{\Delta_z+1} x^{-\gamma} \, dx} = (1+o(1)) \p{X^- = k} \,. 
\end{align*}
Finally, let $(X_i^-, 1\leq i \leq z)$ and $(X_i^+,1\leq i \leq z)$ be i.i.d.\ sequences with $X_i^- \sim X^-$ and $X_i^+ \sim X^+$. Then w.h.p.\ $\textbf{c}_j$ is stochastically bounded below by $(X_i^-, 1\leq i \leq z)$ and above by $(X_i^+,1\leq i \leq z)$.
\end{theorem}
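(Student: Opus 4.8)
The first thing to understand is where the degree sequence $\textbf{c}_j$ comes from in the construction. In Phase~1 the degrees $d_1 \geq \dots \geq d_n$ are i.i.d.\ samples from $\tpl{\gamma}{\delta}{n^\zeta}$, and in Phase~3 nodes are assigned to communities one at a time, where at step $i$ node $i$ (with degree $d_i$) is placed uniformly at random into some community that still has room for it, the room condition being $d_i \leq (|C(u)|-1)/(1-\xi\phi)$. So conditioning on $|C_j| = z$, the degree sequence of $C_j$ is obtained by a size-$z$ sample (without replacement, and with a somewhat complicated dependence coming from the sequential assignment) from the multiset $\{d_1,\dots,d_n\}$, restricted to those $d_i$ that are at most $\Delta_z = \min\{(z-1)/(1-\xi\phi), n^\zeta\}$. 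The target distributions $X^-$ and $X^+$ are, up to the $(1\pm\epsilon)$ correction at $k=\delta$, exactly $\tpl{\gamma}{\delta}{\Delta_z}$, i.e.\ the original power law \emph{re-truncated} at $\Delta_z$. So morally the claim is: sampling $z$ degrees for community $C_j$ looks like drawing $z$ i.i.d.\ samples from the law of $d_1$ conditioned on $d_1 \leq \Delta_z$.

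The plan is to make this precise in three steps. First, I would establish a w.e.p.\ concentration statement for the empirical degree distribution: since the $d_i$ are i.i.d.\ from $\tpl{\gamma}{\delta}{n^\zeta}$, for every threshold $t \le n^\zeta$ the number of indices with $d_i = k$ (and with $d_i \geq k$) is $(1+O((\log n)^{-1}))$ times its expectation w.e.p., by a Chernoff/Azuma bound; this uses $\gamma \in (2,3)$ and $\zeta \leq 1/(\gamma-1)$ to control the number of high-degree vertices. In particular, among all $n$ nodes, the fraction with degree $\leq \Delta_z$ and the conditional distribution of a uniformly chosen such degree is $(1+o(1))\,\tpl{\gamma}{\delta}{\Delta_z}$. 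Second, I would analyze the sequential assignment of Phase~3. The key point is that a node of degree $d$ is \emph{eligible} for community $C_j$ precisely when $d \leq \Delta_z$ (the $n^\zeta$ cap is automatic, the binding constraint is $(z-1)/(1-\xi\phi)$), and among eligible communities the node picks uniformly. I would argue that, because community $C_j$ is just one of $L = \Theta(n^{1-\tau(2-\beta)})$ communities and $z \leq n^\tau$, the "competition'' for low-degree nodes versus the plentiful supply of them means $C_j$ fills up with a sample that is close to a uniform-without-replacement sample from the eligible-degree multiset; the $\epsilon = n^{-(\tau-\zeta)(2-\beta)/2}$ slack is exactly the error budget needed to absorb the bias introduced by the fact that when $C_j$ is nearly full it can only accept nodes of degree $\le \Delta_z$ while larger communities are still accepting larger-degree nodes (this slightly \emph{over}-represents $\delta$ in $\textbf{c}_j$, which is why $X^+$ puts slightly \emph{less} mass on $\delta$: $\textbf{c}_j$ is stochastically above $X^+$, below $X^-$). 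Third, I would pass from "sampling without replacement from a near-ideal multiset'' to "i.i.d.\ sampling'': since $z \le n^\tau$ and the multiset has $\Theta(n)$ elements, removing $z$ of them perturbs the remaining empirical law by $O(n^{\tau-1}) = o(1)$, so a without-replacement sample of size $z$ is stochastically sandwiched between i.i.d.\ samples from slightly perturbed laws, and these perturbations fold into the $(1+o(1))$ already present in the definition of $X^+$.

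Concretely, to get the two-sided stochastic domination I would build an explicit coupling. Order the nodes assigned to $C_j$ in the order they are assigned; at the $i$-th such assignment, the degree is drawn from the current conditional law of eligible degrees, which by the above is within $(1\pm o(1))$ multiplicatively of $\tpl{\gamma}{\delta}{\Delta_z}$ at every atom except that the atom at $\delta$ may be inflated by up to a $1/(1-\epsilon)$ factor. A standard monotone coupling (couple each draw with an $X_i^-$ from below and an $X_i^+$ from above using the same uniform random variable and the fact that all three laws are supported on the same set $\{\delta,\dots,\Delta_z\}$ with comparable tails) then yields $X_i^- \le (\textbf{c}_j)_i \le X_i^+$ for every $i$ simultaneously, which is exactly stochastic domination of the sequences. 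I expect the main obstacle to be the second step: controlling the dependence in the sequential Phase~3 assignment, specifically showing that the order in which communities fill up does not skew $\textbf{c}_j$'s degree distribution by more than the allotted $\epsilon$. The cleanest way I can see to handle this is to not track individual communities at all but to bound, w.e.p., the total number of nodes of each degree class $k$ that are "blocked'' from $C_j$ (i.e.\ would have been eligible for a larger community but $C_j$ had no room), show this is $o(z \cdot \p{X^- = k})$ for $k > \delta$ and correspondingly controlled at $k = \delta$, and thereby sandwich the conditional law at each step; the counting here is where the exponents $\tau > \zeta$ and $\beta \in (1,2)$ enter, via Theorem~\ref{thm:previous_paper} for $L$ and the concentration of the degree sequence.
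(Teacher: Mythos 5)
There is a genuine gap, and it starts with the direction of the bias. In the statement, $\textbf{c}_j$ is bounded \emph{below} by the exact truncated law $(X_i^-)$ and \emph{above} by $(X_i^+)$, which puts \emph{less} mass on $\delta$ and is therefore stochastically larger; all of the $\epsilon$ slack is one-sided, sitting only at the atom $\delta$ and only in the upper bound, while the lower bound has no slack at all. The mechanism is the opposite of what you describe: while degrees larger than $\Delta_z$ are being assigned, communities smaller than $C_j$ are still locked, and when they unlock they absorb low-degree (typically degree-$\delta$) nodes, so low degrees are, if anything, \emph{under}-available to $C_j$ relative to $\tpl{\gamma}{\delta}{\Delta_z}$. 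Your parenthetical claims that $\delta$ is over-represented and that ``$\textbf{c}_j$ is stochastically above $X^+$, below $X^-$'', which is both the reverse of the theorem and internally impossible (since $X^+$ dominates $X^-$). Relatedly, the eligibility constraint $d_i \leq (z-1)/(1-\xi\phi)$ for $C_j$ holds at every step and does not depend on how ``full'' $C_j$ is, so the ``nearly full'' accounting you propose is not the right object; what matters is which communities are still locked as the non-increasing degree sequence is processed.

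Because of this, the core of your plan (steps 2--3) cannot deliver the statement as written: showing that the per-step conditional law of a degree received by $C_j$ is within $(1\pm o(1))$ of $\tpl{\gamma}{\delta}{\Delta_z}$ at every atom, and then ``folding'' the without-replacement perturbation of order $n^{\tau-1}$ into the target laws, yields at best a sandwich between two perturbed laws, whereas the theorem requires domination from below by \emph{exactly} $\tpl{\gamma}{\delta}{\Delta_z}$; the distributions of $X^-$ and $X^+$ are pinned down and are not an error budget you may enlarge. The paper avoids per-step analysis of the true process entirely. It introduces two modified Phase-3 processes: $\abcdLower$, which unlocks \emph{all} communities the moment the size-$z$ communities unlock, so that by deferred decisions and Fact~\ref{fact:tpl condition} the degree sequence of $C_j$ is exactly an i.i.d.\ $\tpl{\gamma}{\delta}{\Delta_z}$ sample (Lemma~\ref{lem:deg dist lower}); and $\abcdUpper$, which keeps the still-locked nodes (only $n' = o(\epsilon n)$ of them, by Lemma~\ref{lem:open communities}) locked until the end, so that a Chernoff bound on the number of degree-$\delta$ nodes gives stochastic domination by $(X_i^+)$ (Lemma~\ref{lem:deg dist upper}). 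It then couples each auxiliary process with the true one via an explicit label-redirection scheme (the sets $V_j^{++},V_j^{+-},V_j^{-+},V_j^{--}$ and the redirection probability $p_j$), verifying both that the coupled process has the correct law and that the community degree sequences are ordered pointwise. If you want to keep your route, you would have to prove genuinely one-sided per-step bounds on the conditional law in the true process, and doing so essentially forces you to introduce comparison processes of this kind; the blocked-node counting you sketch, oriented as you have it, would set those inequalities up in the wrong direction.
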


The proof of Theorem~\ref{thm:main} is provided in Section~\ref{sec:coupling}. The power of this theorem is that it allows us to compare the structure of community graphs in $G_n \sim \abcdDist$ with the structure of graphs constructed via the configuration model on an i.i.d.\ degree sequence that is well understood. In this paper we provide two uses of this new and powerful tool. The first is a sharpening of Lemma~5.6 in~\cite{kaminski2022modularity}, describing the volumes of communities in $G_n \sim \abcdDist$. For $X \sim \tpl{\gamma}{\delta}{\Delta}$, write  
\begin{equation}\label{eq:first moment}
\moment{\ell}{\gamma,\delta,\Delta} = \E{X^\ell} \,,
\end{equation}
and note in particular that $\moment{1}{\gamma,\delta,n^\zeta}$ is the expected degree of a node in $G_n \sim \abcdDist$. 

\begin{corollary}\label{cor:volumes}
Let $G_n \sim \abcdDist$, let $C_j$ be a community in $G_n$ with $|C_j| = z = z(n)$, and let 
\[
\Delta_z = \min \left\{ \frac{z-1}{1-\xi \phi} , n^\zeta \right\} \,.
\]
Then w.h.p.\
\[
\frac{\E{\mathrm{vol}(C_j)}}{z} = (1+o(1)) \, \moment{1}{\gamma,\delta,\Delta_z}  = 
\begin{cases} 
\left(1+o(1) \right) \moment{1}{\gamma,\delta,n^\zeta} & \text{ if } z(n) \to \infty \,, \text{ and} \\
\Theta\left( \moment{1}{\gamma,\delta,n^\zeta} \right) & \text{ otherwise.}
\end{cases}
\]
\end{corollary}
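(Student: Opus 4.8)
The plan is to write $\mathrm{vol}(C_j)=\sum_{v\in C_j}d_v$, so that $\mathrm{vol}(C_j)$ is the sum of the entries of the degree sequence $\mathbf{c}_j$ of $C_j$, and then push the stochastic sandwich of Theorem~\ref{thm:main} through this monotone functional. Concretely, I would use the coupling from Section~\ref{sec:coupling} underlying Theorem~\ref{thm:main} to place $\mathbf{c}_j$, $(X_i^-,1\le i\le z)$ and $(X_i^+,1\le i\le z)$ on a common probability space so that, on an event $\cE$ holding w.h.p., the sandwich holds coordinate-wise; summing then gives $\sum_{i=1}^z X_i^-\le \mathrm{vol}(C_j)\le \sum_{i=1}^z X_i^+$ on $\cE$. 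This reduces the computation of $\E{\mathrm{vol}(C_j)}$ to the two i.i.d.\ sums, up to the contribution of the bad event $\cE^c$.

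To evaluate the extremes: since $X^-\sim\tpl{\gamma}{\delta}{\Delta_z}$ by definition, $\E{\sum_{i=1}^z X_i^-}=z\,\moment{1}{\gamma,\delta,\Delta_z}$; and since $\p{X^+=k}=(1+o(1))\p{X^-=k}$ uniformly over $k$ (the correction sitting only on the atom $k=\delta$), also $\E{\sum_{i=1}^z X_i^+}=(1+o(1))\,z\,\moment{1}{\gamma,\delta,\Delta_z}$. Combined with the previous step, this gives $\E{\mathrm{vol}(C_j)}=(1+o(1))\,z\,\moment{1}{\gamma,\delta,\Delta_z}$, the first equality in the statement.

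For the case distinction I would invoke the elementary fact that, because $\gamma\in(2,3)$, the untruncated first moment converges: $\moment{1}{\gamma,\delta,\Delta}$ increases to a finite limit $\moment{1}{\gamma,\delta,\infty}$ as $\Delta\to\infty$ and lies in $[\delta,\moment{1}{\gamma,\delta,\infty}]$ for every finite $\Delta$. If $z(n)\to\infty$ then $\Delta_z\to\infty$, so $\moment{1}{\gamma,\delta,\Delta_z}=(1+o(1))\moment{1}{\gamma,\delta,\infty}=(1+o(1))\moment{1}{\gamma,\delta,n^\zeta}$; otherwise both $\moment{1}{\gamma,\delta,\Delta_z}$ and $\moment{1}{\gamma,\delta,n^\zeta}$ lie in a fixed compact subinterval of $(0,\infty)$, which yields the $\Theta(\cdot)$ statement.

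The step I expect to be the main obstacle is the passage from ``w.h.p.'' to a genuine bound on $\E{\mathrm{vol}(C_j)}$: on $\cE^c$ one only has the trivial bound $\mathrm{vol}(C_j)\le z\,n^\zeta\le n^{1+\zeta}$, and a crude $\p{\cE^c}=o(1)$ does not suppress this polynomial factor against the target $\Theta(z)$. I would address this by noting that the coupling behind Theorem~\ref{thm:main} should in fact fail only w.e.p., i.e.\ with probability $\exp(-\Omega(\log^2 n))$, which beats any polynomial; alternatively, one conditions on the degree sequence and community collection being typical and splits the degrees at a threshold such as $\log^2 n$, controlling the heavy tail via $\gamma>2$ when $z$ is large and falling back on the trivial bound only for bounded $z$. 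This is a routine tail estimate, but it is the one place where some care is genuinely required.
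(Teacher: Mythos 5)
Your proposal is correct and follows essentially the same route as the paper: the stochastic sandwich of Theorem~\ref{thm:main} is summed to squeeze $\E{\mathrm{vol}(C_j)}/z$ between $\moment{1}{\gamma,\delta,\Delta_z}$ and $(1+o(1))\,\moment{1}{\gamma,\delta,\Delta_z}$, and the case distinction follows because the truncated first moment converges as the truncation grows (the paper quantifies this as $\moment{1}{\gamma,\delta,n^\zeta}-\moment{1}{\gamma,\delta,\Delta_z}=O(\Delta_z^{2-\gamma})$, which is the same fact you invoke via $\gamma>2$). Your extra worry about the bad event is not needed under the intended reading---the w.h.p.\ qualifier attaches to the conditional expectation given the event of Theorem~\ref{thm:main}, which is exactly how the paper argues---and note that your fallback claim that the coupling fails only w.e.p.\ is not established in the paper (Lemma~\ref{lem:open communities} rests on Markov's inequality), though nothing in the corollary requires it.
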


The second use of Theorem~\ref{thm:main} that we present here is an analysis of the number of loops and multi-edges that are created during phase 4 of the construction process of $G_n \sim \abcdDist$. In practice, phase~5 of the \textbf{ABCD} construction can be computationally expensive. It is therefore valuable to study the number of collisions (loops and multi-edges) generated during phase~4 of the construction. The following theorem tells us that, although w.h.p.\ we can never do better than generating $\Omega(L)$ collisions, where $L$ is the number of communities, we expect to see \textit{at most} $O(L)$ collisions under certain restrictions on $\gamma,\beta,\zeta$, and $\tau$.

\begin{theorem}\label{thm:loops and multiedges}
Let $G_n \sim \abcdDist$ and define the following five variables depending on $G_n$. 
\begin{align*}
S_c &:= \text{ The number of self-loops in community graphs after phase~4.}\\
M_c &:= \text{ The number of multi-edge pairs in community graphs after phase~4.}\\
S_b &:= \text{ The number of self-loops in the background graph after phase~4.}\\
M_b &:= \text{ The number of multi-edge pairs in the background graph after phase~4.}\\
M_{bc} &:= \text{ The number of background edges that are also community edges after phase~4.}
\end{align*}
Then w.h.p.\
\begin{enumerate}
\item $\E{S_c} = O\Big((n^{1-\tau(2-\beta)})(1+n^{\zeta(4-\gamma-\beta))}\Big)$,
\item $\E{M_c} = O\Big((n^{1-\tau(2-\beta)})(1+n^{\zeta(7-2\gamma-\beta))}\Big)$,
\item $\E{S_b} = O(n^{\zeta(3-\gamma)})$,
\item $\E{M_b} = O(n^{\zeta(6-2\gamma)})$, and
\item $\E{M_{bc}} = o(\E{M_c})$.
\end{enumerate}
Moreover, for all valid $\gamma,\beta,\zeta,\tau$, w.h.p.\
\[
\E{S_c} = \Omega(L) \,,
\]
if $\gamma+\beta > 4$ then w.h.p.\
\[
\E{S_c + M_c + M_{bc}} = \Theta(L) \,,
\]
if $2\zeta(3-\gamma) + \tau(2-\beta) \leq 1$ then w.h.p.\
\[
\E{S_b + M_b} = O(L) \,,
\]
and if both inequalities are satisfied then w.h.p.\
\[
\E{S_c + M_c + S_b + M_b + M_{bc}} = \Theta(L) \,.
\]
\end{theorem}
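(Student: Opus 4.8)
\textbf{Proof proposal for Theorem~\ref{thm:loops and multiedges}.}

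The plan is to reduce everything to the configuration-model estimates in Theorem~\ref{thm:S and M bound}, using Theorem~\ref{thm:main} to replace the genuine community degree sequences by i.i.d.\ samples from the truncated power law $\tpl{\gamma}{\delta}{\Delta_z}$ (scaled by $1-\xi$ to pass from $d_v$ to $Y_v=\round{(1-\xi)d_v}$), and Theorem~\ref{thm:previous_paper} to control the number $L$ of communities and the distribution of their sizes. First I would handle the community graphs. For a fixed community $C_j$ of size $z$, apply \eqref{eq:S bound} and \eqref{eq:M bound} with $q_i = Y_i$ for $i\in C_j$; since w.h.p.\ $\textbf{c}_j$ is stochastically sandwiched between i.i.d.\ sequences $X^\pm$ (Theorem~\ref{thm:main}), the relevant sums $\sum Y_i$, $\sum Y_i^2$, $\sum_{i<j}Y_i^2 Y_j^2$ concentrate around $z\,\moment{1}{\gamma,\delta,\Delta_z}$, $z\,\moment{2}{\gamma,\delta,\Delta_z}$, and $\tbinom{z}{2}\moment{2}{\gamma,\delta,\Delta_z}^2$ respectively (up to the $(1-\xi)$-scaling and $O(1)$ corrections from rounding). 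Using $2<\gamma<3$, $\moment{1}{\gamma,\delta,\Delta_z}=\Theta(1)$ while $\moment{2}{\gamma,\delta,\Delta_z}=\Theta(\Delta_z^{3-\gamma})=\Theta(1+z^{3-\gamma})$ (since $\Delta_z=\Theta(\min\{z,n^\zeta\})=\Theta(z)$ when $z\le n^{\zeta}/(1-\xi\phi)$, capped at $n^\zeta$ otherwise). Plugging in gives, per community, $\E{S_c^{(j)}}=O(1+z^{3-\gamma})$ and $\E{M_c^{(j)}}=O(1+z^{2(3-\gamma)})=O(1+z^{6-2\gamma})$; the extra factor $n^{\zeta(4-\gamma-\beta)}$ etc.\ in the statement is exactly what appears after summing over $j$ with the community-size distribution. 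Concretely, $\sum_j z_j^{a}$ for a power law with exponent $\beta\in(1,2)$ on $[s,n^\tau]$ behaves like $L\cdot n^{\tau(a-1)}$ when $a>\beta-1$ (dominated by the largest community), which with $a=3-\gamma$ gives $\E{S_c}=O(L\cdot n^{\tau(2-\gamma)}) + O(L)$, and converting $L=\Theta(n^{1-\tau(2-\beta)})$ yields the claimed $O\big(n^{1-\tau(2-\beta)}(1+n^{\zeta(4-\gamma-\beta)})\big)$ after also bounding $n^\tau$ against the degree cap $n^\zeta$ (so $z^{3-\gamma}\le n^{\zeta(3-\gamma)}$, and one must check which of $n^{\tau(2-\gamma)}$ vs.\ $n^{\zeta(4-\gamma-\beta)}$ governs — here $\tau>\zeta$ matters). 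The background graph is easier: the degree sequence $(Z_i,i\in[n])$ has $Z_i=\round{\xi d_i}$ with $d_i$ i.i.d.\ from $\tpl{\gamma}{\delta}{n^\zeta}$, so $\sum Z_i=\Theta(n)$, $\sum Z_i^2=\Theta(n\cdot n^{\zeta(3-\gamma)})$ w.e.p., and \eqref{eq:S bound}, \eqref{eq:M bound} give directly $\E{S_b}=O(n^{\zeta(3-\gamma)})$ and $\E{M_b}=O(n^{\zeta(6-2\gamma)})$.

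For $M_{bc}$, the number of background edges coinciding with a community edge: a background edge inside $C_j$ between nodes $u,v$ coincides with a community edge with probability $O\big(Y_uY_v/\text{vol}(C_j)\big)$ given the community graph, and the expected number of background edges landing in $C_j$ is $O\big((\sum_{v\in C_j}Z_v)^2/\sum_{i\in[n]}Z_i\big)$; combining and summing over $j$ one gets $\E{M_{bc}}=O\big(\sum_j z_j^{?}\cdots\big)$ which, after the same power-law summation, is smaller than $\E{M_c}$ by a factor tending to $0$ (essentially because the background volume of a community is an $O(\xi)$ fraction while the relevant quadratic moments carry extra powers of $n^{\tau}$ vs.\ $n$; the key input is $\tau<1$). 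This establishes part~5.

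For the lower and threshold bounds, the $\E{S_c}=\Omega(L)$ bound is the crucial structural observation: by the exact formula in \eqref{eq:S bound}, $\E{S_c^{(j)}} = \frac{\sum_{i\in C_j}Y_i(Y_i-1)}{2(\sum_i Y_i -1)} \ge \frac{\sum_{i\in C_j}(Y_i-1)^2}{2\sum_i Y_i}$, and since every community has $\ge s\ge\delta+1$ nodes each of degree $\ge\delta\ge 1$ with $(1-\xi)$ fraction going to the community graph, w.h.p.\ a constant fraction of communities contain at least one node with $Y_i\ge 2$, or already $\sum_i Y_i(Y_i-1)=\Theta(\sum_i Y_i)$ holds; a short argument (using that $\moment{2}{\gamma,\delta,\Delta_z}\ge c\,\moment{1}{\gamma,\delta,\Delta_z}$ for a universal $c>0$ since $\gamma<3$ means the second moment is at least of the order of the first) gives $\E{S_c^{(j)}}=\Omega(1)$ for w.h.p.-all $j$, hence $\E{S_c}=\Omega(L)$. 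For the matching upper bound $\E{S_c+M_c+M_{bc}}=\Theta(L)$ under $\gamma+\beta>4$, observe that $\gamma+\beta>4$ makes the exponents $\zeta(4-\gamma-\beta)$ and $\zeta(7-2\gamma-\beta)$ negative (using $\gamma>2$ so $7-2\gamma-\beta<4-\gamma-\beta<0$), so the $O\big(n^{1-\tau(2-\beta)}(1+n^{\zeta(\cdots)})\big)$ bounds collapse to $O(n^{1-\tau(2-\beta)})=O(L)$; combined with part~5 and the lower bound this gives $\Theta(L)$. The condition $2\zeta(3-\gamma)+\tau(2-\beta)\le 1$ is precisely what makes $n^{\zeta(6-2\gamma)}=n^{2\zeta(3-\gamma)}\le n^{1-\tau(2-\beta)}=\Theta(L)$ (and a fortiori $n^{\zeta(3-\gamma)}=O(L)$), yielding $\E{S_b+M_b}=O(L)$; when both conditions hold, summing the four/five bounds and invoking $\E{S_c}=\Omega(L)$ gives $\Theta(L)$.

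The main obstacle I expect is making the passage from Theorem~\ref{thm:main} (a w.h.p.\ \emph{stochastic} sandwich of the degree sequence of \emph{one} community) to uniform control, \emph{simultaneously over all $\Theta(L)$ communities}, of the several quadratic and quartic moment sums that feed into Theorem~\ref{thm:S and M bound} — including the largest community, whose size $n^\tau$ forces $\Delta_z$ to be capped at $n^\zeta$ and where concentration of $\sum Y_i^2$ is most delicate. One must carefully track that Theorem~\ref{thm:main} is only w.h.p.\ (not w.e.p.), so a union bound over all communities is not automatic; the fix is presumably to prove w.e.p.\ concentration of the \emph{aggregate} sums $\sum_j \sum_{i\in C_j} Y_i^2$ etc.\ directly from the i.i.d.\ structure of $\textbf{d}_n$ and the community-size distribution (Theorem~\ref{thm:previous_paper} is w.e.p.), using Theorem~\ref{thm:main} only to certify that no community's degree sequence is pathological, and then handle the few largest communities individually. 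A secondary nuisance is bookkeeping the rounding operations ($\round{\cdot}$, the parity corrections in Phases~1, 2, 4), but as the paper repeatedly notes these contribute only lower-order terms.
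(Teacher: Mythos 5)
Your overall route is the same as the paper's: bound each community graph through Theorem~\ref{thm:S and M bound} with the i.i.d.\ sandwich from Theorem~\ref{thm:main}, bound the background graph with the full degree sequence capped at $n^\zeta$, compare $M_{bc}$ to $M_c$ pairwise via the ratio of community volume to total volume, get $\E{S_c}=\Omega(L)$ from constant-size communities containing a node of community degree at least $2$ with positive probability, and finish with exponent algebra (which you do correctly). Those parts are sound in spirit.

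The genuine gap is the aggregation over community sizes in claims 1 and 2 --- precisely the step that produces the exponents $\zeta(4-\gamma-\beta)$ and $\zeta(7-2\gamma-\beta)$. Your summation rule ``$\sum_j z_j^{a}\approx L\,n^{\tau(a-1)}$'' with $a=3-\gamma$ gives $\E{S_c}=O(L\,n^{\tau(2-\gamma)})+O(L)=O(L)$ since $\tau(2-\gamma)<0$; that is strictly stronger than the theorem and inconsistent with the regime $\gamma+\beta<4$ where $S_c/L$ grows (the correct moment of $\tpl{\beta}{s}{n^\tau}$ is $\E{z^a}=\Theta\left(1+n^{\tau(a+1-\beta)}\right)$, not $n^{\tau(a-1)}$), so the claimed bound does not ``yield'' from your formula. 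Your fallback of capping $z^{3-\gamma}$ by $n^{\zeta(3-\gamma)}$ also conflates $z$ with $\Delta_z$ and only yields $O(L\,n^{\zeta(3-\gamma)})$, weaker than claimed by a factor $n^{\zeta(\beta-1)}$. What is actually needed, and what the paper does, is to compute $\E{S_{c,j}}=\sum_z \Cexp{S_{c,j}}{|C_j|=z}\,\p{|C_j|=z}$ with the per-size bound $O(\Delta_z^{3-\gamma})$ and split the sum at the size $z^*$ where $\Delta_z$ saturates at $n^\zeta$: for $z\le z^*$ one integrates $z^{3-\gamma-\beta}$ and gets $O\left(1+n^{\zeta(4-\gamma-\beta)}\right)$, and for $z>z^*$ one gets $n^{\zeta(3-\gamma)}\cdot O\left((z^*)^{1-\beta}\right)=O\left(n^{\zeta(4-\gamma-\beta)}\right)$; only then does multiplying by $L=\Theta(n^{1-\tau(2-\beta)})$ give claims 1 and 2 (and the analogous computation gives claim 2 with $6-2\gamma$ in place of $3-\gamma$). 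Separately, your closing concern about needing w.e.p.\ concentration of $\sum_i Y_i^2$ and $\sum_{i<j}Y_i^2Y_j^2$ uniformly over all $\Theta(L)$ communities is a red herring: since all claims are bounds on expectations, no concentration is required --- the paper bounds the denominators in Theorem~\ref{thm:S and M bound} deterministically by $\sum_i q_i\ge \delta z$, takes expectations of the numerators using the i.i.d.\ upper bound from Theorem~\ref{thm:main}, and sums over communities by linearity of expectation, with Theorem~\ref{thm:previous_paper} entering only through the count $L$.
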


The proofs of Corollary~\ref{cor:volumes} and Theorem~\ref{thm:loops and multiedges}, as well as the surrounding discussions, are presented in Section~\ref{sec:loops and multiedges}.

\section{Simulation Corner}\label{sec:simulations}

Before moving to the proofs of Theorem~\ref{thm:main}, Corollary~\ref{cor:volumes}, and Theorem~\ref{thm:loops and multiedges}, let us present a few experiments highlighting the properties that are proved to hold with high probability. The experiments show that the asymptotic predictions are useful even for graphs on a moderately small number of nodes.

\subsection{The Coupling}

Our main result shows that the degree distribution of a community of size $z$ in $G_n \sim \abcdDist$ is stochastically sandwiched between $(X_i^-, i \in [z])$ and $(X_i^+, i \in [z])$ where $X_i^- \sim \tpl{\gamma}{\delta}{\Delta_z}$ and $X_i^+ \stackrel{d}{\to} X_i^-$ as $n \to \infty$. To compare the degree distribution of communities in \textbf{ABCD} graphs to the stochastic lower-bound $(X_i^-, i \in [z])$, we perform the following experiment. We generate three \textbf{ABCD} graphs $G_n, G_n^*$ and $G_n^{**}$. Consistent in all three graphs are the parameters $n = 2^{20}, \delta = 5, \zeta = 0.4, s = 50, \tau = 0.6$, and $\xi = 0.5$. The graph $G_n$ has unique parameters $\gamma = 2.1$ and $\beta = 1.1$, the graph $G_n^*$ has $\gamma = 2.5$ and $\beta = 1.5$, and $G_n^{**}$ has $\gamma = 2.9$ and $\beta = 1.9$. For each graph, we plot the complementary cumulative distribution function ($\mathrm{ccdf}$) of degrees of (a) the whole graph, (b) the of all smallest communities ($G_n$ had 8 communities of size $s=50$, $G_n^*$ had 29, and $G_n^{**}$ had 82), and (c) the unique largest community (sizes 4074, 4073, and 3903 in respective graphs $G_n,G_n^*$, and $G_n^{**}$). We then plot, in parallel, the expected $\mathrm{ccdf}$s for the three graphs; for the whole graph the $\mathrm{ccdf}$ is that of $\tpl{\gamma}{\delta}{n^\zeta}$, and for the community graphs we use the expected $\mathrm{ccdf}$ of the stochastic lower-bound $(X_i^-, i \in [z])$, i.e., the function $f:\{\delta,\dots,\Delta_z\} \to [0,1]$ where
\[
f(k) 
=
\frac{\int_k^{\Delta_z+1} x^{-\gamma} \, dx}{\int_\delta^{\Delta_z+1} x^{-\gamma} \, dx} 
=
\frac{k^{1-\gamma} - (\Delta_z+1)^{1-\gamma}}{\delta^{1-\gamma} - (\Delta_z+1)^{1-\gamma}} \,.
\]
The results are presented in Figure~\ref{fig:dist}. From these results, we see that the distribution of $(X_i^-, i \in [z])$ is a very good approximation of the distribution of degrees in a community of smallest size as well as a community of largest size. We note that, since $(X_i^-, i \in [z])$ is a lower-bound, we expect the theoretical $\mathrm{ccdf}$ to sit slightly above the empirical $\mathrm{ccdf}$, and this is confirmed by the experiment.

\begin{figure}
\begin{center}
    \includegraphics[scale=0.75]{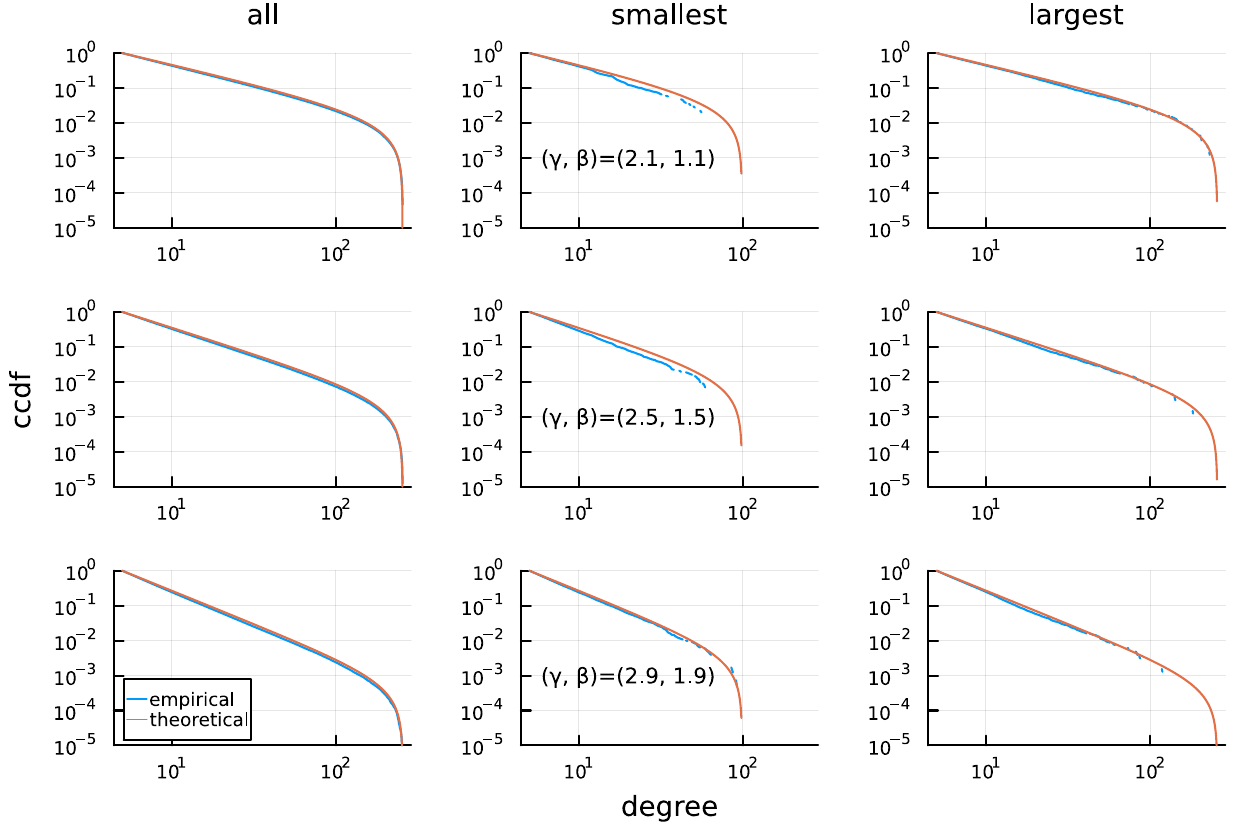}
\end{center}
\caption{The $\mathrm{ccdf}$ for the three different \textbf{ABCD} graphs $G_n$ (top), $G_n^{*}$ (middle), and $G_n^{**}$ (bottom), and for three different subsets of nodes in each graph, namely, the whole graph (left), the union of smallest community graphs (middle), and the unique largest community graph (right). Each function is drawn on a $\log$--$\log$ scale. The blue curves are the empirical data and the orange curves are the theoretical predictions. \label{fig:dist}}
\end{figure}

\subsection{Volumes of Communities}

Next, to investigate how well Corollary~\ref{cor:volumes} predicts the volume of a particular community, we perform the following experiment. We generate three \textbf{ABCD} graphs $G_n, G_n^*$ and $G_n^{**}$. Consistent in all three graphs are the parameters $n = 2^{20}$, $\delta = 5$, $\zeta = 0.6$, $s = 50$, $\tau = 0.9$, and $\xi = 0.5$. The graph $G_n$ has unique parameters $\gamma = 2.1$ and $\beta = 1.1$, the graph $G_n^*$ has $\gamma = 2.5$ and $\beta = 1.5$, and $G_n^{**}$ has $\gamma = 2.9$ and $\beta = 1.9$. In each graph, we sorted communities with respect to their size (from the smallest to the largest) and then grouped them into 10 buckets as equal as possible (that is, the number of communities in any pair of buckets differs by at most one). For each bucket we compute the average degree and the standard deviation over all communities in that bucket. We compare it with the asymptotic prediction based on Corollary~\ref{cor:volumes}, that is, for each community of size $z$ we compute $\moment{1}{\gamma,\delta,\Delta_z}$, and take the average over all communities in the bucket. The results are presented in Figure~\ref{fig:avgdeg2111}. We see that $n = 2^{20}$ is large enough and simulations match the theoretical predictions almost exactly.

\begin{figure}
\begin{center}
\includegraphics[scale=0.33]{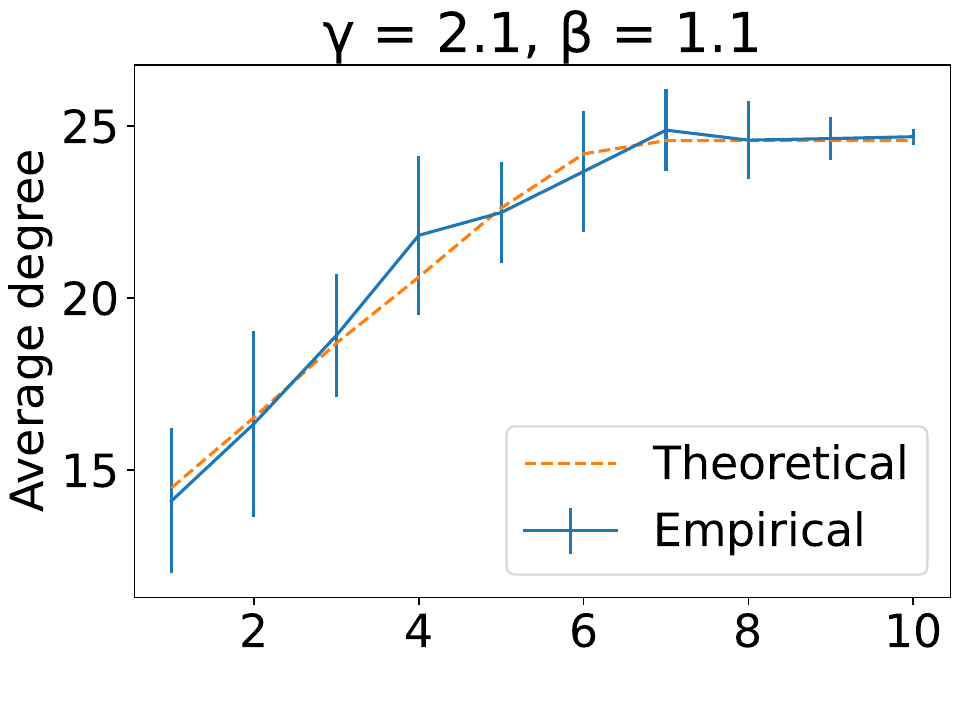}
\includegraphics[scale=0.33]{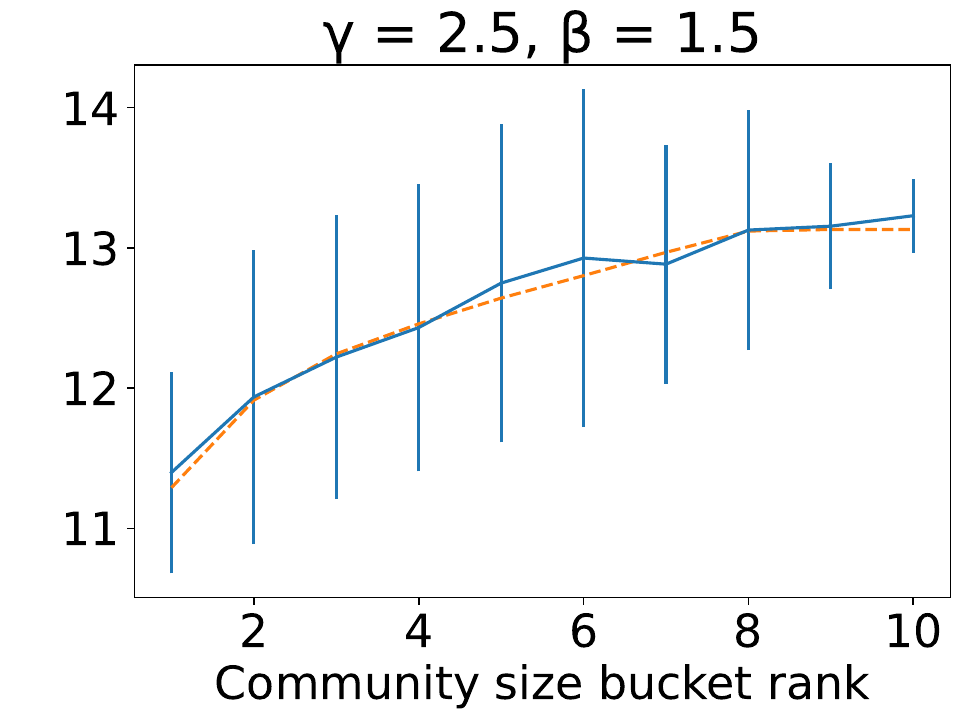}
\includegraphics[scale=0.33]{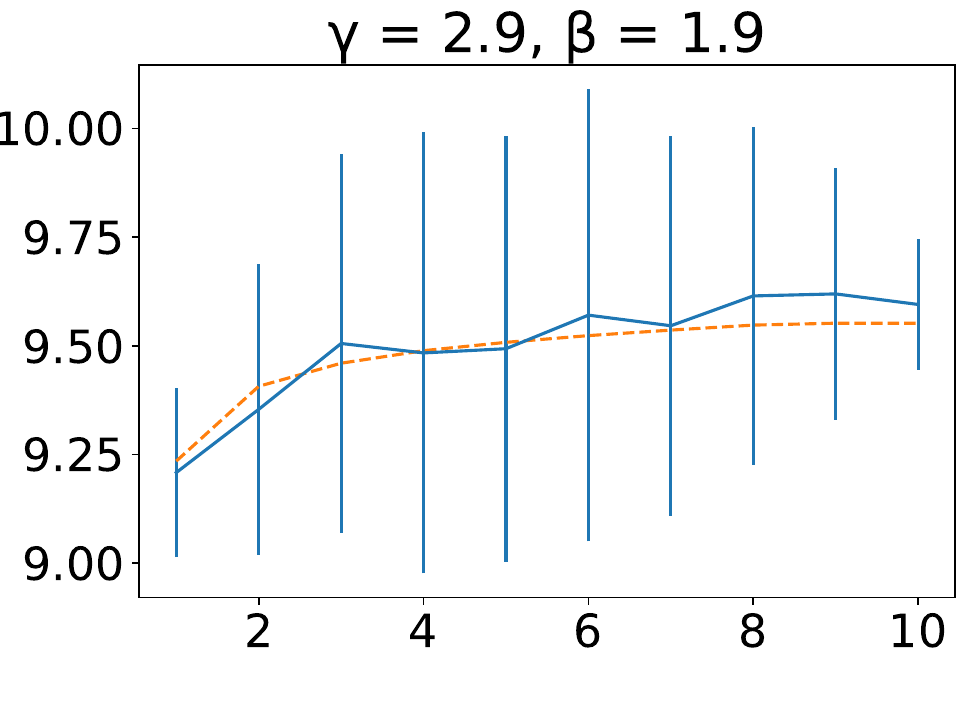}
\end{center}
\caption{The average degrees in communities for $G_n$ (top left), $G_n^*$ (top right), and $G_n^{**}$ (bottom). The communities are ranked by their size and grouped into 10 buckets as equal as possible. The blue line with error bars is the average degree and standard deviation among all communities in each bucket. Note that the errors, in absolute values, are largest for the leftmost plot and smallest for the rightmost plot. The orange dashed line shows the expected volumes for the stochastic lower-bound $(X_i^-,i \in [z])$, computed for each community size and bucketed in the same way as the empirical data. \label{fig:avgdeg2111}}
\end{figure}

\subsection{Loops and Multi-edges}
Finally, to investigate the number of collisions (of various types) generated during phase 4 of the \textbf{ABCD} construction as functions of $n$, we perform the following experiment. For each $n \in  \{2^{15},2^{16},2^{17},2^{18},2^{19},2^{20}\}$, we generate three sequences of 20 \textbf{ABCD} graphs $(G_n(i), i \in [20]), (G_n^*(i), i \in [20])$, and $(G_n^{**}(i), i \in [20])$. Consistent in all three sequences are the parameters $\delta = 5$, $\zeta = 0.6$, $s = 50$, $\tau = 0.9$, and $\xi = 0.5$. The graphs in sequence $(G_n(i), i \in [20])$ have $\gamma = 2.1$ and $\beta = 1.1$, the graphs in $(G_n^*(i), i \in [20])$ have $\gamma = 2.5$ and $\beta = 1.5$, and the graphs in $(G_n^{**}(i), i \in [20])$ have $\gamma = 2.9$ and $\beta = 1.9$. We compare the growth of $S_c/L$, $M_c/L$, $S_b/L$, and $M_b/L$ (the average values and the corresponding standard deviations over 20 graphs), as functions of $n$, for all three sequences. Each sequence represents a different scenario in expectation based on Theorem~\ref{thm:loops and multiedges}, and we comment on each result separately.
\begin{itemize}
\item For $(G_n(i), i \in [20])$ with $\gamma = 2.1$ and $\beta = 1.1$, we have $\gamma + \beta < 4$ and $2\zeta(3-\gamma) + \tau(2-\beta) > \zeta(3-\gamma) + \tau(2-\beta) > 1$ and so we expect each of the variables $S_c/L$, $M_c/L$, $S_b/L$, and $M_b/L$ to be unbounded. In Figure~\ref{fig:collisions2111} we see that, indeed, each of the four variables seem to grow with $n$ in the simulations. 
\item For $(G_n^*(i), i \in [20])$ with $\gamma = 2.5$ and $\beta = 1.5$, we have $\gamma + \beta = 4$ and $ 2\zeta(3-\gamma) + \tau(2-\beta) > 1 > \zeta(3-\gamma) + \tau(2-\beta)$ and so we expect $S_b/L$ to be bounded and $S_c/L,M_c/L,M_b/L$ to be unbounded. As Figure~\ref{fig:collisions2515} shows, the simulations are consistent with the theory for $S_c/L,M_c/L$ and $S_b/L$. However, the trend of $M_b/L$ is unclear. Considering that $2\zeta(3-\gamma) + \tau(2-\beta) = 1.05$ in this case, it is reasonable that the growth of $M_b/L$ should not reveal itself at this scale of $n$.
\item For $(G_n^{**}(i), i \in [20])$ with $\gamma = 2.9$ and $\beta = 1.9$, we have $\gamma + \beta > 4$ and $1 > 2\zeta(3-\gamma) + \tau(2-\beta) > \zeta(3-\gamma) + \tau(2-\beta)$ and so we expect all of $S_c/L,M_c/L,S_b/L,M_b/L$ to be bounded. Figure~\ref{fig:collisions2919} again shows us that theory matches simulations. We note the very slight upward trend of $S_c/L$ and $M_c/L$, likely due to $n$ being too small to see the asymptotic bound take hold.
\end{itemize}

We conclude that Theorem~\ref{thm:loops and multiedges} does a good job at telling us the behaviour of $S_c/L$, $M_c/L$, $S_b/L$, and $M_b/L$ for various $\gamma$ and $\beta$, although the results are not as clear as the other experiments which would likely be resolved by taking larger values of $n$.

\begin{figure}
\[
\begin{array}{cc}
\includegraphics[scale=0.33]{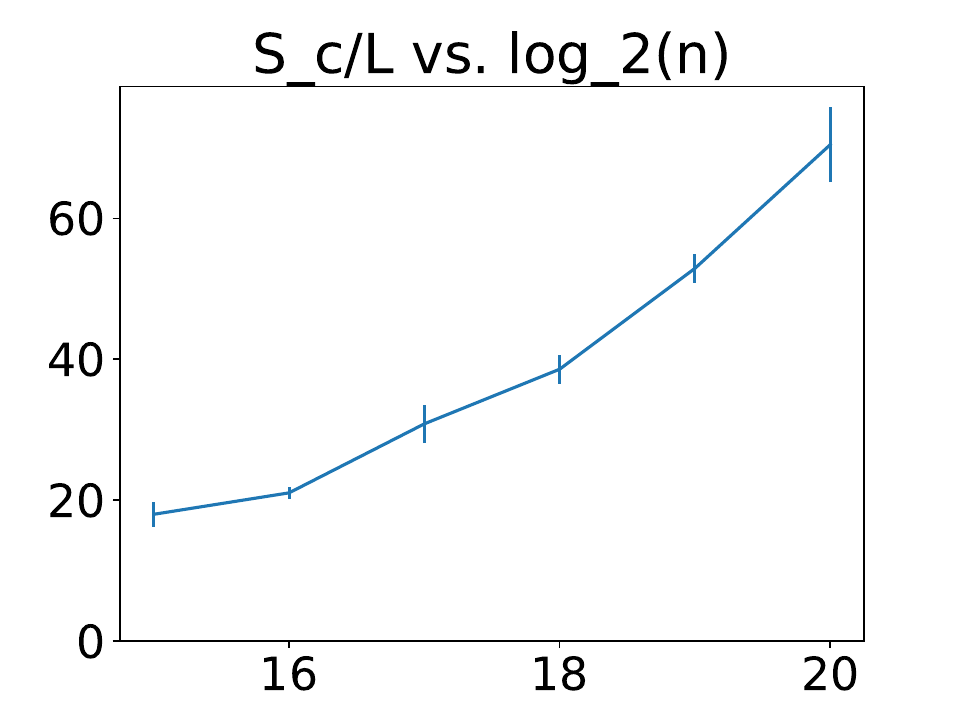}
&
\includegraphics[scale=0.33]{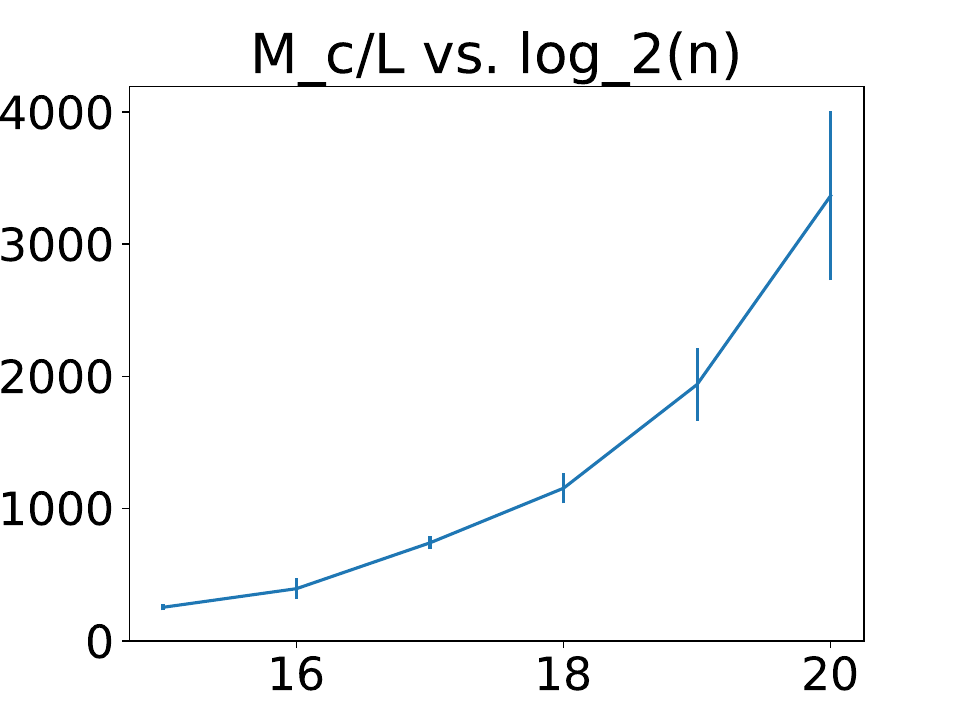}\\
\includegraphics[scale=0.33]{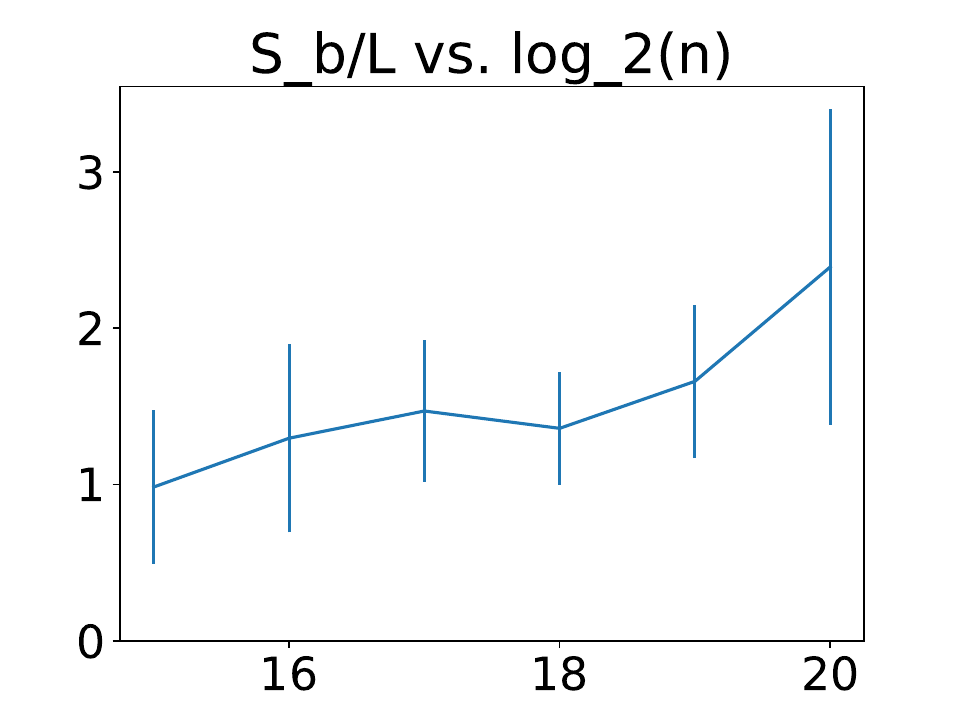}
&
\includegraphics[scale=0.33]{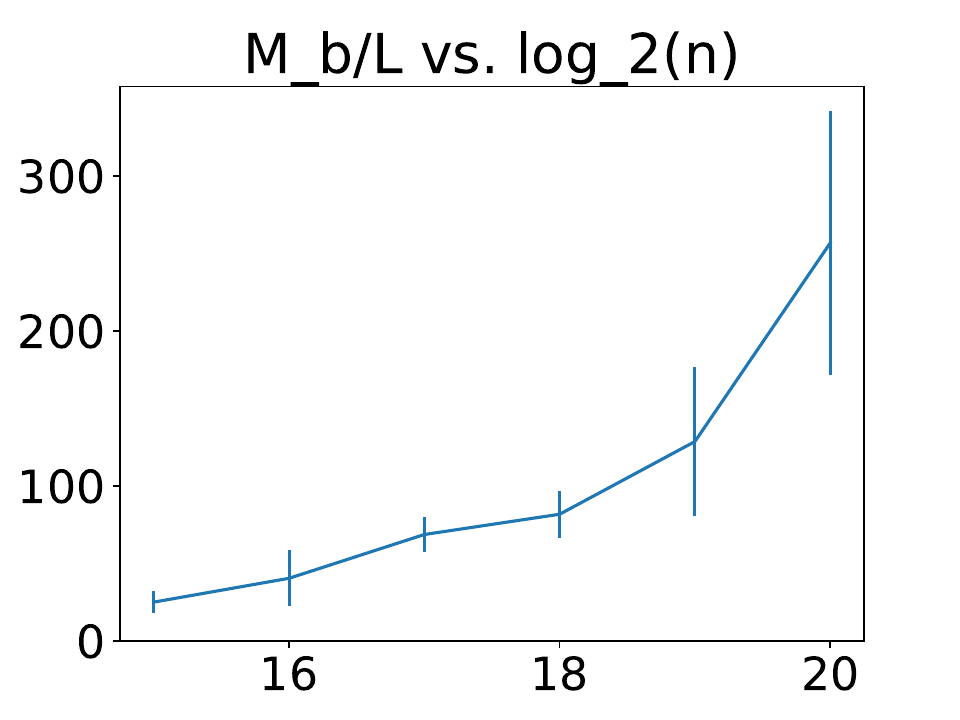}
\end{array}
\]
\caption{\label{fig:collisions2111}In reading order: $S_c/L,M_c/L,S_b/L$ and $M_b/L$ vs.\ $\log_2(n)$ for $(G_n(i), i \in [20])$, averaged over the 20 graphs.}
\end{figure}

\begin{figure}
\[
\begin{array}{cc}
\includegraphics[scale=0.33]{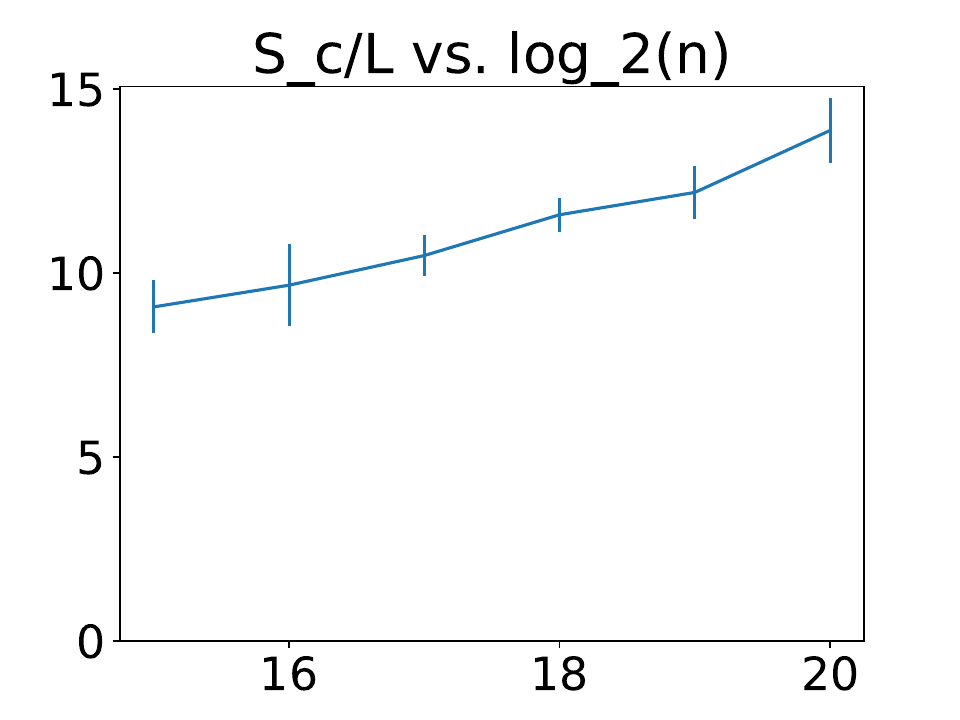}
&
\includegraphics[scale=0.33]{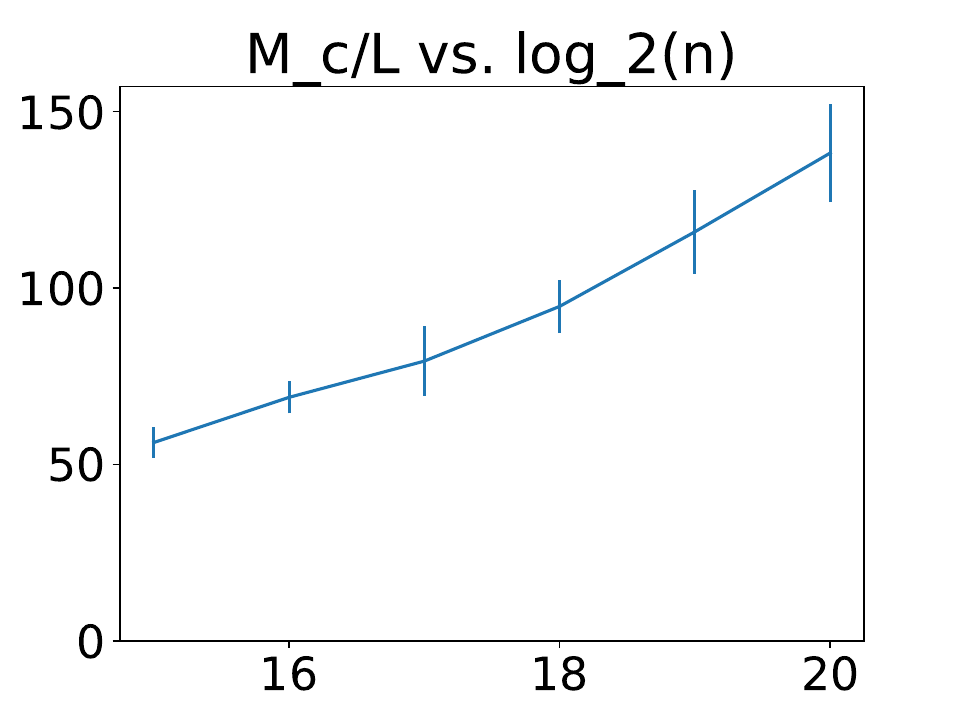}\\
\includegraphics[scale=0.33]{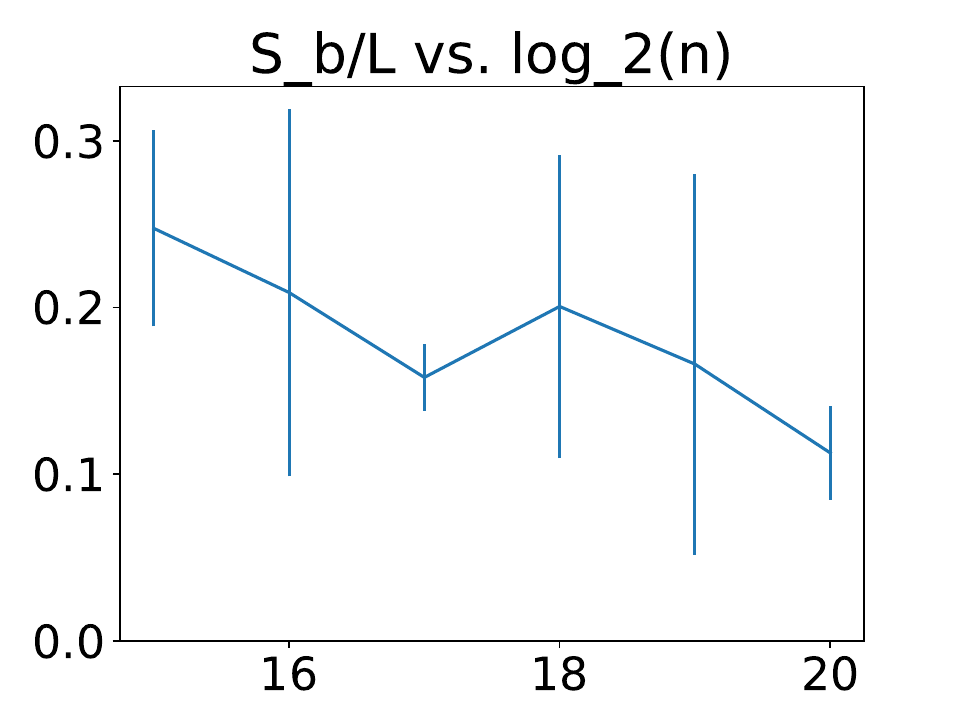}
&
\includegraphics[scale=0.33]{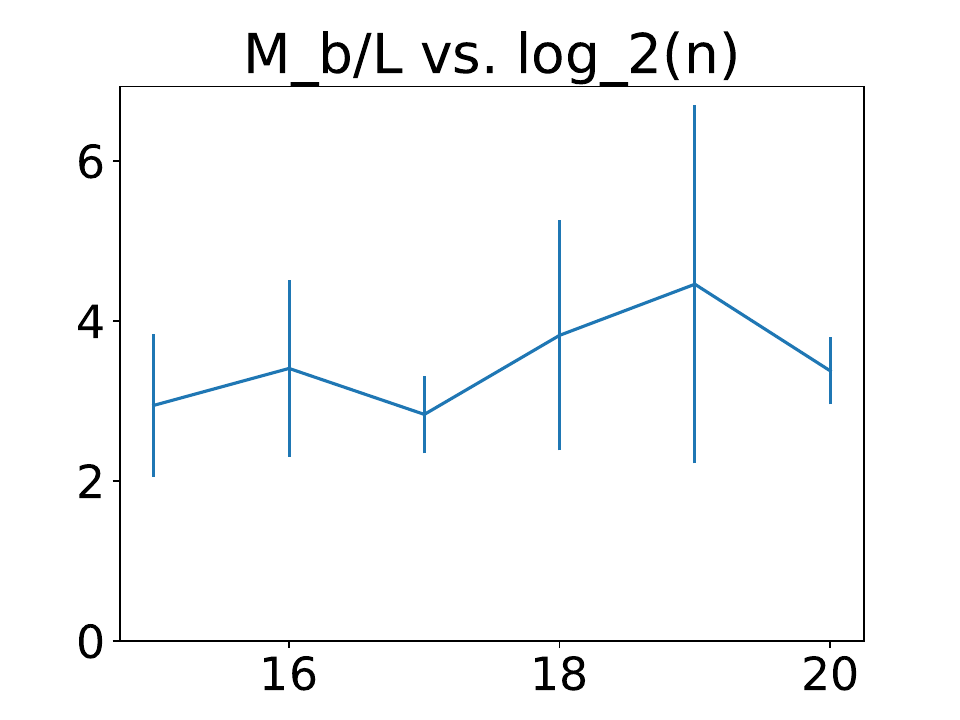}
\end{array}
\]
\caption{\label{fig:collisions2515}In reading order: $S_c/L,M_c/L,S_b/L$ and $M_b/L$ vs.\ $\log_2(n)$ for $(G_n^*(i), i \in [20])$, averaged over the 20 graphs.} 
\end{figure}

\begin{figure}
\[
\begin{array}{cc}
\includegraphics[scale=0.33]{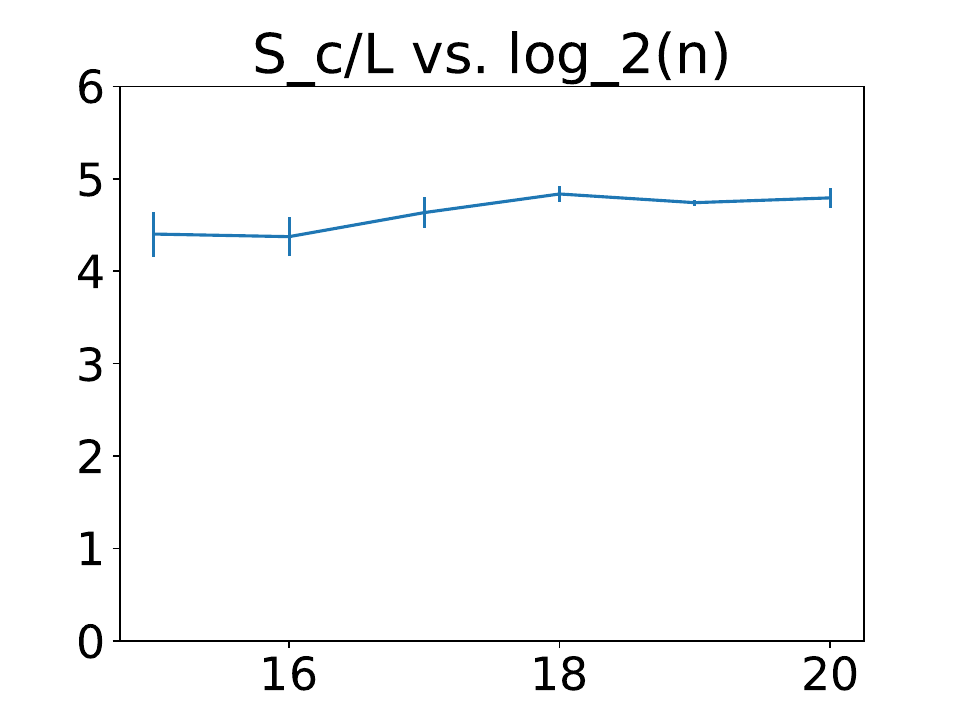}
&
\includegraphics[scale=0.33]{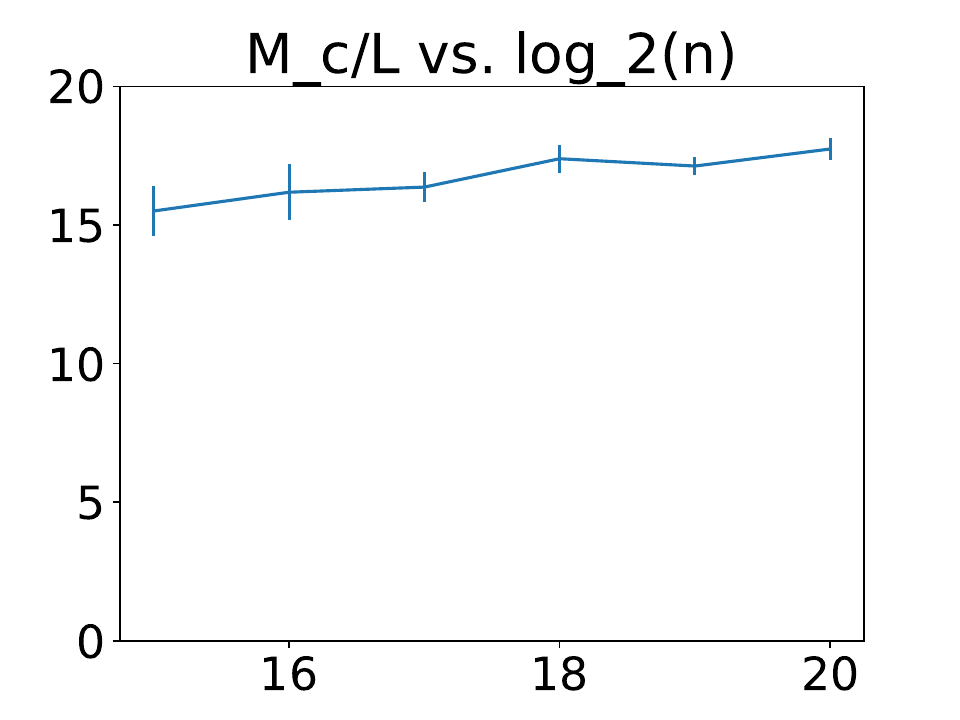}\\
\includegraphics[scale=0.33]{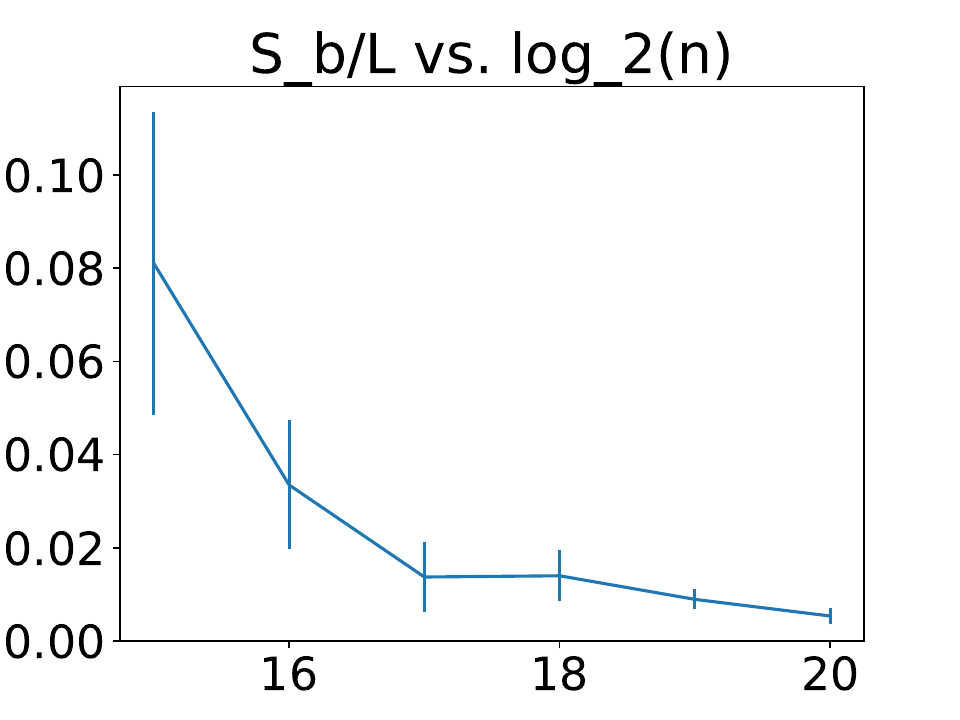}
&
\includegraphics[scale=0.33]{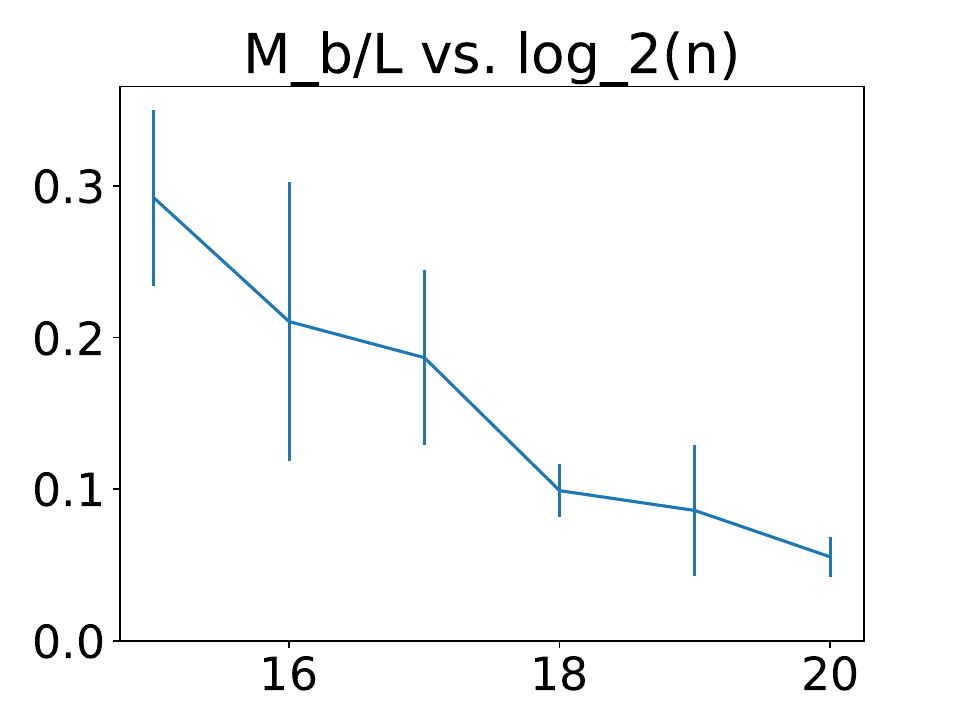}
\end{array}
\]
\caption{\label{fig:collisions2919}In reading order: $S_c/L,M_c/L,S_b/L$ and $M_b/L$ vs.\ $\log_2(n)$ for $(G_n^{**}(i), i \in [20])$, averaged over the 20 graphs.}
\end{figure}

\section{The Coupling (Proof of Theorem~\ref{thm:main})}\label{sec:coupling}

Before we set up a coupling that sandwiches the \textbf{ABCD} construction process in order to control the degree sequence of any community $C_j$, we need to show that almost all nodes belong to large communities. Such communities are large enough such that they can be assigned nodes of any degree. Indeed, since the maximum degree in $G_n$ is (deterministically) at most $n^{\zeta}$, only communities of size less than $n^{\zeta} (1-\xi \phi) + 1 \le n^{\zeta}$ might \emph{not} be available during the entire phase 3 of the \textbf{ACBD} construction process.

\begin{lemma}\label{lem:open communities}
Let $\omega = \omega(n)$ be any function such that $\omega \to \infty$ sufficiently slowly as $n \to \infty$. Next, let $G_n \sim \abcdDist$ and let $V' \subseteq V(G_n)$ be the set of nodes in communities of size at most $n^\zeta$. Then, w.h.p.\ $|V'| < \omega n^{1-(\tau-\zeta)(2-\beta)} = o(n^{1-(\tau-\zeta)(2-\beta)/2}) = o(n)$. 
\end{lemma}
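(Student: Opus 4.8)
The plan is to bound $|V'|$ by controlling the total size of all communities that are ``small'' in the sense of having size at most $n^\zeta$, and to do this via a first-moment argument on the community-generation process of Phase~2. First I would recall from Phase~2 that, before the final size-adjustment step, community sizes are i.i.d.\ samples $s_1, s_2, \dots$ from $\tpl{\beta}{s}{n^\tau}$, and communities are generated until the running total first reaches $n$. Since the adjustment step changes at most one community's size (or deletes one community and bumps $c < k+s \le n^\tau$ others by $1$), it affects $|V'|$ by at most $O(n^\tau) = o(n^{1-(\tau-\zeta)(2-\beta)})$ provided $\tau < 1-(\tau-\zeta)(2-\beta)$; I would check this inequality holds for all valid parameters (using $\tau<1$, $\beta\in(1,2)$, and $\zeta<\tau$, so $(\tau-\zeta)(2-\beta)<\tau(2-\beta)<\tau$, hence $1-(\tau-\zeta)(2-\beta) > 1-\tau > \dots$; actually the cleaner route is $1-(\tau-\zeta)(2-\beta) \ge 1-\tau(2-\beta) > 1-\tau$, and separately one needs $1-\tau(2-\beta)>\tau$ iff $\beta<2-(1-\tau)/\tau$, which may fail, so I would instead simply absorb the $O(n^\tau)$ term and note it is dominated because the main bound $\omega n^{1-(\tau-\zeta)(2-\beta)}$ already exceeds $n^{1-\tau(2-\beta)} \gg n^\tau$ is false in general — so the correct handling is to note the adjustment changes $\sum s_i$ from $n+k$ to exactly $n$, only ever \emph{removing} at most $n^\tau$ from $V'$ or adding at most $n^\tau$, a lower-order correction relative to $L \cdot 1$; I will verify $n^\tau = o(n^{1-(\tau-\zeta)(2-\beta)})$ directly from $\tau(2-\beta) < 1 - (\tau-\zeta)(2-\beta)$, i.e.\ $\tau(2-\beta)+(\tau-\zeta)(2-\beta) < 1$, which need not hold, so ultimately I expect to just bound $|V'|$ before adjustment and argue the adjustment only helps).

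Setting the bookkeeping aside, the heart of the argument: let $p := \p{s_1 \le n^\zeta} = \frac{\int_s^{n^\zeta+1} x^{-\beta}dx}{\int_s^{n^\tau+1} x^{-\beta}dx}$. Since $\beta \in (1,2)$, the denominator is $\Theta(n^{\tau(2-\beta)})$ (the integral $\int_s^{n^\tau+1} x^{1-\beta}$ is dominated by its upper limit up to constants, as $2-\beta>0$), while the numerator is $\Theta(n^{\zeta(2-\beta)})$; hence $p = \Theta(n^{-(\tau-\zeta)(2-\beta)})$. Next I would use Theorem~\ref{thm:previous_paper}, which gives w.e.p.\ that $L = \Theta(n^{1-\tau(2-\beta)})$. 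The number of small communities is $L' := \#\{j : |C_j| \le n^\zeta\}$; conditionally on $L$, each generated community is small with probability (essentially) $p$, so $\e[L' \mid L] \le (L+1) p$. Since every small community contributes at most $n^\zeta$ nodes, $|V'| \le L' \cdot n^\zeta$. Combining, $\e[|V'|] = O(L p\, n^\zeta) = O\!\big(n^{1-\tau(2-\beta)} \cdot n^{-(\tau-\zeta)(2-\beta)} \cdot n^\zeta\big)$, and I would check that $1-\tau(2-\beta) - (\tau-\zeta)(2-\beta) + \zeta < 1 - (\tau-\zeta)(2-\beta)$, i.e.\ $\zeta < \tau(2-\beta)$ — hmm, this again is not automatic. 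The cleaner accounting is to bound $|V'|$ not by $L' n^\zeta$ but to note that the \emph{expected total size of small communities} is $(L+1)\,\e[s_1 \mathbf{1}_{s_1 \le n^\zeta}]$, and $\e[s_1 \mathbf{1}_{s_1 \le n^\zeta}] = \frac{\int_s^{n^\zeta+1} x^{1-\beta}dx}{\int_s^{n^\tau+1} x^{1-\beta}dx}\cdot\Theta(1)$; since $1-\beta \in (-1,0)$ here the numerator is $\Theta(n^{\zeta(2-\beta)})$ and denominator $\Theta(n^{\tau(2-\beta)})$, giving $\e[s_1\mathbf 1_{s_1\le n^\zeta}] = \Theta(n^{-(\tau-\zeta)(2-\beta)})$, exactly $p$ up to constants since individual small sizes are $\Theta(1)$ on average. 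Thus $\e[|V'|] = O(L \cdot n^{-(\tau-\zeta)(2-\beta)}) = O(n^{1-\tau(2-\beta)-(\tau-\zeta)(2-\beta)})$, which is $o(n^{1-(\tau-\zeta)(2-\beta)})$ since $\tau(2-\beta) > 0$. Finally, Markov's inequality with the slowly-growing $\omega$ upgrades $\e[|V'|] = o(n^{1-(\tau-\zeta)(2-\beta)})$ to $|V'| < \omega n^{1-(\tau-\zeta)(2-\beta)}$ w.h.p.\ (choosing $\omega$ to grow slower than $n^{1-(\tau-\zeta)(2-\beta)}/\e[|V'|] \to \infty$), and the final two asymptotic comparisons $\omega n^{1-(\tau-\zeta)(2-\beta)} = o(n^{1-(\tau-\zeta)(2-\beta)/2}) = o(n)$ are immediate: the first because $(\tau-\zeta)(2-\beta)/2 < (\tau-\zeta)(2-\beta)$ and $\omega$ grows sub-polynomially, the second because $(\tau-\zeta)(2-\beta)>0$.

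The step I expect to be the main obstacle is the careful conditioning on $L$: the number of communities is itself random and determined by a stopping rule on the partial sums of the $s_i$, so $L'$ and $L$ are not independent, and one must argue that conditioning on the w.e.p.\ event $\{L = \Theta(n^{1-\tau(2-\beta)})\}$ from Theorem~\ref{thm:previous_paper} does not distort the per-community small-probability estimate. I would handle this by bounding $|V'| \le n^\zeta \cdot \#\{i \le L^{+} : s_i \le n^\zeta\}$ where $L^+$ is a deterministic upper bound for $L$ holding w.e.p., then using that $\#\{i \le L^+ : s_i \le n^\zeta\}$ is a plain $\mathrm{Bin}(L^+, p)$ variable (the $s_i$ being i.i.d.\ and, crucially, independent of the decision of \emph{which} index is the last one), to which Markov or Chernoff applies directly; the failure probability of the event $\{L \le L^+\}$ is $\exp(-\Omega(\log^2 n))$ and hence negligible. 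The remaining care is purely in the integral asymptotics for $\beta \in (1,2)$, which are routine: $\int_s^{N} x^{-\beta}\,dx = \Theta(1)$ and $\int_s^N x^{1-\beta}\,dx = \Theta(N^{2-\beta})$ as $N\to\infty$.
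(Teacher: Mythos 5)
Your overall strategy---a first-moment bound on the total size of communities of size at most $n^\zeta$, computed from the i.i.d.\ sizes drawn from $\tpl{\beta}{s}{n^\tau}$, multiplied by the w.e.p.\ bound on $L$ from Theorem~\ref{thm:previous_paper} and finished with Markov's inequality---is exactly the paper's proof. The problem is that your central asymptotics are wrong. The normalizing integral of $\tpl{\beta}{s}{n^\tau}$ is $\int_s^{n^\tau+1}x^{-\beta}\,dx=\bigl(s^{1-\beta}-(n^\tau+1)^{1-\beta}\bigr)/(\beta-1)=\Theta(1)$ because $\beta>1$; it is \emph{not} $\Theta(n^{\tau(2-\beta)})$, as you assert when estimating $p=\p{s_1\le n^\zeta}$ (your parenthetical silently replaces $x^{-\beta}$ by $x^{1-\beta}$) and again when you write the denominator of $\e\bigl[s_1\I{s_1\le n^\zeta}\bigr]$ as $\int_s^{n^\tau+1}x^{1-\beta}\,dx$. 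With the correct normalization, $p=1-\Theta(n^{-\zeta(\beta-1)})=1-o(1)$---almost every community is small, the opposite of your claim $p=\Theta(n^{-(\tau-\zeta)(2-\beta)})\to 0$---and $\e\bigl[s_1\I{s_1\le n^\zeta}\bigr]=\Theta(n^{\zeta(2-\beta)})$, not $\Theta(n^{-(\tau-\zeta)(2-\beta)})$. Consequently your intermediate conclusion $\e[|V'|]=o(n^{1-(\tau-\zeta)(2-\beta)})$ is false: the correct order is $\E{|V'|}=\Theta\bigl(L\,n^{\zeta(2-\beta)}\bigr)=\Theta\bigl(n^{1-(\tau-\zeta)(2-\beta)}\bigr)$, which is precisely why the lemma carries the slowly growing factor $\omega$ and only then applies Markov. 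Your own closing sentence states the correct facts $\int_s^{N}x^{-\beta}dx=\Theta(1)$ and $\int_s^{N}x^{1-\beta}dx=\Theta(N^{2-\beta})$, contradicting the computation in the body of your argument.

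Once the normalization is repaired, your argument collapses to the paper's: on the w.e.p.\ event $L=\Theta(n^{1-\tau(2-\beta)})$ one gets $\E{|V'|}=O\bigl(n\exp(-\Omega(\log^2 n))+n^{1-\tau(2-\beta)}\cdot n^{\zeta(2-\beta)}\bigr)=O\bigl(n^{1-(\tau-\zeta)(2-\beta)}\bigr)$, and Markov with the factor $\omega$ gives the statement. The long detours are unnecessary at this level of precision: the Phase-2 size-adjustment is treated as negligible in the model's definition, and the dependence between $L$ and the sizes is handled simply by conditioning on the w.e.p.\ event of Theorem~\ref{thm:previous_paper} (or, as you suggest, a deterministic w.e.p.\ upper bound on $L$); also your rejected crude bound $|V'|\le L'n^\zeta$ indeed fails, but because $p=\Theta(1)$, not for the exponent comparison you checked. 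So the route is the paper's route; the genuine gap is the incorrect evaluation of the power-law normalization constant and the false claims about $p$ and $\E{|V'|}$ that follow from it.
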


\begin{proof}
Recall that $0 < \zeta < \tau < 1$ and $1 < \beta < 2$. Pick a community $C \in \textbf{C}_n$ uniformly at random and let $X = |C|$ if $|C| \le n^{\zeta}$; otherwise, $X=0$. Then, for $s \leq m \leq n^\zeta$,
\begin{align*}
\p{X = m} 
&= 
\frac{\int_m^{m +1} y^{-\beta} \, dy}{\int_s^{n^{\tau}+1} y^{-\beta} \, dy}\\ 
&= 
(\beta-1) \frac{\int_m^{m +1} y^{-\beta} \, dy}{s^{1-\beta} - (n^\tau+1)^{1-\beta}}\\
&=
\left( 1 + O(n^{\tau(1-\beta)}) \right) (\beta-1) s^{\beta-1} \int_m^{m+1} y^{-\beta} \, dy \,,
\end{align*}
and hence
\begin{align*}
\E{X} 
&= \left( 1 + O(n^{\tau(1-\beta)}) \right) (\beta-1) s^{\beta-1} \sum_{m = s}^{\lfloor n^{\zeta} \rfloor} m \int_m^{m +1} y^{-\beta} dy \\
&\leq
\left( 1 + O(n^{\tau(1-\beta)}) \right) (\beta-1) s^{\beta-1} \int_s^{n^{\zeta}+1} y^{1-\beta} dy \\
&= 
\left( 1 + O(n^{\tau(1-\beta)}) \right) \frac{(\beta-1) s^{\beta-1}}{2-\beta} \left( (n^\zeta+1)^{2-\beta} - s^{2-\beta} \right)\\
&=
\left( 1 + O(n^{\tau(1-\beta)}) + O(n^{\tau(\beta-2)}) \right) \frac{(\beta-1) s^{\beta-1}}{2-\beta} n^{\zeta(2-\beta)} \\
&=
\Theta \left( n^{\zeta(2-\beta)} \right) \,.
\end{align*}
Finally, since w.e.p.\ $L = \Theta\left( n^{1-\tau(2-\beta)} \right)$ (see Theorem~\ref{thm:previous_paper}), we get that
\begin{align*}
\E{V'} 
&=
O\left( n \exp(-\log^2 n) + n^{1-\tau(2-\beta)} \E{X} \right)
= 
O\left( n^{1-(\tau-\zeta)(2-\beta)}  \right) \,,
\end{align*}
and the lemma now follows from Markov's inequality. 
\end{proof}

We will also need the following simple fact about the distribution $\tpl{\gamma}{\delta}{\Delta}$.
\begin{fact}\label{fact:tpl condition}
Fix $\gamma > 0$ and $1 \le \delta \le \delta' \le \Delta' \le \Delta$. Then $X \sim \tpl{\gamma}{\delta}{\Delta}$, conditioned on $\delta' \leq X \leq \Delta'$, has distribution $\tpl{\gamma}{\delta'}{\Delta'}$.
\end{fact}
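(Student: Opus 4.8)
\textbf{Proof proposal for Fact~\ref{fact:tpl condition}.} The plan is a direct computation from the definition of conditional probability. First I would note that the conditioning event $\{\delta' \le X \le \Delta'\}$ has positive probability, since $\delta \le \delta' \le \Delta' \le \Delta$ guarantees that $\delta'$ itself is an admissible value of $X$, so the conditional distribution is well defined.

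Next, for any integer $k$ with $\delta' \le k \le \Delta'$, I would write
\[
\p{X = k \mid \delta' \le X \le \Delta'} = \frac{\p{X = k}}{\p{\delta' \le X \le \Delta'}} = \frac{\displaystyle\int_k^{k+1} x^{-\gamma}\,dx \Big/ \int_\delta^{\Delta+1} x^{-\gamma}\,dx}{\displaystyle\sum_{\ell = \delta'}^{\Delta'} \int_\ell^{\ell+1} x^{-\gamma}\,dx \Big/ \int_\delta^{\Delta+1} x^{-\gamma}\,dx} \,.
\]
The normalizing factor $\int_\delta^{\Delta+1} x^{-\gamma}\,dx$ cancels between numerator and denominator, and the sum in the denominator telescopes: $\sum_{\ell=\delta'}^{\Delta'} \int_\ell^{\ell+1} x^{-\gamma}\,dx = \int_{\delta'}^{\Delta'+1} x^{-\gamma}\,dx$. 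Hence
\[
\p{X = k \mid \delta' \le X \le \Delta'} = \frac{\int_k^{k+1} x^{-\gamma}\,dx}{\int_{\delta'}^{\Delta'+1} x^{-\gamma}\,dx} \,,
\]
which is precisely the probability mass function of $\tpl{\gamma}{\delta'}{\Delta'}$ evaluated at $k$. Since the conditional distribution is supported on $\{\delta',\dots,\Delta'\}$ and agrees with $\tpl{\gamma}{\delta'}{\Delta'}$ pointwise there, the two distributions coincide, which proves the fact.

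There is essentially no obstacle here: the only things to be careful about are that the conditioning event is non-null (handled above) and that the interval $[\delta',\Delta']$ contains the integers we sum over, which is immediate. I would keep the write-up to the three displayed lines above plus a sentence of framing.
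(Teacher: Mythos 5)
Your computation is correct and is exactly the canonical argument: the paper states this fact without proof precisely because the conditional probability calculation you give (cancel the common normalizer, telescope the sum over $\int_\ell^{\ell+1} x^{-\gamma}\,dx$ into $\int_{\delta'}^{\Delta'+1} x^{-\gamma}\,dx$) is the immediate and only natural route. Nothing is missing; your write-up matches what the authors evidently had in mind.
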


The remainder of Section~\ref{sec:coupling} is dedicated to proving Theorem~\ref{thm:main}. In the coming arguments and with respect to phase 3 of the \textbf{ABCD} construction process, we refer to a community $C$ as \textit{locked} at step $i$ if $d_i > (|C|-1)/(1-\xi \phi)$ and otherwise we refer to $C$ as \textit{unlocked} at step $i$. We say that a node is locked/unlocked at step $i$ if its corresponding community is locked/unlocked at step $i$. Note that, since $d_1 \le n^{\zeta}$, all communities of size at least $n^{\zeta}(1-\xi \phi)+1$ are always unlocked.

We start with the modified version of phase 3 of the \textbf{ABCD} construction process that will be used to prove the lower bound in Theorem~\ref{thm:main}. Fix $z$ with $s \leq z \leq n^\tau$ and define the construction process $\abcdLower$, yielding a collection of degrees assigned to a collection of communities notated as $G_n^-$, as follows.
\begin{enumerate}
\item Copy phases 1 and 2 of the \textbf{ABCD} construction process to get a degree distribution $\textbf{d}_n = (d_i,i \in [n])$ and a collection of communities $\textbf{C}_n = (C_j,j \in [L])$ each containing unassigned nodes (recall that unassigned nodes are nodes that have not yet been assigned a label or a degree). 
\item Copy phase 3 of the \textbf{ABCD} construction process until the communities of size $z$ are unlocked. This event occurs at step $i$ where $i$ is the smallest label satisfying $d_i \leq \frac{z-1}{1-\xi \phi}$ (recall that the degree sequence $\mathbf{d}_n = (d_i, i \in [n])$ is non-increasing and that label $i$ and degree $d_i$ are assigned to an unassigned node at time $i$). At this point, all communities of size at least $z$ are unlocked and $i-1$ nodes that belong to communities of size at least $z+1$ have been assigned a label and a degree. 
\item Now unlock all communities and assign labels $i,\dots,n$ and corresponding degrees $d_i,\dots,d_n$ to the unlabelled nodes in $[n]$ uniformly at random. 
\end{enumerate}

We will first show that a community $C_j$ in $G_n^-$ of size $z$ has the desired degree distribution. 

\begin{lemma}\label{lem:deg dist lower}
Fix $z=z(n)$ such that $s \leq z \leq n^\tau$. Let $G_n^- \sim \abcdLower$, let $C_j$ be a community in $G_n^-$ with $|C_j| = z$ and with degree sequence $\mathbf{c}_z^-$, and let $(X_i^-, 1 \leq i \leq z)$ be the i.i.d.\ sequence defined in Theorem~\ref{thm:main}. Then, $\mathbf{c}_z^- \stackrel{d}{=} (X_i^-, 1 \leq i \leq z)$. 
\end{lemma}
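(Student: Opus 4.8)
The plan is to analyze the modified process $\abcdLower$ step by step and show that, conditioned on everything, each node ultimately landing in a fixed community $C_j$ of size $z$ receives a degree distributed as $X^-$, independently across the $z$ nodes of $C_j$. First I would observe that in phase~2 the community sizes are i.i.d.\ samples from $\tpl{\beta}{s}{n^\tau}$ (we are in the ``unaltered'' case by assumption), so whether a given community has size exactly $z$ is determined before any degrees are assigned; fix such a $C_j$. The degrees $\mathbf{d}_n$ are i.i.d.\ samples from $\tpl{\gamma}{\delta}{n^\zeta}$ sorted in non-increasing order, but since the assignment in steps 2 and 3 of $\abcdLower$ only ever uses the \emph{multiset} of remaining degrees (never their labels), I can and will work with the unsorted i.i.d.\ sample and forget the sorting.

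Next I would make the key structural observation about $\abcdLower$: by construction, once we reach the threshold step $i$ where the size-$z$ communities become unlocked, \emph{all} remaining unassigned nodes (those in communities of size $\geq z$, including every node of $C_j$, since $C_j$ has had none of its nodes assigned yet — it only became available at step $i$) are filled by assigning the remaining degrees $d_i,\dots,d_n$ uniformly at random to the remaining unlabelled slots. Crucially, the remaining degrees at this point are exactly those $d_v$ in the original i.i.d.\ sample satisfying $d_v \leq (z-1)/(1-\xi\phi)$, i.e., the sample conditioned on being at most $\Delta_z' := (z-1)/(1-\xi\phi)$; combined with the deterministic upper bound $d_v \le n^\zeta$ this means each remaining degree is an independent sample from $\tpl{\gamma}{\delta}{n^\zeta}$ conditioned on lying in $\{\delta,\dots,\Delta_z\}$ with $\Delta_z = \min\{\Delta_z', n^\zeta\}$, which by Fact~\ref{fact:tpl condition} is precisely $\tpl{\gamma}{\delta}{\Delta_z}$, i.e.\ the distribution of $X^-$. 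Here I need to be slightly careful: the event ``$d_v \le \Delta_z'$'' is what defines whether $v$ was assigned before step $i$ or not, so I am conditioning the i.i.d.\ sample on a coordinatewise event, which keeps the surviving coordinates i.i.d.\ with the conditioned law; a short exchangeability argument handles the fact that we don't know the exact index $i$ in advance (condition on the set of surviving degree-values, which is an i.i.d.\ thinning).

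Finally I would conclude by symmetry: the $z$ nodes of $C_j$ are among the unlabelled slots that get a uniformly random assignment of these i.i.d.\ $\tpl{\gamma}{\delta}{\Delta_z}$-distributed degrees, and a uniformly random assignment of an i.i.d.\ sequence to a fixed subset of slots yields, on that subset, again an i.i.d.\ sequence with the same marginal (this is just the standard fact that a uniformly random permutation applied to an exchangeable sequence, restricted to a fixed set of coordinates, is i.i.d.\ when the original entries are i.i.d.). Hence $\mathbf{c}_z^- \stackrel{d}{=} (X_i^-, 1\le i\le z)$. The main obstacle I anticipate is the bookkeeping in the middle step: making rigorous that ``degrees surviving to step $i$'' is exactly an i.i.d.\ thinning of the original sample by the event $\{d_v \le \Delta_z'\}$, and that this is not disturbed by the sorting or by the random choices made among \emph{equally eligible} nodes in steps~2 of the real phase~3 (only size-$\ge z+1$ communities are involved there, never $C_j$, so $C_j$'s degree multiset is untouched until step~3). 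Everything else is a routine application of Fact~\ref{fact:tpl condition} and exchangeability.
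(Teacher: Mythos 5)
Your proposal is correct and takes essentially the same route as the paper's proof: what you phrase as conditioning on the i.i.d.\ thinning by the event $\{d_v \le (z-1)/(1-\xi\phi)\}$ and then exploiting exchangeability is exactly the paper's Principle of Deferred Decisions argument, which exposes only whether each degree exceeds the threshold, observes that $\mathbf{c}_z^-$ is a uniform subsequence of $(d_i,\dots,d_n)$, and then reveals the values and applies Fact~\ref{fact:tpl condition} to get the $\tpl{\gamma}{\delta}{\Delta_z}$ law.
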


\begin{proof}
To prove the lemma, we will use the well-known Principle of Deferred Decisions. A~simple but useful observation is that when constructing $G_n^-$ one can defer exposing some information about the degree sequence $\textbf{d}_n$ to the very end. Indeed, during phase 1 of the \textbf{ABCD} construction, we may only expose information whether $d_i \leq \frac{z-1}{1-\xi \phi}$ or not; if $d_i > \frac{z-1}{1-\xi \phi}$, then we expose $d_i$ but otherwise we only reveal that $d_i \leq \frac{z-1}{1-\xi \phi}$. This partial information is enough to continue with the auxiliary process of constructing $G_n^-$.

Recall that community $C_j$ is locked as long as $d_i > \frac{z-1}{1-\xi \phi}$. Let $i$ be the smallest label such that $d_i \leq \frac{z-1}{1-\xi \phi}$. (Note that, in particular, if $n^{\zeta} \le \frac{z-1}{1-\xi \phi}$, then $C_j$ is immediately unlocked, that is $i=1$.) Once we unlock $C_j$ in $G_n^-$ at step $i$, we unlock all communities and assign degrees $d_i,\dots,d_n$ uniformly to the set of unassigned nodes in $[n]$. Thus, $\mathbf{c}_z^-$ is a uniform subsequence of $(d_i,\dots,d_n)$ of size $z$. Now, we finally expose the degrees in this subsequence. By Fact~\ref{fact:tpl condition}, each $d_i$ follows precisely a truncated power law with upper bound $\Delta_z = \min\left\{ \frac{z-1}{1-\xi \phi}, n^\zeta \right\}$ and lower bound $\delta$. Thus, $\mathbf{c}_z^- \stackrel{d}{=} (X_i^-, 1 \leq i \leq z)$, proving the lemma.
\end{proof}

We are now ready to couple the auxiliary process constructing $G_n^-$ with the original process generating $G_n$, the \textbf{ABCD} graph. This will prove the lower bound in Theorem~\ref{thm:main}.

\begin{proof}[Proof of Theorem~\ref{thm:main} (lower bound)]
Construct $G_n^- \sim \abcdLower$ with nodes labelled as $[n]$, degree sequence $\textbf{d}_n = (d_i, i \in [n])$, and community sequence $\textbf{C}_n = (C_j,j \in [L])$. Next, for all $i \in [n]$ define $z_i = \lceil d_i(1-\xi \phi) + 1 \rceil$; note that a community $C$ is unlocked in phase 3 of the \textbf{ABCD} construction at the first step $i$ for which $|C| \geq z_i$). Now construct $G_n$ in parallel with $G_n^-$ as follows. 
\begin{enumerate}
\item Let $G_n$ have degree sequence $\textbf{d}_n$ and community sequence $\textbf{C}_n$. 
\item Copy the degree assignment process of $G_n^-$ until the communities of size $z$ are unlocked. Let $i$ be the smallest label satisfying $d_i \leq \frac{z-1}{1-\xi \phi}$. Instead of unlocking all communities as we do in $G_n^- \sim \abcdLower$, we will unlock only those communities $C$ satisfying 
$$
|C| \geq z_i = \lceil d_i(1-\xi \phi) + 1 \rceil
$$ 
as we do in $\abcdDist$. (Note that, if $|C| = z \geq \lceil n^\zeta(1-\xi \phi) + 1 \rceil$, then $i=1$ and $C$ is unlocked from the start.)
\item Now, for $j \in \{i,\dots ,n\}$ starting with $j = i$, we first unlock all communities $C$ satisfying $|C| \geq z_j$. We then partition the nodes into four sets. We say that node $v$ is \textit{open} in $G_n$ at step $j$ if $v$ is both unlocked and unlabelled before step $j$, and otherwise we say $v$ is closed at step $j$ (and similarly for $G_n^-$). The four sets are as follows:
\begin{align*}
V_j^{++} &= \left\{ v : v \text{ is open in both } G_n^- \text{ and } G_n \text{ at step } j \right\} \,, \\
V_j^{+-} &= \left\{ v : v \text{ is open in } G_n^- \text{ and closed in } G_n \text{ at step } j \right\} \,, \\
V_j^{-+} &= \left\{ v : v \text{ is closed in } G_n^- \text{ and open in } G_n \text{ at step } j \right\} \,, \\
V_j^{--} &= \left\{ v : v \text{ is closed in both } G_n^- \text{ and } G_n \text{ at step } j \right\} \,.
\end{align*}
Note that $V_i^{+-}$ is the set of nodes in communities of size at most $z_i-1$ and $V_i^{-+} = \emptyset$. However, all four sets will change with $j$. We now choose a node $v$ in $G_n^-$ to receive label $j$ and degree $d_j$ as per the $\abcdLower$ construction (note that $v$ is a uniform element of $V_j^{++} \cup V_j^{+-}$). We then choose a node in $G_n$ to receive label $j$ and degree $d_j$ as follows.
\begin{itemize}
\item If $v \in V_j^{++}$, then we give label $j$ and degree $d_j$ to $v$ in $G_n$.
\item If $v \in  V_j^{+-}$, then we give label $j$ and degree $d_j$ to a uniform node in $V_j^{-+}$ with probability $p_j$, and to a uniform node in $V_j^{++}$ with probability $1-p_j$, where
\[
p_j = \frac{|V_j^{++}||V_j^{-+}| + |V_j^{+-}||V_j^{-+}|}{|V_j^{++}||V_j^{+-}| + |V_j^{+-}||V_j^{-+}|} \,;
\]
we will later verify that $p_j \leq 1$. 
\end{itemize}
\item Once all nodes have been assigned a degree, create the community edges and background edges in $G_n$ as per the usual $\abcdDist$ construction process.
\end{enumerate}
We claim (a) that $G_n \sim \abcdDist$, and (b) that any community $C \in \textbf{C}_n$ of size $z$ with $G_n$-degree sequence $\mathbf{c}_z$ and $G_n^-$-degree sequence $\mathbf{c}_z^-$ satisfies $\mathbf{c}_z \geq \mathbf{c}_z^-$ point-wise.  

Starting with claim (a), it is clear by the construction process $\abcdLower$ that $\textbf{d}_n$ and $\textbf{C}_n$ are valid sequences for $G_n \sim \abcdDist$. We must then verify that, for $j  = i,\dots,n$, the node in $G_n$ chosen to receive label $j$ and degree $d_j$ is a uniform node from the set of unlabelled nodes in communities of size at least $d_j(1-\xi \phi) + 1$. Note that this set of nodes is precisely $V_j^{++} \cup V_j^{-+}$, and so we need only show that, for $u,v \in V_j^{++} \cup V_j^{-+}$, the probability of labelling $u$ and the probability of labelling $v$ are equal. We will first show that $p_j \leq 1$ by showing that $|V_j^{-+}| \leq |V_j^{+-}|$ for all $j \in \{i,\dots,n\}$. In fact, we will show a stronger result, namely, that $|V_j^{+-}|-|V_j^{-+}|$ is precisely the number of nodes in communities that are locked in $G_n$ at time~$j$.

As mentioned earlier, when $j = i$, $V_j^{+-}$ is the set of nodes in communities that are still locked (that is, of size at most $z_i-1$) and $V_j^{-+} = \emptyset$, so the desired property holds. Now suppose the property holds up to some time $j \geq i$. At step $j$, if $v \in V_j^{++}$ receives label $j$ and degree $d_j$ in $G_n^-$, then $v$ also receives this label and degree in $G_n$, and thus $v$ is moved from $V_j^{++}$ to $V_{j+1}^{--}$ ($|V_{j+1}^{+-}|-|V_{j+1}^{-+}|$ is unaffected by this event). On the other hand, if $v \in V_j^{+-}$ receives label $j$ and degree $d_j$ at step $j$, then $v$ is moved from $V_j^{+-}$ to $V_{j+1}^{--}$ and we have two sub-cases to consider. If some node $u \in V_j^{-+}$ receives label $j$ and degree $d_j$ in $G_n$, then $u$ is moved from $V_j^{-+}$ to $V_j^{--}$ ($V_{j+1}^{+-}$ and $V_{j+1}^{-+}$ each lose one node in this case); if some node $u \in V_j^{++}$ receives label $j$ and degree $d_j$ in $G_n$, then $u$ is moved from $V_j^{++}$ to $V_j^{+-}$ ($V_{j+1}^{+-}$ loses a node and gains a different node in this case). Thus, in any case, $|V_{j+1}^{+-}|-|V_{j+1}^{-+}|$ is unaffected by the process of assigning labels and degrees. Finally, we need to investigate what happens when communities are unlocked. Any node in a locked community at step $j$ is in $V_j^{+-}$ or $V_j^{--}$. Once a community is unlocked, all of the corresponding nodes in $V_j^{+-}$ move to $V_{j+1}^{++}$ and all of the corresponding nodes in $V_j^{--}$ move to $V_{j+1}^{-+}$. Thus, every node in a newly unlocked community decreases $V_{j+1}^{+-}$ by one or increases $V_{j+1}^{-+}$ by one, but not both. Therefore, $\Big(|V_j^{+-}|-|V_j^{-+}|\Big) - \Big(|V_{j+1}^{+-}|-|V_{j+1}^{-+}|\Big)$ is precisely the number of nodes in communities unlocked at step $j+1$. The claim now follows by induction. 

We have established that 
$$
p_j = \frac{|V_j^{++}||V_j^{-+}| + |V_j^{+-}||V_j^{-+}|}{|V_j^{++}||V_j^{+-}| + |V_j^{+-}||V_j^{-+}|} \leq 1. 
$$
Next, consider a node $v \in V_j^{-+}$. Then $v$ is given label $j$ and degree $d_j$ in $G_n$ if and only if some node $V_j^{+-}$ is chosen in $G_n^-$, the label is redirected to $V_j^{-+}$ in $G_n$, and $v$ is then chosen uniformly from the set $V_j^{-+}$ to receive the label in $G_n$. Thus, the probability that $v \in V_j^{-+}$ is assigned label $j$ and degree $d_j$ is 
\begin{align*}
\left(\frac{|V_j^{+-}|}{|V_j^{++}| + |V_j^{+-}|} \right) \left( \frac{|V_j^{++}||V_j^{-+}| + |V_j^{+-}||V_j^{-+}|}{|V_j^{++}||V_j^{+-}| + |V_j^{+-}||V_j^{-+}|} \right) \left( \frac{1}{|V_j^{-+}|} \right)
&=
\frac{1}{|V_j^{++}| + |V_j^{-+}|} \,.
\end{align*}
Consequently, a node $v$ in $V_j^{++}$ is labelled in $G_n$ at step $j$ with probability 
\begin{align*}
\left( 1 - \frac{|V_j^{-+}|}{|V_j^{++}| + |V_j^{-+}|} \right) \left( \frac{1}{|V_j^{++}|} \right)
&=
\left(\frac{|V_j^{++}|}{|V_j^{++}| + |V_j^{-+}|} \right) \left( \frac{1}{|V_j^{++}|} \right)
=
\frac{1}{|V_j^{++}| + |V_j^{-+}|} \,.
\end{align*}
Therefore, at every step $i \leq j \leq n$, the node chosen to receive label $j$ and degree $d_j$ is a uniform element of $V_j^{++} \cup V_j^{-+}$, the set of unlocked and unlabelled (that is, open) nodes in $G_n$ at step $j$. Lastly, the remaining part of the construction process of $G_n$ is equivalent to that of $\abcdDist$, and hence $G_n \sim \abcdDist$. 

We continue with the proof of claim (b). Let $C \in \textbf{C}_n$ satisfy $|C| = z$. Then the coupling ensures that $C$ is unlocked in both $G_n$ and $G_n^-$ before there is any deviation in the assignment process. Hence, if a node $v \in C$ receives label $j$ and degree $d_j$ in $G_n^-$, then $v$ will receive the same label in $G_n$ unless $v$ has already been labelled. If $v$ was already labelled in $G_n$ then this label is some $j' < j$. Since $d_1 \geq \dots \geq d_n$, $d_{j'} \geq d_j$. Therefore, the degree sequence $\mathbf{c}_z^-$ of $C$ in $G_n^-$ is bounded above point-wise by the degree sequence $\mathbf{c}_z$ in $G_n$. The proof now follows from Lemma~\ref{lem:deg dist lower}. 
\end{proof}

We continue with another modified version of phase 3 of the \textbf{ABCD} construction process. This new version will be used to prove the upper bound in Theorem~\ref{thm:main}. Fix $z$ with $s \leq z \leq n^\tau$ and define the construction process $\abcdUpper$, yielding a collection of degrees assigned to a collection of communities notated as $G_n^+$, as follows. 
\begin{enumerate}
\item Copy phases 1 and 2 of the \textbf{ABCD} construction process to get a degree distribution $\textbf{d}_n = (d_i, i \in [n])$ and a collection of communities $\textbf{C}_n = (C_j, j \in [L])$ each containing unassigned nodes. 
\item Copy phase~3 of the \textbf{ABCD} construction process until the communities of size $z$ are unlocked. This event occurs at step $i$ where $i$ is the smallest label satisfying $d_i \leq \frac{z-1}{1-\xi \phi}$. Let $n'$ be the number of locked nodes, i.e., the number of nodes in communities of size at most $z_i-1$ (recall that $z_i = \lceil d_i(1-\xi \phi) + 1 \rceil$). At this point, of the $n-n'$ unlocked nodes, we have assigned $i-1$ of them labels $1,\dots, i-1$ and corresponding degrees $d_1,\dots, d_{i-1}$ in some order. 
\item Now keep the communities of size at most $z_i-1$ locked and assign labels $i,\dots,n-n'$ and corresponding degrees $d_i,\dots,d_{n-n'}$ uniformly at random to the collection of unlocked and unassigned nodes. 
\item Finally, unlock the communities of size at most $z_i-1$ and assign the $n'$ unassigned nodes labels $n-n'+1,\dots,n$ and corresponding degrees $d_{n-n'+1},\dots,d_n$ in any order (we will later show that w.h.p.\ $d_{n-n'+1} = \dots = d_n = \delta$).
\end{enumerate}
Note that, by the end of step 3, all nodes in communities of size $z$ have been assigned a label and a degree. This labelling is all we need to complete the proof, and we include step 4 only for the sake of completeness.

We first show that a community $C_j$ in $G_n^+ \sim \abcdUpper$ with $z$ nodes has the desired degree distribution. Our statement this time is not as strong as Lemma~\ref{lem:deg dist lower}, though thanks to Lemma~\ref{lem:open communities} we can still stochastically bound the degree sequence of a community of size $z$ in $G_n^+$.
\begin{lemma}\label{lem:deg dist upper}
Let $G_n^+ \sim \abcdUpper$, let $C_j$ be a community in $G_n^+$ with $|C_j| = z$ and with degree sequence $\mathbf{c}_z^+$, and let $(X_i^+, 1 \leq i \leq z)$ be the i.i.d.\ sequence defined in Theorem~\ref{thm:main}. Then w.h.p.\ $\mathbf{c}_z^+$ is stochastically bounded above by $(X_i^+, 1 \leq i \leq z)$. 
\end{lemma}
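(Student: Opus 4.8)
The plan is to first pin down the law of $\mathbf{c}_z^+$, and then to compare it with $(X_i^+, 1 \le i \le z)$ by keeping track only of how many of its entries equal $\delta$. By the definition of $\abcdUpper$, the community $C_j$ is unlocked at step $i$ and all of its $z$ nodes are still unassigned at that moment, so step~3 of the process assigns to them a uniformly random size-$z$ subset of the multiset $\{d_i,\dots,d_{n-n'}\}$. Now $\{d_i,\dots,d_n\}$ is exactly the sub-collection of the Phase~1 i.i.d.\ sample consisting of the values that are at most $\Delta_z$, so by Fact~\ref{fact:tpl condition}, conditionally on its cardinality $N$ it is an i.i.d.\ sample of size $N$ from $X^-\sim\tpl{\gamma}{\delta}{\Delta_z}$, and $\{d_i,\dots,d_{n-n'}\}$ is obtained from it by deleting the $n'$ smallest entries. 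Since $\p{X\le\Delta_z}=\Theta(1)$ for $X\sim\tpl{\gamma}{\delta}{n^\zeta}$, we have $N=\Theta(n)$ w.h.p., and since the Phase~1 sample has $\Theta(n)$ entries equal to $\delta$ while $n'\le|V'|=o(\epsilon n)$ w.h.p.\ by Lemma~\ref{lem:open communities}, all $n'$ deleted entries equal $\delta$ w.h.p.\ (this incidentally also verifies the parenthetical claim in step~4 of $\abcdUpper$).

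Thus, on a w.h.p.\ event determined by Phases~1 and 2, $\mathbf{c}_z^+$ is a uniformly random size-$z$ subset of an i.i.d.\ $X^-$-sample of size $N$ from which $o(\epsilon n)$ copies of $\delta$ have been removed. I would then condition on the number $r$ of entries of $\mathbf{c}_z^+$ equal to $\delta$: given $r$, the other $z-r$ entries form a uniformly random subset of the (i.i.d.) non-$\delta$ part of the population and are hence i.i.d.\ from $\tpl{\gamma}{\delta+1}{\Delta_z}$, which by Fact~\ref{fact:tpl condition} is exactly the conditional law of $X^+$ given $X^+>\delta$. The same decomposition applies to $(X_i^+, 1 \le i \le z)$, whose number of $\delta$-entries is $r^+\sim\mathrm{Bin}\big(z,\p{X^+=\delta}\big)$. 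A short coupling of the two sorted sequences then shows that $\mathbf{c}_z^+$ is stochastically bounded above by $(X_i^+, 1 \le i \le z)$ as soon as $r$ stochastically dominates $r^+$, so the lemma reduces to this one-dimensional comparison of a hypergeometric against a binomial.

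To obtain $r\succeq r^+$: conditionally on Phases~1 and 2, $r$ is hypergeometric with $z$ draws from a population of $N-n'$ elements, of which $N_\delta-n'$ equal $\delta$, where $N_\delta$ is the number of $\delta$'s in the Phase~1 sample. A Chernoff bound gives $N_\delta=\big(1+o(\epsilon)\big)\,N\,\p{X^-=\delta}$ w.h.p., so together with $n'=o(\epsilon n)$ the per-draw success probability $(N_\delta-n')/(N-n')$ differs from $\p{X^-=\delta}$ by only $o(\epsilon)$. Since $\p{X^+=\delta}$ equals $\p{X^-=\delta}$ deflated by a factor $1-\Theta(\epsilon)$, the per-draw probability exceeds $\p{X^+=\delta}$ by a genuine margin of order $\epsilon$, and a step-by-step coupling of the hypergeometric draws against independent $\mathrm{Bernoulli}\big(\p{X^+=\delta}\big)$ trials then yields $r\succeq r^+$.

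I expect this last step to be the main obstacle, because sampling $\mathbf{c}_z^+$ \emph{without} replacement makes $r$ strictly more concentrated than $r^+$, so the mere ordering of per-draw probabilities is not by itself enough: one must check that the $\Theta(\epsilon)$ slack engineered into $\p{X^+=\delta}$ also absorbs the depletion caused by drawing $z$ of the $N$ items (of relative order $z/N$) together with the fluctuation of $N_\delta$. This is precisely what the calibrated choice $\epsilon=n^{-(\tau-\zeta)(2-\beta)/2}$ and the bound $n'=o(\epsilon n)$ of Lemma~\ref{lem:open communities} are for; for the comparatively large values of $z$ one instead uses that $r$ and $r^+$ are then tightly concentrated, with a gap of order $z\epsilon$ between their means that dominates their $O(\sqrt{z\log n})$ fluctuations.
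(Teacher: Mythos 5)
Your core argument is the paper's: expose only whether each $d_i$ exceeds $\frac{z-1}{1-\xi\phi}$ and whether it equals $\delta$; note that $\mathbf{c}_z^+$ is a uniform $z$-subset of $(d_i,\dots,d_{n-n'})$; use Lemma~\ref{lem:open communities} plus a Chernoff bound on the number $n''$ of $\delta$'s to see that w.h.p.\ the $n'$ discarded entries are all $\delta$'s; and use Fact~\ref{fact:tpl condition} to see that the unexposed non-$\delta$ entries are i.i.d.\ $\tpl{\gamma}{\delta+1}{\Delta_z}$, which is exactly the law of $X^+$ conditioned on $X^+>\delta$. So, as in the paper, everything hinges on the chance of drawing a $\delta$ beating $\p{X^+=\delta}=p_\delta(1-\Theta(\epsilon))$, where $p_\delta:=\p{X^-=\delta}$. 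Your repackaging through the number $r$ of $\delta$'s in the sample (the paper instead runs a draw-by-draw domination) is legitimate, including the order-statistics coupling that converts ``$r$ dominates $r^+$'' into domination of the sequences.

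The gap is in how you close that last comparison. Your step-by-step coupling against $\mathrm{Bernoulli}(\p{X^+=\delta})$ trials needs the conditional probability of drawing $\delta$ to exceed $\p{X^+=\delta}$ for \emph{every} history; the worst case is $\frac{K-z}{N_0-z}$ with $K=n''-n'$ and $N_0=n-n'-i+1$, and this beats $p_\delta(1-\Theta(\epsilon))$ only when $z=O(\epsilon n)$, since depletion costs a relative factor $1-\Theta(z/n)$ (this is also the regime in which the paper's own display $\frac{n''-n'-z}{n-n'-i-z}=p_\delta(1-o(\epsilon))$ is justified, so your first branch is essentially the paper's proof). For larger $z$ your fallback --- means separated by $\Theta(z\epsilon)$ versus $O(\sqrt{z\log n})$ fluctuations --- cannot deliver what your reduction asks for: concentration never yields the full family of inequalities $\p{r\ge k}\ge\p{r^+\ge k}$, and in fact exact domination \emph{fails} once $z\gg\epsilon n$, because conditionally on typical Phase 1--2 data $\p{r=z}=\binom{K}{z}\big/\binom{N_0}{z}\le \exp\bigl(O(z\epsilon)-\Theta(z^2/n)\bigr)\,\p{X^+=\delta}^z<\p{r^+=z}$; such community sizes are admissible whenever $\tau>1-(\tau-\zeta)(2-\beta)/2$. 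Even in the weaker reading that the lemma's ``w.h.p.'' is meant to absorb an exceptional event, your two branches require $z=O(\epsilon n)$ and $z\gg\epsilon^{-2}\log n$ respectively, and whenever $(\tau-\zeta)(2-\beta)>2/3$ there are admissible community sizes in between that neither branch covers. The repair is to aim the concentration at the running count of $\delta$'s drawn rather than at $r$ itself: w.e.p., after any $t\le z$ draws this count is $tK/N_0+O(\sqrt{z}\log n)$, so the conditional probability of drawing $\delta$ at the next step is at least $K/N_0-O\bigl(\sqrt{z}\log n/(N_0-z)\bigr)=p_\delta(1-o(\epsilon))>\p{X^+=\delta}$ for all $z\le n^\tau$, since $n^{\tau/2-1}\log n=o(\epsilon)$; on that event the draw-by-draw coupling (i.e., the paper's argument) goes through for every community size.
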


\begin{proof}
As in the proof of Lemma~\ref{lem:deg dist lower}, we will use the Principle of Deferred Decisions, that is, at the beginning we only uncover some partial information about the degree sequence $\mathbf{d}_n$. As before, we first expose whether or not $d_i > \frac{z-1}{1-\xi \phi}$ and, if the inequality holds, then we expose the value of~$d_i$. However, if $d_i \le \frac{z-1}{1-\xi \phi}$, then we reveal $d_i$ only if $d_i = \delta$, and otherwise we do not expose additional information about $d_i$. 

By the construction of $G_n^+ \sim \abcdUpper$, we know that the sequence of degrees in $C_j$ is a uniform subsequence of $(d_i,\dots, d_{n-n'})$, where $i$ is the smallest labelled node satisfying $d_i \leq \frac{z-1}{1-\xi \phi}$ and $n'$ is the number of nodes in communities of size at most $z_i-1$. Then, letting $V'$ be as in Lemma~\ref{lem:open communities}, we have that $n' \leq |V'|$ and that  w.h.p.\ by Lemma~\ref{lem:open communities}, $|V'| < \omega n^{1-(\tau-\zeta)(2-\beta)}$ for any function $\omega = \omega(n) \to \infty$. Thus, w.h.p.\ $n' = o \left(n^{1-(\tau-\zeta)(2-\beta)/2}\right) = o(\epsilon n)$. (Recall that $\epsilon = n^{-(\tau-\zeta)(2-\beta)/2}$.) Since we aim for a statement that holds w.h.p., we may condition on this event.

Let $n''$ be the number of nodes of degree $\delta$. Note that $n''$ is simply a $\mathrm{Binomial(n-i,p_\delta)}$ random variable with
$$
p_\delta = \frac{\int_\delta^{\delta+1} x^{-\gamma} \, dx}{\int_\delta^{\Delta_z+1} x^{-\gamma} \, dx},
$$  
where $\Delta_z = \min \left\{ \frac{z-1}{1-\xi \phi} , n^\zeta \right\}$. 
It follows immediately from Chernoff's bound that w.h.p.\ we have 
$$
n'' = (n-i)p_\delta + \omega \sqrt{n} = (n-i)p_\delta + o(\epsilon n) = (n-i)p_\delta (1+o(\epsilon)), 
$$
the second equality holding since $1-(\tau-\zeta)(2-\beta)/2 > 1/2$. We may condition on this event too.

Let us now summarize our situation. The degree distribution of $C_j$ is a uniform subsequence of length $z$ of the sequence
\[
(d_i,\dots,d_{n-n'}) = (d_i,\dots,d_{n-n''}) ^\frown (d_{n-n''+1},\dots,d_{n-n'})
\] 
of $n-n'-i = (n-i)(1-o(\epsilon))$ degrees. ($\textbf{x} ^\frown \textbf{y}$ is the concatenation of sequences $\textbf{x}$ and~$\textbf{y}$.) The subsequence $(d_i,\dots,d_{n-n''})$ consists of degrees that are at least $\delta+1$ and at most $\Delta_z$; recall that, since we have not yet exposed these degrees, by Fact~\ref{fact:tpl condition} they are i.i.d.\ random variables with distribution $\tpl{\gamma}{\delta+1}{\Delta_z}$. On the other hand, $(d_{n-n''+1},\dots,d_{n-n'})$ is simply a sequence of $n''-n' = (n-i)p_\delta (1-o(\epsilon))$ copies of $\delta$.  

Now, let us provide a more careful argument to show that a uniform subsequence of $(d_i,\dots,d_{n-n''})$ of length $z$ satisfies the stochastic domination in the statement of the theorem. We sample $z$ times uniformly at random from this sequence (that may be viewed as a multi-set) without replacement and observe that each time we select $\delta$ with probability at least
\begin{align*}
\frac {n''-n'-z}{n-n'-i-z} = p_\delta (1-o(\epsilon)) 
&= \frac{(1-\epsilon-o(\epsilon^2))\int_\delta^{\delta+1} x^{-\gamma} \, dx}{(1-\epsilon)\int_\delta^{\delta+1} x^{-\gamma} \, dx + (1-\epsilon) \int_{\delta+1}^{\Delta_z+1} x^{-\gamma} \, dx} \\
&> \frac{(1-\epsilon)\int_\delta^{\delta+1} x^{-\gamma} \, dx}{(1-\epsilon)\int_\delta^{\delta+1} x^{-\gamma} \, dx + \int_{\delta+1}^{\Delta_z+1} x^{-\gamma} \, dx} \,.
\end{align*}
If we select a value other than $\delta$, then our selected degree has distribution $\tpl{\gamma}{\delta+1}{\Delta_z}$. Therefore, w.h.p.\ the random subsequence $\mathbf{c}_z^+$ is stochastically bounded from above by the i.i.d.\ sequence $(X_i^+, 1 \leq i \leq z)$ defined in Theorem~\ref{thm:main}, and the proof of the lemma is finished.
\end{proof}

We will now couple the constructions of $G_n \sim \abcdDist$ and $G_n^+ \sim \abcdUpper$ and prove the upper bound in Theorem~\ref{thm:main}. Contrast to the proof of the lower bound, we will first construct $G_n \sim \abcdDist$ and couple this construction with another construction $G_n^+$ which we will later show satisfies $G_n^+ \sim \abcdUpper$. 

\begin{proof}[Proof of Theorem~\ref{thm:main} (upper bound)]
Construct $G_n \sim \abcdDist$ with nodes labelled as $[n]$, degree sequence $\textbf{d}_n = (d_i, i \in [n])$, and community sequence $\textbf{C}_n = (C_j,j \in [L])$, and construct $G_n^+$ in parallel as follows. 
\begin{enumerate}
\item Let $G_n^+$ have degree sequence $\textbf{d}_n$ and community sequence $\textbf{C}_n$. 
\item Copy the degree assignment process of $G_n$ until the communities of size $z$ are unlocked. Let $i$ be the smallest labelled node satisfying $d_i \leq \frac{z-1}{1-\xi \phi}$ and let $n'$ be the number of nodes in communities of size at most $z_i-1$ (recall that $z_i = \lceil d_i(1-\xi \phi) + 1 \rceil$). Instead of unlocking communities progressively as we do in $G_n \sim \abcdDist$, we will keep the $n'$ nodes locked until we have assigned label $n-n'$ and degree $d_{n-n'}$ as we do in $\abcdUpper$. 
\item Now, for $j \in \{i,\dots ,n-n'\}$ starting with $j = i$, we first partition the nodes into three sets as follows. 
\begin{align*}
V_j^{++} &= \left\{ v : v \text{ is open in both } G_n \text{ and } G_n^+ \text{ at step } j \right\} \,, \\
V_j^{+-} &= \left\{ v : v \text{ is open in } G_n \text{ and closed in } G_n^+ \text{ at step } j \right\} \,, \\
V_j^{--} &= \left\{ v : v \text{ is closed in both } G_n \text{ and } G_n^+ \text{ at step } j \right\} \,.
\end{align*}
Note, distinct from the lower-bound, that $V_i^{+-} = \emptyset$, and that that there is no set $V_j^{-+}$. We need not define $V_j^{+-}$, as we will never encounter a scenario where a node is assigned in $G_n$ but unassigned in $G_n^+$. We now choose a node $v$ in $G_n$ to receive label $j$ and degree $d_j$ as per the $\abcdDist$ construction process (note that $v$ is chosen uniformly at random from $V_j^{++} \cup V_j^{+-}$). We then choose a node in $G_n^+$ to receive label $j$ and degree $d_j$ as follows.
\begin{itemize}
\item If $v \in V_j^{++}$, we give label $j$ and degree $d_j$ to $v$ in $G_n^+$.
\item If $v \in V_j^{+-}$, we give label $j$ and degree $d_j$ to a uniform node in $V_j^{++}$ in $G_n^+$. 
\end{itemize}
\item Finally, unlock the $n'$ locked nodes in $G_n^+$ and assign labels $n-n'+1, \dots , n$ and degrees $d_{n-n'+1},\dots,d_n$ uniformly among these newly unlocked nodes, independent of how these labels and degrees are assigned in $G_n$. 
\end{enumerate}
Similar to the previous coupling, the last step of the coupling is given only for the sake of completeness and has no bearing on the proof. We claim (a) that $G_n^+ \sim \abcdUpper$, and (b) that any community $C \in \textbf{C}_n$ of size $z$ with degree sequence $\mathbf{c}_z^+$ in $G_n^+$ and degree sequence $\mathbf{c}_z$ in $G_n$ satisfies $\mathbf{c}_z^+ \geq \mathbf{c}_z$ point-wise.  

Starting with claim (a), it is clear by the construction process $\abcdDist$ that $\textbf{d}_n$ and $\textbf{C}_n$ are valid sequences for $G_n^+ \sim \abcdUpper$. It is also clear that the degree assignment process in $G_n^+$ for nodes in communities of size at most $z_i-1$ is valid, since this assignment process is identical to that of $\abcdDist$ (which is identical to that of $\abcdUpper$ as well). We must then verify that, for $j  \in \{ i,\dots,n-n' \}$, the node in $G_n^+$ chosen to receive label $j$ and degree $d_j$ is a uniform node from the set of unassigned nodes in communities of size at least $z_i$. Note that this set of nodes is precisely $V_j^{++}$. For $u \in V_j^{++}$, $u$ is assigned label $j$ and degree $d_j$ in $G_n^+$ if $u$ is assigned this label and degree in $G_n$ or if a node $v \in V_j^{+-}$ is assigned this label and degree in $G_n$ and this label and degree is redirected to $u$ in $G_n^+$. Thus, the probability that $u \in V_j^{++}$ is labelled at step $j$ is 
\begin{align*}
\frac{1}{|V_j^{++}| + |V_j^{+-}|} + \left(\frac{|V_j^{+-}|}{|V_j^{++}| + |V_j^{+-}|}\right) \left( \frac{1}{|V_j^{++}|} \right) = \frac{1}{|V_j^{++}|}\,,
\end{align*}
and, in particular, the probability is equal for all $u \in V_j^{++}$. 
Therefore, at every step $i \leq j \leq n$, the node chosen to receive label $j$ and degree $d_j$ is a uniform element from the set of unlocked and unlabelled nodes in $G_n^+$ at step $j$, and this proves claim (a).

We continue with the proof of claim (b). Let $C \in \textbf{C}_n$ satisfy $|C| = z$. Then the coupling ensures that $C$ is unlocked in both $G_n^+$ and $G_n$ before there is any deviation in the assignment process. Hence, if a node $v \in C$ receives label $j$ and degree $d_j$ in $G_n$, then $v$ will receive the same label and degree in $G_n^+$ unless $v$ has already been given some label $j' < j$ and degree $d_{j'} \geq d_j$ in $G_n^+$. Therefore, the degree sequence $\mathbf{c}_z$ of $C$ in $G_n$ is bounded above point-wise by the degree sequence $\mathbf{c}_z^+$ in $G_n^+$. The proof now follows from Lemma~\ref{lem:deg dist upper}. 
\end{proof}

\section{Applications of Theorem~\ref{thm:main}}\label{sec:loops and multiedges}

In Section~\ref{sec:simulations} we provided empirical results supporting Corollary~\ref{cor:volumes} and Theorem~\ref{thm:loops and multiedges}, and showing that the asymptotic predictions based on these results reflect well the behaviour of the model even for moderately large values of $n$. In this section, we give the proofs of the theoretical, asymptotic, results. 

Let $X \sim \tpl{\gamma}{\delta}{\Delta}$ and recall that $\moment{\ell}{\gamma,\delta,\Delta} = \E{X^\ell}$. Unfortunately, there is no closed formula for $\moment{\ell}{\gamma,\delta,\Delta}$. However, in the coming proofs, we use the following standard technique to bound $\moment{\ell}{\gamma,\delta,\Delta}$ (and other related values) from above and below: 
\begin{align*}
\moment{\ell}{\gamma,\delta,\Delta} &= \sum_{k = \delta}^\Delta k^\ell \frac{\int_k^{k+1} x^{-\gamma} \, dx}{\int_\delta^{\Delta+1} x^{-\gamma} \, dx}
\leq \sum_{k = \delta}^\Delta \frac{\int_k^{k+1} x^{\ell-\gamma} \, dx}{\int_\delta^{\Delta+1} x^{-\gamma} \, dx}
= \frac{\int_\delta^{\Delta+1} x^{\ell-\gamma} \, dx}{\int_\delta^{\Delta+1} x^{-\gamma} \, dx} \,, \text{ and}\\
\moment{\ell}{\gamma,\delta,\Delta} &= \sum_{k =\delta}^\Delta k^\ell \frac{\int_k^{k+1} x^{-\gamma} \, dx}{\int_\delta^{\Delta+1} x^{-\gamma} \, dx}
\geq \sum_{k = \delta}^\Delta \left(\frac{k}{k+1}\right)^\ell \frac{\int_k^{k+1} x^{\ell-\gamma} \, dx}{\int_\delta^{\Delta+1} x^{-\gamma} \, dx}
\geq \left(\frac{\delta}{\delta+1}\right)^\ell \frac{\int_\delta^{\Delta+1} x^{\ell-\gamma} \, dx}{\int_\delta^{\Delta+1} x^{-\gamma} \, dx} \,.
\end{align*}

\subsection{Volumes of Communities (Proof of Corollary~\ref{cor:volumes})}

\begin{proof}[Proof of Corollary~\ref{cor:volumes}]
Let $G_n \sim \abcdDist$ with degree sequence $\mathbf{d}_n$, let $C_j$ be a community in $G_n$ with $|C_j| = z$, let $\mathbf{c}_j$ be the degree sequence of $C_j$, and let 
\[
\Delta_z = \min \left\{ \frac{z-1}{1-\xi \phi} , n^\zeta \right\}, \text{ where } \phi = 1 - \frac{1}{n^2} \sum_{j \in [L]} |C_j|^2 \,.
\]
Now let $(X_i^-, 1 \leq i \leq z)$ and $(X_i^+, 1 \leq i \leq z)$ be as in Theorem~\ref{thm:main}. Then, since w.h.p.\ $\mathbf{c}_j$ is stochastically dominated from below by $(X_i^-, 1 \leq i \leq z)$, we get that w.h.p.\
\begin{align*}
\frac{\E{\mathrm{vol}(C_j)}}{z} 
&\geq \frac{1}{z} \, \E{\sum_{i = 1}^z X_i^-}
= \moment{1}{\gamma,\delta,\Delta_z} \,,
\end{align*}
and since w.h.p.\ $\mathbf{c}_j$ is stochastically dominated from above by $(X_i^+, 1 \leq i \leq z)$, we get that w.h.p.\
\begin{align*}
\frac{\E{\mathrm{vol}(C_j)}}{z} 
&\leq \frac{1}{z} \, \E{\sum_{i = 1}^z X_i^+}
= (1+o(1)) \frac{1}{z} \, \E{\sum_{i = 1}^z X_i^-}
= (1+o(1)) \moment{1}{\gamma,\delta,\Delta_z} \,,
\end{align*}
which establishes the first claim in Corollary~\ref{cor:volumes}. Next, we have 
\begin{align*}
\moment{1}{\gamma,\delta,n^\zeta} - \moment{1}{\gamma,\delta,\Delta_z}
&=
\left( \sum_{k=\delta}^{n^\zeta} k \frac{\int_k^{k+1} x^{-\gamma} \, dx}{\int_\delta^{n^\zeta+1} x^{-\gamma} \, dx} - \sum_{k=\delta}^{\Delta_z} k \frac{\int_k^{k+1} x^{-\gamma} \, dx}{\int_\delta^{\Delta_z+1} x^{-\gamma} \, dx} \right)\\
&=
\left( 1+ O(\Delta_z^{1-\gamma}) \right) \left( \sum_{k=\delta}^{n^\zeta} k \frac{\int_k^{k+1} x^{-\gamma} \, dx}{\int_\delta^{n^\zeta+1} x^{-\gamma} \, dx} - \sum_{k=\delta}^{\Delta_z} k \frac{\int_k^{k+1} x^{-\gamma} \, dx}{\int_\delta^{n^\zeta+1} x^{-\gamma} \, dx} \right)\\
&=
\left( 1+ O(\Delta_z^{1-\gamma}) \right) \sum_{k=\Delta_z+1}^{n^\zeta} k \frac{\int_k^{k+1} x^{-\gamma} \, dx}{\int_\delta^{n^\zeta+1} x^{-\gamma} \, dx}\\
&\leq
\left( 1+ O(\Delta_z^{1-\gamma}) \right) \frac{\int_{\Delta_z+1}^{n^\zeta+1} x^{1-\gamma} \, dx}{\int_\delta^{n^\zeta+1} x^{-\gamma} \, dx}\\
&=
O(\Delta_z^{2-\gamma}) \,.
\end{align*}
The second claim in Corollary~\ref{cor:volumes} now follows since w.h.p.\
\begin{align*}
\frac{\E{\mathrm{vol}(C_j)}}{z} 
&= (1+o(1)) \Big( \moment{1}{\gamma,\delta,n^\zeta} - \left( \moment{1}{\gamma,\delta,n^\zeta} - \moment{1}{\gamma,\delta,\Delta_z} \right) \Big)\\
&= (1+o(1)) \Big( \moment{1}{\gamma,\delta,n^\zeta} - O(\Delta_z^{2-\gamma}) \Big) 
\end{align*}
and, since $\Delta_z = \Theta(\min\{z,n^{\zeta}\})$, we have that $\Delta_z \to \infty$ as $z \to \infty$. 
\end{proof}

\subsection{Loops and Multi-edges (Proof of Theorem~\ref{thm:loops and multiedges})}

Throughout this section, it will be useful to refer to the multi-graph generated by the first four phases of the \textbf{ABCD} construction. Write $G_n \sim \abcdFour$ to mean $G_n$ is the hypergraph generated by the first four phases.

Before tackling the upper-bounds in Theorem~\ref{thm:loops and multiedges}, we first prove that the number of loops and multi-edges in $G_n \sim \abcdFour$ is asymptotically bounded from below by the number of communities. In fact, we show that the number of loops in community graphs alone is asymptotically bounded in this way. 

\begin{lemma}\label{lem:loops and multiedges lowerbound}
Let $G_n \sim \abcdFour$ with $L$ communities and let $S_c$ be the number of loops in community graphs in $G_n$. Then w.h.p.\
\[
S_c = \Omega(L) \,.
\]
\end{lemma}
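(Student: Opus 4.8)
The plan is to show that each community graph contributes at least one self-loop with probability bounded away from zero, and that these events are sufficiently independent (or at least have controllable correlations) that a concentration argument gives $S_c = \Omega(L)$ w.h.p. Since w.e.p.\ $L = \Theta(n^{1-\tau(2-\beta)})$ by Theorem~\ref{thm:previous_paper}, it suffices to exhibit $\Omega(L)$ communities, each of which independently produces a loop with probability $\Omega(1)$.

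First I would isolate a large sub-collection of communities of the \emph{smallest} size $s$: by a computation analogous to (in fact simpler than) the one in Lemma~\ref{lem:open communities}, a constant fraction of all $L$ communities have size exactly $s$ (this follows from the power-law community-size distribution $\tpl{\beta}{s}{n^\tau}$ having a constant fraction of its mass at the minimum value $s$), so w.e.p.\ there are $\Theta(L)$ such communities. Next, for a community $C$ of size $s$, I would use Theorem~\ref{thm:main} (or more directly Lemma~\ref{lem:deg dist lower}, which is stochastic domination from below and hence is what we want for a lower bound on loop counts) to say that its degree sequence $\mathbf{c}_z$ w.h.p.\ stochastically dominates $(X_i^-, i \in [s])$ where $X_i^- \sim \tpl{\gamma}{\delta}{\Delta_s}$ with $\Delta_s = \min\{(s-1)/(1-\xi\phi), n^\zeta\}$, a constant (since $s$ is constant). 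In particular, with probability $\Omega(1)$ at least one node in $C$ has degree at least $\delta \ge 1$; more usefully, I would argue that with probability $\Omega(1)$ the community half-edge count $\sum_{v \in C} Y_v$ (where $Y_v = \round{(1-\xi)d_v}$) is at least, say, $4$ while $s$ is fixed, and then invoke the exact formula $\E{S \mid \text{degrees}} = \frac{\sum q_i(q_i-1)}{2(\sum q_i - 1)}$ from Theorem~\ref{thm:S and M bound} together with a second-moment or Paley--Zygmund argument to conclude that $C$'s community graph contains a self-loop with probability bounded below by an absolute constant. Actually a cleaner route: condition on a configuration of $Y_v$'s within $C$ summing to a small even constant $\ge 2$; the configuration model on a constant number of half-edges puts a constant probability on forming a loop at the first pairing, so $\p{C \text{ has a loop} } = \Omega(1)$.

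For the concentration step, note that conditionally on the degree sequence $\mathbf{d}_n$ and the community partition $\mathbf{C}_n$ and the split into $Y_v/Z_v$, the community graphs $G_{n,j}$ are generated \emph{independently} across $j$ (each is its own configuration model on its own node set). Hence, conditionally, $S_c$ is a sum of independent indicators (one per community, indicating ``has a loop'', which underestimates $S_c$ but is all we need), each with success probability $\Omega(1)$ for the $\Theta(L)$ small communities. A Chernoff bound then gives $S_c = \Omega(L)$ with probability $1 - \exp(-\Omega(L))$ conditionally; since w.e.p.\ all the conditioning events (size profile, $L = \Theta(n^{1-\tau(2-\beta)})$, the w.h.p.\ domination from Theorem~\ref{thm:main} applied to our chosen communities) hold, we remove the conditioning and get $S_c = \Omega(L)$ w.h.p.

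The main obstacle I anticipate is the interaction between the w.h.p.\ (not w.e.p.) nature of Theorem~\ref{thm:main} and the need for a union bound over $\Theta(L)$ communities: one cannot apply the stochastic-domination statement to each small community separately and union bound, since each costs a $1-o(1)$ factor. The fix is to apply the coupling of Theorem~\ref{thm:main} once to control the \emph{joint} law of all small-community degree sequences, or, more simply, to observe that the lower-bound coupling in Section~\ref{sec:coupling} already couples $G_n$ with $G_n^-$ \emph{simultaneously} for all communities of a given size $z$, so the domination $\mathbf{c}_z \ge \mathbf{c}_z^-$ holds jointly; the $o(1)$ loss there comes only from the event that the coupling succeeds, which is a single global event. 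A secondary technical point is ensuring the $Y_v$ split doesn't kill all the half-edges in a size-$s$ community with too high probability — but since $\xi < 1$ is a constant and $\E{Y_v} = (1-\xi)d_v$, a constant-degree node retains a community half-edge with constant probability, so this is a routine estimate using the $\round{\cdot}$ rounding definition.
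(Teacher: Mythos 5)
Your overall strategy is the same as the paper's: find $\Theta(L)$ communities of a fixed constant size, show each one's community graph contains a loop with probability $\Omega(1)$ using the stochastic lower bound on community degrees, and then use the conditional independence of the community configuration models to concentrate. Your treatment of the concentration step and of the ``single global coupling event'' issue is fine (indeed more explicit than the paper's one-line version). The genuine gap is in the per-community loop-probability step, and it is tied to your choice of communities of the \emph{minimum} size $s$. A self-loop in a configuration model requires a single node carrying at least \emph{two} half-edges, i.e.\ some $v$ with $Y_v \geq 2$; your ``cleaner route'' (total community half-edge count in $C$ equal to a small even constant $\geq 2$ gives a loop with constant probability) is false when those half-edges sit on distinct nodes one apiece, and likewise the exact formula $\E{S \mid \mathbf{q}} = \sum_i q_i(q_i-1)/\bigl(2(\sum_i q_i - 1)\bigr)$ vanishes whenever all $q_i \leq 1$, so Paley--Zygmund has nothing to work with. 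Your stated fix for the $Y_v$-split (``a constant-degree node retains a community half-edge with constant probability'') only gives $Y_v \geq 1$, which is not enough.

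Concretely, the step can fail for valid parameters: take $\delta = 1$, $s = 2$, any $\xi \in (0,1)$. Every node $v$ in a size-$2$ community has $d_v \leq 1/(1-\xi\phi)$, hence $(1-\xi)d_v \leq (1-\xi)/(1-\xi\phi) \leq 1$ and so $Y_v = \round{(1-\xi)d_v} \leq 1$; no size-$s$ community graph can ever contain a loop, and your $\Theta(L)$ chosen communities contribute nothing. More generally you need $\lfloor (1-\xi)\Delta_z \rfloor \geq 2$ for the top of the truncated range to yield a node with community degree at least $2$, and $z = s$ need not satisfy this. The paper's proof dodges exactly this by fixing a \emph{constant} $z \geq s$ large enough that $\lfloor (1-\xi)\Delta_z \rfloor \geq 2$ (communities of any fixed constant size are still a constant fraction of $L$, so nothing else is lost), and then noting that $\p{Y_i = \lfloor (1-\xi)\Delta_z \rfloor} \geq \p{X = \Delta_z} > 0$ for $X \sim \tpl{\gamma}{\delta}{\Delta_z}$, after which such a node produces a loop with probability $\Omega(1)$. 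With that one change --- choose a sufficiently large constant community size and argue that some node has $Y_v \geq 2$ with constant probability, rather than counting total half-edges --- your argument goes through and matches the paper's.
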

\begin{proof}
Fix a constant $z$ large enough so that $z \geq s$ and $\lfloor (1-\xi) \Delta_z \rfloor \geq 2$ and let $G_{n,j}$ be a community graph in $G_n$ with $|C_j| = z$ and with degree sequence $(Y_i, i \in C_j)$ (recall that $Y_i = \round{(1-\xi)d_i}$ where $\round{\cdot}$ is a random rounding function). Then, by the lower bound in Theorem~\ref{thm:main}, a uniformly random degree $Y_i$ is stochastically bounded from below by $\lfloor (1-\xi) X \rfloor$ where $X \sim \tpl{\gamma}{\delta}{\Delta_z}$. Thus, by the stochastic bound, we have
\[
\p{Y_i = \left\lfloor (1-\xi)\Delta_z \right\rfloor} 
\geq 
\p{X = \Delta_z}
> 0 \,.
\]
Thus, w.h.p.\ a linear proportion of community graphs with $z$ nodes contain at least one node $v$ with $\mathrm{deg}(v) = \left\lfloor (1-\xi)\Delta_z \right\rfloor \geq 2$. Furthermore, a node with this degree generates a loop in $G_n \sim \abcdFour$ with positive probability, and so w.h.p.\ a linear proportion of community graphs with $z$ nodes contain at least one loop. Finally, as the number of communities of size $z$ is w.h.p.\ $\Theta(L)$, the lemma follows. 
\end{proof}

We continue now with the upper-bounds. The heart of Theorem~\ref{thm:loops and multiedges} is the following lemma. 

\begin{lemma} \label{lem:community loops and multi-edges}
Fix $z > \Delta > \delta > 0$ and $\gamma \in (2,3)$. Let $\mathbf{q}_z = (q_i,i \in [z])$ be a sequence of i.i.d. random variables with $q_i \sim \tpl{\gamma}{\delta}{\Delta}$ and let $H_z$ be sampled as the configuration model with degree sequence $\mathbf{q}_z$. Let $S$ and $M$ be the number of self-loops and, respectively, multi-edges in $H_z$. Then
\begin{align*}
\E{S} &\leq (1+O(\Delta^{\gamma-3})) \, c(\gamma,\delta) \Delta^{3-\gamma} \,, \text{ and}\\
\E{M} &\leq (1+O(\Delta^{\gamma-3})) \, c(\gamma,\delta)^2 \Delta^{6-2\gamma} \,,
\end{align*}
where 
\[
c(\gamma,\delta) = \frac{(\gamma-1)\delta^{\gamma-2}}{2(3-\gamma)} \,.
\]
\end{lemma}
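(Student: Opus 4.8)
The plan is to apply Theorem~\ref{thm:S and M bound} to the configuration model $H_z$ with degree sequence $\mathbf{q}_z$, and then take expectations over the randomness in the i.i.d.\ sequence $\mathbf{q}_z$. Writing $m = \sum_{i\in[z]} q_i$, Theorem~\ref{thm:S and M bound} gives the deterministic bounds
\[
S \text{-expectation} \le \frac{1}{2}\,\frac{\sum_i q_i^2}{m-1}, \qquad
M\text{-expectation} \le \frac{1}{2}\,\frac{\sum_{i<j} q_i^2 q_j^2}{(m-3)^2},
\]
but since the right-hand sides are themselves random, the cleanest route is to condition on $\mathbf{q}_z$, apply the theorem, and then average. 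The numerators and denominators are not independent, so I would first establish concentration of $m = \sum_i q_i$ around $z\,\moment{1}{\gamma,\delta,\Delta}$: since each $q_i \le \Delta$ and $z > \Delta$, the typical value of $m$ is $\Theta(z)$, comfortably larger than $\Delta$, and a Chernoff/Hoeffding bound shows $m = (1+o(1))z\,\moment{1}{\gamma,\delta,\Delta}$ with probability $1 - \exp(-\Omega(z))$; on the tiny complementary event one uses the crude deterministic bound $S, M = O(m) = O(z\Delta)$, contributing negligibly to the expectation.

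On the good event, $m-1$ and $m-3$ are both $(1+o(1))z\,\moment{1}{\gamma,\delta,\Delta}$. Then
\[
\E{S} \le (1+o(1))\,\frac{z\,\E{q_1^2}}{2 z\,\moment{1}{\gamma,\delta,\Delta}}
= (1+o(1))\,\frac{\moment{2}{\gamma,\delta,\Delta}}{2\,\moment{1}{\gamma,\delta,\Delta}},
\]
and similarly, using $\sum_{i<j} q_i^2 q_j^2 \le \frac12\big(\sum_i q_i^2\big)^2$ together with independence,
\[
\E{\textstyle\sum_{i<j} q_i^2 q_j^2} \le \frac12 z^2 \moment{2}{\gamma,\delta,\Delta}^2 (1+o(1)),
\]
giving $\E{M} \le (1+o(1))\,\moment{2}{\gamma,\delta,\Delta}^2 / \big(4\,\moment{1}{\gamma,\delta,\Delta}^2\big)$. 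It then remains to evaluate the moments $\moment{1}{\gamma,\delta,\Delta}$ and $\moment{2}{\gamma,\delta,\Delta}$ asymptotically using the integral sandwich displayed at the start of Section~\ref{sec:loops and multiedges}. For $\gamma \in (2,3)$ one has $\moment{1}{\gamma,\delta,\Delta} = \Theta(1)$, more precisely $\moment{1}{\gamma,\delta,\Delta} = \frac{(\gamma-1)\delta^{\gamma-1}}{(\gamma-2)\delta^{\gamma-1}}(1+o(1)) \cdot (\text{const})$ — the key point is it converges to a constant $c_1(\gamma,\delta)$ — while $\moment{2}{\gamma,\delta,\Delta}$ diverges like $\frac{\gamma-1}{3-\gamma}\delta^{\gamma-1}\Delta^{3-\gamma}(1+O(\Delta^{\gamma-3}))$ since the $x^{2-\gamma}$ integral is dominated by its upper endpoint. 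Dividing, the $\delta^{\gamma-1}$ factors and the normalizing integrals combine to produce exactly $c(\gamma,\delta)\Delta^{3-\gamma}$ for the $S$ bound and its square for the $M$ bound, with the claimed $(1+O(\Delta^{\gamma-3}))$ relative error coming from the endpoint-correction terms in the integral estimates.

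I should be careful about one subtlety: the second inequality in~\eqref{eq:M bound} already replaces $q_i(q_i-1)$ by $q_i^2$ and uses $(m-3)^2$ in the denominator, so I should just invoke that form directly rather than re-deriving it; and I must track that the error in replacing $m-1$ or $m-3$ by $m$ is $O(1/z) = O(1/\Delta)$, which is dominated by (or absorbed into) the $O(\Delta^{\gamma-3})$ term only if $\gamma - 3 > -1$, i.e.\ $\gamma > 2$, which holds. The main obstacle is bookkeeping the error terms: I need the concentration event to fail with probability small enough ($e^{-\Omega(z)}$, hence super-polynomially small in $\Delta$ since $z > \Delta$) that the $O(z\Delta)$ contribution from the bad event is swamped, and I need the integral estimates for $\moment{2}{\gamma,\delta,\Delta}$ to be sharp enough to yield the exact constant $c(\gamma,\delta)$ and the exact error exponent; neither is deep, but both require care to avoid losing the precise form of the bound. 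Everything else is a direct substitution into Theorem~\ref{thm:S and M bound} followed by moment computations.
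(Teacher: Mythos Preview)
Your overall plan---condition on $\mathbf{q}_z$, apply Theorem~\ref{thm:S and M bound}, then average---is exactly right, but the concentration step is both unnecessary and, as stated, does not go through. The lemma must hold uniformly for all $z > \Delta$, including $z = \Delta + 1$. With $q_i \in [\delta,\Delta]$ and $\E{q_i} = \mu_1 = \Theta(1)$, Hoeffding gives an exponent of order $-z\mu_1^2/\Delta^2$ and multiplicative Chernoff/Bernstein gives at best $-\Omega(z/\Delta)$; neither is $-\Omega(z)$. When $z/\Delta = O(1)$ the failure probability is only a constant, and your crude bound $S,M = O(z\Delta) = O(\Delta^2)$ on the bad event then swamps the target $\Delta^{3-\gamma}$. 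So the bad-event contribution is not negligible and the argument has a genuine gap.

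The paper sidesteps this entirely with the deterministic observation $\sum_i q_i \ge \delta z$. Since the denominator only needs a \emph{lower} bound, one gets directly
\[
\E{S} \;\le\; \E{\frac{\sum_i q_i^2}{2(\sum_i q_i - 1)}} \;\le\; \frac{1}{2(\delta z - 1)}\sum_i \E{q_i^2} \;=\; \frac{z\,\moment{2}{\gamma,\delta,\Delta}}{2(\delta z - 1)},
\]
and similarly for $M$ with $(\delta z - 3)^2$ in the denominator; no concentration, no bad event. This also explains the constant: with $\delta$ (not $\mu_1$) in the denominator one obtains $\moment{2}{\gamma,\delta,\Delta}/(2\delta) \sim \frac{(\gamma-1)\delta^{\gamma-2}}{2(3-\gamma)}\Delta^{3-\gamma} = c(\gamma,\delta)\Delta^{3-\gamma}$. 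Had your concentration worked, you would have produced $\moment{2}{\gamma,\delta,\Delta}/(2\mu_1)$, whose leading constant is $\frac{(\gamma-2)\delta^{\gamma-2}}{2(3-\gamma)}$, not $c(\gamma,\delta)$---still a valid (indeed sharper) upper bound, but not the one you asserted. Your handling of the $O(1/z)$ versus $O(\Delta^{\gamma-3})$ error and the integral estimate for $\moment{2}{\gamma,\delta,\Delta}$ are both fine.
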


\begin{proof}
Let us recall equations~(\ref{eq:S bound}) and~(\ref{eq:M bound}) for ease of reading:
\[
\Cexp{S}{\mathbf{q}_z} = \frac{\sum_{i \in [z]} q_i(q_i-1)}{2 \left(\sum_{i=1}^z q_i - 1\right)} 
\leq 
\frac{1}{2} \frac{\sum_{i \in [z]} q_i^2}{\sum_{i=1}^z q_i - 1} 
\,,
\]
and
\[
\Cexp{M}{\mathbf{q}_z} \leq \frac{\sum_{1 \leq i < j \leq z} q_i(q_i-1)q_j(q_j-1)}{2\left( \sum_{i=1}^z q_i - 1 \right)\left( \sum_{i=1}^z q_i - 3 \right)} 
\leq 
\frac{1}{2} \frac{\sum_{1 \leq i < j \leq z} q_i^2 q_j^2}{\left( \sum_{i=1}^z q_i - 3 \right)^2} 
\,.
\]

Next, for independent $X,Y \sim \tpl{\gamma}{\delta}{\Delta}$ we have 
\begin{align*}
\E{X^2} &= \sum_{k=\delta}^\Delta \frac{k^2 \int_k^{k+1} x^{-\gamma} \, dx}{\int_\delta^{\Delta +1} x^{-\gamma} \, dx} \\
&\leq \frac{\int_\delta^{\Delta +1} x^{2-\gamma} \, dx}{\int_{\delta}^{\Delta+1} x^{-\gamma} \, dx} \\
&= \left(\frac{\gamma-1}{3-\gamma}\right) \left(\frac{(\Delta+1)^{3-\gamma} - \delta^{3-\gamma}}{\delta^{1-\gamma} - (\Delta +1)^{1-\gamma}} \right)  \\
&= (1+O(\Delta^{\gamma-3} + \Delta^{1-\gamma})) \, \left(\frac{\gamma-1}{3-\gamma}\right) \, \delta^{\gamma - 1} \Delta^{3-\gamma}\\
&= (1+O(\Delta^{\gamma-3})) \, \left(\frac{\gamma-1}{3-\gamma}\right) \, \delta^{\gamma - 1} \Delta^{3-\gamma} \,,
\end{align*}
and 
\begin{align*}
\E{X^2Y^2} &= \E{X^2}\E{Y^2}\\
&= (1+O(\Delta^{\gamma-3})) \, \left(\frac{\gamma-1}{3-\gamma}\right)^2 \, \delta^{2\gamma - 2} \Delta^{6-2\gamma} \,.
\end{align*}
Now, since $\mathbf{q}_z$ contains i.i.d.\ random variables, and since $\sum_{i=1}^z q_i \geq \delta z$, it follows from (\ref{eq:S bound}) that 
\begin{align*}
\E{S} &= \E{\Cexp{S}{\mathbf{q}_z}}\\
&\leq
\frac{1}{2} \, \E{\frac{\sum_{i \in [z]} q_i^2}{\sum_{i \in [z]} q_i - 1}}\\
&\le
\frac{1}{2(\delta z - 1)} \sum_{i \in [z]}\E{q_i^2}\\
&\leq
(1+O(\Delta^{\gamma-3})) \left(\frac{1}{2 \delta z} \right) \left( z \left(\frac{\gamma-1}{3-\gamma}\right) \delta^{\gamma-1} \Delta^{3-\gamma} \right) \\
&=
(1+O(\Delta^{\gamma-3})) \left(\frac{(\gamma-1)\delta^{\gamma-2}}{2(3-\gamma)}\right) \Delta^{3-\gamma}\,,
\end{align*}
and from (\ref{eq:M bound}) that 
\begin{align*}
\E{M}
&=
\E{\Cexp{M}{\mathbf{q}_z}}\\
&\leq
\frac{1}{2} \, \E{ \frac{\sum_{1 \leq i < j \leq z} q_i^2 q_j^2}{\left( \sum_{i=1}^z q_i - 3 \right)^2} }\\
&\leq
\frac{1}{2(\delta z - 3)^2} \sum_{1 \leq i < j \leq z} \E{q_i^2 q_j^2}\\
&\leq 
(1+O(\Delta^{\gamma-3})) \left(\frac{1}{2\delta^2 z^2}\right) {z \choose 2}  \, \left(\frac{\gamma-1}{3-\gamma}\right)^2 \, \delta^{2\gamma - 2} \Delta^{6-2\gamma}\\
&\leq 
(1+O(\Delta^{\gamma-3})) \left(\frac{(\gamma-1)\delta^{\gamma-2}}{2(3-\gamma)}\right)^2 \Delta^{6-2\gamma} \,.
\end{align*}
Note that, in the first computation, we use the fact that 
\[
\frac{1}{2(\delta z - 1)} = (1+O(z^{-1})) \frac{1}{2 \delta z} = (1+O(\Delta_z^{\gamma-3})) \frac{1}{2 \delta z} \,,
\]
and in the second computation, we use the fact that 
\[
\frac{1}{2(\delta z - 3)^2} = (1+O(z^{-1})) \frac{1}{2 \delta^2 z^2} = (1+O(\Delta_z^{\gamma-3})) \frac{1}{2 \delta^2 z^2} \,.
\]
This finishes the proof of the lemma.
\end{proof}

We are now ready to prove Theorem~\ref{thm:loops and multiedges}.

\begin{proof}[Proof of Theorem~\ref{thm:loops and multiedges}]
Let $G_n \sim \abcdFour$ with degree sequence $\textbf{d}_n = (d_i, i \in [n])$, and let $S_c, M_c, S_b, M_b$ and $M_{bc}$ be as in the statement of the theorem. Starting with $S_b$ and $M_b$, note that the degree sequence in $G_{n,0}$ is $(Z_i, i \in [n])$ where $Z_i = \round{\xi d_i}$. Thus, $Z_i \leq d_i$, meaning by Lemma~\ref{lem:community loops and multi-edges} that
\begin{align*}
\E{S_b} &\leq \left(1 + O(n^{\zeta(\gamma-3)}) \right) c(\gamma,\delta) \big(n^\zeta \big)^{3-\gamma} = O(n^{\zeta(3-\gamma)}) \,, \text{ and}\\
\E{M_b} &\leq \left(1 + O(n^{\zeta(\gamma-3)}) \right) c(\gamma,\delta)^2 \big(n^\zeta\big)^{6-2\gamma} = O(n^{\zeta(6-2\gamma)}) \,,
\end{align*}
proving claims 3.\ and 4. 

Continuing with $S_c$ and $M_c$, for community graph $G_{n,j}$ with $|C_j| = z$ let $S_{c,j}$ and $M_{c,j}$ be the number of loops and multi-edges in $G_{n,j}$. Note that, for any node $i \in C_j$, the degree of $i$ in $G_{n,j}$ is $Y_i \leq d_i$. Thus, by Theorem~\ref{thm:main}, $Y_i$ is stochastically bounded from above by the random variable $Y \sim \tpl{\gamma}{\delta+1}{\Delta_z}$. Then, again by Lemma~\ref{lem:community loops and multi-edges}, we have that 
\begin{align*}
\Cexp{S_{c,j}}{|C_j|=z} &\leq \left(1 + O\left( \Delta_z^{\gamma-3} \right) \right) c(\gamma,\delta+1) \Delta_z^{3-\gamma} \,, \text{ and}\\
\Cexp{M_{c,j}}{|C_j|=z} &\leq \left(1 + O(\Delta_z^{\gamma-3}) \right) c(\gamma,\delta+1)^2 \Delta_z^{6-2\gamma} \,.
\end{align*}
For the remainder of the proof, we write $c = c(\gamma,\delta+1)$ to simplify notation. Recall from phase 2 of the construction process of $G_n$ that $|C_j| \sim \tpl{\beta}{s}{n^\tau}$. Therefore, 
\begin{align*}
\E{S_{c,j}} 
&=
\sum_{z = s}^{n^\tau} \Cexp{S_{c,j}}{|C_j| = z} \p{|C_j| = z}\\
&\leq
\sum_{z = s}^{n^\tau} \left(1 + O\left( \Delta_z^{\gamma-3} \right) \right) c \, \Delta_z^{3-\gamma} \frac{\int_z^{z+1} y^{-\beta} \, dy}{\int_s^{n^\tau + 1} y^{-\beta} \, dy} \,.
\end{align*}
We split the sum at the community size $z^*$, where $z^*$ is minimal with the property that
\[
\left\lfloor \frac{z^*-1}{1-\xi\phi} \right\rfloor \geq n^\zeta \,.
\]
Note that, for $z \leq z^*$, $\Delta_z = \Theta(z)$, and for $z \geq z^*$, $\Delta_z = n^\zeta$. Let $c'$ be a constant satisfying $\Delta_z^{3-\gamma} \leq c' z^{3-\gamma}$ for all $s \leq z \leq z^*$. For the first part of the sum, we have
\begin{align*}
&\sum_{z = s}^{z^*} \left(1 + O\left( \Delta_z^{\gamma-3} \right) \right) c \, \Delta_z^{3-\gamma} \frac{\int_z^{z+1} y^{-\beta} \, dy}{\int_s^{n^\tau + 1} y^{-\beta} \, dy}\\
\geq \ &\sum_{z = s}^{z^*} \left(1 + O\left( z^{\gamma-3} \right) \right) c c' z^{3-\gamma} \frac{\int_z^{z+1} y^{-\beta} \, dy}{\int_s^{n^\tau + 1} y^{-\beta} \, dy}\\
\leq \ & 
\left(1 + O\left( s^{\gamma-3} \right) \right) cc' \sum_{z = s}^{z^*} \frac{\int_z^{z+1} y^{3-\gamma-\beta} \, dy}{\int_s^{n^\tau + 1} y^{-\beta} \, dy}\\
= \ & 
\left(1 + O\left( s^{\gamma-3} \right) \right) cc' \frac{\int_s^{z^*+1} y^{3-\gamma-\beta} \, dy}{\int_s^{n^\tau + 1} y^{-\beta} \, dy}\\
= \ & 
\left(1 + O\left( s^{\gamma-3} \right) \right) cc' \left(\beta-1\right) s^{1-\beta} \left( \frac{\left(z^*+1\right)^{4-\gamma-\beta} - s^{4-\gamma-\beta}}{4-\gamma-\beta} \right) \\
= \ &
O\left(1 + (z^*)^{4-\gamma-\beta} \right) \\
= \ &
O\left(1 + n^{\zeta(4-\gamma-\beta)} \right) \,.
\end{align*}
For the second part of the sum, we have
\begin{align*}
& \sum_{z = z^*+1}^{n^\tau} \left(1 + O\left( \Delta_z^{\gamma-3} \right) \right) c \, \Delta_z^{3-\gamma} \frac{\int_z^{z+1} y^{-\beta} \, dy}{\int_s^{n^\tau + 1} y^{-\beta} \, dy}\\
= \ & \sum_{z = z^*+1}^{n^\tau} \left(1 + O\left( n^{\zeta(\gamma-3)} \right) \right) c \, n^{\zeta(3-\gamma)} \frac{\int_z^{z+1} y^{-\beta} \, dy}{\int_s^{n^\tau + 1} y^{-\beta} \, dy}\\
= \ &
\left(1 + O\left( n^{\zeta(\gamma-3)} \right) \right) c n^{\zeta(3-\gamma)} \sum_{z = z^*+1}^{n^\tau}  \frac{\int_z^{z+1} y^{-\beta} \, dy}{\int_s^{n^\tau + 1} y^{-\beta} \, dy}\\
= \ &
\left(1 + O\left( n^{\zeta(\gamma-3)} \right) \right) c n^{\zeta(3-\gamma)}  \frac{\int_{z^*+1}^{n^\tau+1} y^{-\beta} \, dy}{\int_s^{n^\tau + 1} y^{-\beta} \, dy}\\
= \ &
\left(1 + O\left( n^{\zeta(\gamma-3)} \right) \right) c n^{\zeta(3-\gamma)}  \frac{(z^*+1)^{1-\beta} - (n^\tau+1)^{1-\beta}}{s^{1-\beta} - (n^\tau+1)^{1-\beta}}\\
= \ & 
O\left( n^{\zeta(3-\gamma)} (z^*)^{1-\beta} \right)\\
= \ & 
O\left( n^{\zeta(3-\gamma)} n^{\zeta(1-\beta)} \right)\\
= \ & 
O\left( n^{\zeta(4-\gamma-\beta)} \right) \,,
\end{align*}
and thus, $\E{S_{c,j}} = O(1+n^{\zeta(4-\gamma-\beta)})$. An analogous calculation shows that $\E{M_{c,j}} = O(1+n^{\zeta(7-2\gamma-\beta)})$. Claims 1.\ and 2.\ now follow from linearity of expectation, along with the fact that w.e.p.\ the number of communities in $G_n$ is $\Theta(n^{1-\tau(2-\beta)})$. 

Claim 5.\ states that $\E{M_{bc}} = o(M_c)$. To see this, let $C_j$ be a community in $G_n$ and let $u,v \in C_j$. Now let $M_c(u,v)$ be the number of $\{u,v\}$ multi-edge pairs in $G_{n,j}$, and let $M_{bc}$ be the number of $\{u,v\}$ multi-edge pairs with one edge in $G_{n,j}$ and the other in $G_{n,0}$. Then 
\[
\Cexp{M_c(u,v)}{\textbf{d}_n} = \Theta\left( \frac{d_u^2 d_v^2}{\left(\sum_{i \in C_j} d_i \right)^2} \right) \,,
\]
whereas
\[
\Cexp{M_{bc}(u,v)}{\textbf{d}_n} = \Theta\left( \frac{d_u^2 d_v^2}{\left(\sum_{i \in C_j} d_i \right) \left(\sum_{i \in [n]} d_i \right)} \right) \,.
\]
Since $\sum_{i \in C_j} d_i = o\left( \sum_{i \in [n]} d_i \right)$ for all communities $C_j$, we get that $\E{M_{bc}(u,v)} = o\left( \E{M_{c}(u,v)} \right)$, and Claim 5.\ follows from linearity of expectation. 

Finally, we know that w.e.p.\ $L = \Theta(n^{1-\tau(2-\beta)})$ and that w.h.p.\ $\E{S_c} = \Omega(L)$. Now suppose that $\gamma+\beta > 4$ and that $2\zeta(3-\gamma) + \tau(2-\beta) \leq 1$, and note that these two inequalities imply that $2\gamma+\beta > 3+\gamma+\beta > 7$ and that $3-\gamma + \tau(2-\beta) \leq 1$. Therefore, under this assumption, w.h.p.\ we have
\begin{align*}
\E{S_c}
&=
O \left( (n^{1-\tau(2-\beta)})(1+n^{\zeta(4-\gamma-\beta)}) \right)
=
O \left(n^{1-\tau(2-\beta)}\right) \,, \\
\E{M_c + M_{bc}}
&=
(1+o(1)) \E{M_c}
=
O \left( (n^{1-\tau(2-\beta)})(1+n^{\zeta(7-2\gamma-\beta)}) \right)
=
O \left(n^{1-\tau(2-\beta)}\right) \,, \\
\E{S_b}
&=
O \left( n^{\zeta(3-\gamma)} \right)
=
O \left( n^{1-\tau(2-\beta)} \right) \,, \text{ and} \\
\E{M_b}
&=
O \left( n^{2(\zeta(3-\gamma))} \right)
=
O \left( n^{1-\tau(2-\beta)} \right) \,,
\end{align*}
which proves the final claim. 
\end{proof}

\section{Conclusion}\label{sec:conclusions}

The \textbf{ABCD} model generates a random graph with ground truth community structure and power-law distributions for both the degrees and the community sizes and, unlike the \textbf{LFR} model, allows for deep theoretical study to understand its properties. We take advantage of this feature and show via Theorem~\ref{thm:main} that \textbf{ABCD} graphs exhibit an interesting self-similar property, namely, that the degree distribution of individual communities is asymptotically the same as the appropriately normalized degree distribution of the entire graph. We argue that this result is a powerful tool for studying the structure of \textbf{ABCD} graphs and justify the argument by providing two consequences of Theorem~\ref{thm:main}, namely, Corollary~\ref{cor:volumes} which gives an improved understanding of community graph volumes, and Theorem~\ref{thm:loops and multiedges} which gives a new understanding of the number of self-loops and multi-edges that are generated during phase~4 of the construction process. 
This information is useful for practitioners since (a) rewiring self-loops and multi-edges to keep the graph simple is an expensive part of the algorithm, and (b) every rewiring causes the underlying configuration models to deviate slightly from uniform simple graphs on their corresponding degree sequences.
We complement the theoretical work with three experiments which, in all cases, show that the three main results given in this paper reveal themselves even for \textbf{ABCD} graphs with a relatively small number of nodes.


\bigskip

Let us finish with some open problems. We have shown two examples of how Theorem~\ref{thm:main} can help us understand the nature of \textbf{ABCD} graphs. There are more applications of Theorem~\ref{thm:main} that we do not explore here. Essentially, any result that holds for a configuration model on an i.i.d.\ degree sequence, sampled as $\tpl{\gamma}{\delta}{\Delta}$ for some $\gamma \in (2,3)$, should hold for a community graph in $G_n \sim \abcdFour$. With additional work, it may also be true that such results hold for a community graph in $G_n \sim \abcdDist$. Possible avenues for $G_n \sim \abcdDist$ include studying its diameter, its diffusion rate, its clustering coefficient, etc.

\bigskip

Our results in Corollary~\ref{cor:volumes} and Theorem~\ref{thm:loops and multiedges} are results only in expectation, though our experiments indicate that the behaviour of at least community volumes is quite tight. Given that the truncated power-law $\tpl{\gamma}{\delta}{n^\zeta}$ has unbounded second moment, and that $\tpl{\beta}{s}{n^\tau}$ has unbounded first moment, any study involving concentration will prove to be challenging. However, considering that the collection of community degree sequences partition the degree sequence of the whole graph, it is possible that these sequences exhibit self-correcting behaviour, and this is a potential road-map to a tighter version of our results. 

\bigskip

In Theorem~\ref{thm:loops and multiedges} we only show that collisions are bounded below asymptotically by $\Omega(L)$, and our justification is very liberal. On the other hand, our experimental results suggest that the number of collisions is, in fact, $\omega(L)$ when $\gamma + \beta \leq 4$ or when $2\zeta(3-\gamma) + \tau(2-\beta) > 1$. Thus, there is potential room to improve Theorem~\ref{thm:loops and multiedges} by tightening the lower-bound. 

\bibliography{ref}

\begin{thebibliography}{10}

\bibitem{barabasi1999emergence}
Albert-L{\'a}szl{\'o} Barab{\'a}si and R{\'e}ka Albert.
\newblock Emergence of scaling in random networks.
\newblock {\em science}, 286(5439):509--512, 1999.

\bibitem{bender1978asymptotic}
Edward~A Bender and E~Rodney Canfield.
\newblock The asymptotic number of labeled graphs with given degree sequences.
\newblock {\em Journal of Combinatorial Theory, Series A}, 24(3):296--307,
  1978.

\bibitem{blagus2012self}
Neli Blagus, Lovro {\v{S}}ubelj, and Marko Bajec.
\newblock Self-similar scaling of density in complex real-world networks.
\newblock {\em Physica A: Statistical Mechanics and its Applications},
  391(8):2794--2802, 2012.

\bibitem{bollobas1980probabilistic}
B{\'e}la Bollob{\'a}s.
\newblock A probabilistic proof of an asymptotic formula for the number of
  labelled regular graphs.
\newblock {\em European Journal of Combinatorics}, 1(4):311--316, 1980.

\bibitem{broido2019scale}
Anna~D Broido and Aaron Clauset.
\newblock Scale-free networks are rare.
\newblock {\em Nature communications}, 10(1):1017, 2019.

\bibitem{clauset2004finding}
Aaron Clauset, Mark~EJ Newman, and Cristopher Moore.
\newblock Finding community structure in very large networks.
\newblock {\em Physical review E}, 70(6):066111, 2004.

\bibitem{fortunato2010community}
Santo Fortunato.
\newblock Community detection in graphs.
\newblock {\em Physics reports}, 486(3-5):75--174, 2010.

\bibitem{fortunato2007resolution}
Santo Fortunato and Marc Barthelemy.
\newblock Resolution limit in community detection.
\newblock {\em Proceedings of the national academy of sciences}, 104(1):36--41,
  2007.

\bibitem{gallos2007review}
Lazaros~K Gallos, Chaoming Song, and Hern{\'a}n~A Makse.
\newblock A review of fractality and self-similarity in complex networks.
\newblock {\em Physica A: Statistical Mechanics and its Applications},
  386(2):686--691, 2007.

\bibitem{guimera2003self}
Roger Guimera, Leon Danon, Albert Diaz-Guilera, Francesc Giralt, and Alex
  Arenas.
\newblock Self-similar community structure in a network of human interactions.
\newblock {\em Physical review E}, 68(6):065103, 2003.

\bibitem{holme2019rare}
Petter Holme.
\newblock Rare and everywhere: Perspectives on scale-free networks.
\newblock {\em Nature communications}, 10(1):1016, 2019.

\bibitem{janson2020random}
Svante Janson.
\newblock Random graphs with given vertex degrees and switchings.
\newblock {\em Random Structures \& Algorithms}, 57(1):3--31, 2020.

\bibitem{kaminski2022modularity}
Bogumi{\l} Kami{\'n}ski, Bartosz Pankratz, Pawe{\l} Pra{\l}at, and
  Fran{\c{c}}ois Th{\'e}berge.
\newblock Modularity of the abcd random graph model with community structure.
\newblock {\em Journal of Complex Networks}, 10(6):cnac050, 2022.

\bibitem{kaminski2021artificial}
Bogumi{\l} Kami{\'n}ski, Pawe{\l} Pra{\l}at, and Fran{\c{c}}ois Th{\'e}berge.
\newblock Artificial benchmark for community detection (abcd)—fast random
  graph model with community structure.
\newblock {\em Network Science}, pages 1--26, 2021.

\bibitem{kaminski2023artificial}
Bogumi{\l} Kami{\'n}ski, Pawe{\l} Pra{\l}at, and Fran{\c{c}}ois Th{\'e}berge.
\newblock Artificial benchmark for community detection with outliers (abcd+ o).
\newblock {\em Applied Network Science}, 8(1):25, 2023.

\bibitem{kaminski2023hypergraph}
Bogumi{\l} Kami{\'n}ski, Pawe{\l} Pra{\l}at, and Fran{\c{c}}ois Th{\'e}berge.
\newblock Hypergraph artificial benchmark for community detection (h--abcd).
\newblock {\em Journal of Complex Networks}, 11(4):cnad028, 2023.

\bibitem{kaminski2022abcde}
Bogumił Kamiński, Tomasz Olczak, Bartosz Pankratz, Paweł Prałat, and
  François Théberge.
\newblock Properties and performance of the abcde random graph model with
  community structure.
\newblock {\em Big Data Research}, 30:100348, 2022.

\bibitem{kim2007fractality}
JS~Kim, K-I Goh, G~Salvi, E~Oh, B~Kahng, and D~Kim.
\newblock Fractality in complex networks: Critical and supercritical skeletons.
\newblock {\em Physical Review E}, 75(1):016110, 2007.

\bibitem{lancichinetti2009benchmarks}
Andrea Lancichinetti and Santo Fortunato.
\newblock Benchmarks for testing community detection algorithms on directed and
  weighted graphs with overlapping communities.
\newblock {\em Physical Review E}, 80(1):016118, 2009.

\bibitem{lancichinetti2008benchmark}
Andrea Lancichinetti, Santo Fortunato, and Filippo Radicchi.
\newblock Benchmark graphs for testing community detection algorithms.
\newblock {\em Physical review E}, 78(4):046110, 2008.

\bibitem{mandelbrot1982fractal}
Benoit~B Mandelbrot.
\newblock {\em The fractal geometry of nature}, volume~1.
\newblock WH freeman New York, 1982.

\bibitem{serrano2008self}
M~{\'A}ngeles Serrano, Dmitri Krioukov, and Mari{\'a}n Bogun{\'a}.
\newblock Self-similarity of complex networks and hidden metric spaces.
\newblock {\em Physical review letters}, 100(7):078701, 2008.

\bibitem{song2005self}
Chaoming Song, Shlomo Havlin, and Hernan~A Makse.
\newblock Self-similarity of complex networks.
\newblock {\em Nature}, 433(7024):392--395, 2005.

\bibitem{van2016random}
Remco Van Der~Hofstad.
\newblock {\em Random graphs and complex networks}, volume~43.
\newblock Cambridge university press, 2016.

\bibitem{wormald1984generating}
Nicholas~C Wormald.
\newblock Generating random regular graphs.
\newblock {\em Journal of algorithms}, 5(2):247--280, 1984.

\bibitem{wormald1999models}
Nicholas~C Wormald et~al.
\newblock Models of random regular graphs.
\newblock {\em London Mathematical Society Lecture Note Series}, pages
  239--298, 1999.

\bibitem{zhou2020power}
Bin Zhou, Xiangyi Meng, and H~Eugene Stanley.
\newblock Power-law distribution of degree--degree distance: A better
  representation of the scale-free property of complex networks.
\newblock {\em Proceedings of the national academy of sciences},
  117(26):14812--14818, 2020.

\end{thebibliography}

\end{document}